\newtheorem{mytheorem}{Theorem}
\newtheorem{myprop}{Proposition}
\newtheorem{mylemma}{Lemma}
\newtheorem{mydef}{Definition}
\newcommand{\vect}[1]{\mbox{\boldmath $ #1$}}
\title{
\vspace{-1cm}
Multi-resolution two-sample comparison through
the divide-merge Markov
tree}
\author{Jacopo Soriano and Li Ma  \\Department of
Statistical Science  \\ Duke University, Durham, NC 27708  \\ email:
\texttt{jacopo.soriano@duke.edu, li.ma@duke.edu} }
\begin{document}
\maketitle

\begin{abstract}
We introduce a probabilistic framework for two-sample comparison based
on a nonparametric process taking the form of a Markov model that
transitions between a ``divide'' and a ``merge'' state on a
multi-resolution partition tree of the sample space. Multi-scale
two-sample comparison is achieved through inferring the underlying
state of the process along the partition tree. The Markov design
allows the process to incorporate spatial clustering of differential
structures, which is commonly observed in two-sample problems but
ignored by existing methods. Inference is carried out under the
Bayesian paradigm through recursive propagation algorithms. We
demonstrate the work of our method through simulated data and a real
flow cytometry data set, and show that it substantially outperforms
other state-of-the-art two-sample tests in several settings.
\end{abstract}

\noindent\textsc{Keywords}: {Bayesian inference; P\'olya tree;
Nonparametrics; Multi-resolution inference; Hypothesis testing; Flow cytometry.}

\section{Introduction}

Two-group comparison is of fundamental interest in a variety of applications, 
and over the last few decades, many nonparametric two-sample tests have been
invented. Some notable examples include the
Kolmogorov-Smirnov test \cite[]{bickel_1969}, the $k$-nearest
neighbors test \citep{schilling_1986,henze_1988}, and the 
Cram\'{e}r test \cite[]{baringhaus_franz_2004}, among many others. 
More recently a number of Bayesian methods have been proposed, and 
in particular a multi-resolution approach based on the P\'olya tree (PT) process
\citep{ferguson_1973,lavine_1992} has gained popularity. The PT process
decomposes a probability distribution in a wavelet-like fashion into a
collection of local probability assignment coefficients on a multi-scale
partition sequence of the sample space. Two-sample comparison is then achieved
through comparing these local assignments and combining the statistical evidence
across the partition sequence to form a global test statistic. Methods following
this strategy include \cite{holmes_etal_2012}, \cite{chen_hanson_2012}, and
\cite{ma_wong_2011}. 

The main motivation for this work stems from an important and almost ubiquitous 
phenomenon in two-sample problems that has not been taken into account by
existing methods---the spatial clustering of differential structures. More
specifically, if there is two-sample difference at one location in the sample
space, then the neighboring parts of the sample space are more likely to contain
differential structures as well. Therefore methods for finding two-sample
differences across the sample space should ideally incorporate the dependence
among the neighboring locations rather than treating them independently. This
relates to the more general problem of multiple testing adjustment for dependent
hypotheses.

In this work we introduce a new method for two-sample comparison that
incorporates 
this dependence structure to improve power. The core of our method is a new
nonparametric process---called the {\em Divide-Merge Markov Tree}
(DMMT)---taking the form of a tree-structured Markov model
\citep{crouse_etal_1998} that transitions between ``divide'' and ``merge''
states on a multi-resolution partition sequence of the sample space,
corresponding respectively to whether the two distributions are locally equal or
not. The Markov dependence provides a convenient framework for incorporating
spatial dependence of differential structures. Under this framework, testing
two-sample differences is achieved through inferring the underlying Markov state
of the process on different parts of the space.

Moreover, in multivariate problems the DMMT allows the multi-resolution
partition 
sequence on which the Markov tree grows to be data-adaptive. 
When the sample space is vast, as is common in multi-dimensional problems,
fixed, 
symmetric partition sequences adopted by standard multi-resolution methods
quickly run into a sparsity problem---just a few levels down the partition
sequence, most sets contain only very few data points. Consequently, on most
location-scale combinations there are too few data to draw reliable inference.
This sparsity translates into high posterior uncertainty of the underlying
states for most sets in the partition sequence and loss of statistical power. 
To address this challenge, the DMMT adopts a random partitioning 
mechanism \citep{wong_ma_2010} that allows the partition sequence to be inferred
from the data. Interestingly this additional adaptiveness can be achieved
without
affecting the Markov nature of the process, which is critical for incorporating
spatial clustering and efficient posterior inference.

Furthermore, the DMMT process provides a natural means to summarizing and 
visualizing the inferred difference. 
We show that one can effectively represent posterior summary of the 
differential structure by plotting
a representative partition sequence of the sample space and highlighting 
the regions where the two distributions have high posterior probability of being
different.

In addition, we formally investigate the theoretical properties of the DMMT
model. 
We show that it satisfies Ferguson's criteria \citep{ferguson_1973} for
desirable nonparametric processes---it has large support (and so is a
nonparametric process) and is analytically tractable (posterior inference can be
carried out very efficiently due to its Markov nature). Moreover, we show that
the multi-resolution two-sample test using the DMMT is consistent. 
\cite{holmes_etal_2012} showed that the multi-resolution approach using 
the standard PT process gives consistent two-sample tests. Our results further
show that this consistency can be maintained while introducing additional
flexibility and adaptivity into the underlying model through the Markov design
and the adaptive partitioning feature.

The paper is organized as follows. 
In Section \ref{sec:dimeopt} we present our multi-resolution two-sample
comparison
 framework based on the DMMT process. We introduce the DMMT process, establish
its theoretical properties, and provide the inferential recipe for carrying out
two-sample comparison. We provide guidelines for prior specification.
In Section \ref{seq:examples} we illustrate the work of our method and evaluate
its performance. We first carry out a simulation study to compare the power of
our method as a two-sample test to other nonparametric two-sample tests. 
We also illustrate how to pin-point and visualize
where and what the difference is using the posterior process. 
Finally, we apply our method to analyzing a seven-dimensional cytometry data
set, 
for which our method successfully identifies an experimentally validated
differential hotspot involving just 0.2\% of the data points. 
 Section \ref{seq:conclusion} closes with a brief discussion.

\section{Method}\label{sec:dimeopt}

\subsection{Some basic concepts and notation on recursive partitioning}
Recursive partitioning of the sample space is a fundamental building block 
for the multi-resolution approach to two-sample comparison
\citep{holmes_etal_2012,chen_hanson_2012,ma_wong_2011}. We start by introducing
some basic concepts and notation about recursive partitioning that will be used
throughout the paper.

Let $\Omega$ denote the sample space, which can either be finite such as a 
contingency table or an Euclidean rectangle. While our proposed method is
applicable to both cases, for simplicity, our presentation will focus on the
Euclidean case. Without loss of generality, let $\Omega=[0,1)^p$. An unbounded
rectangle can be transformed into a bounded rectangle by applying, for example,
a cdf
transformation to each dimension. 

A {\em dimensionwise dyadic partition} of
$A=[a_1,b_1)\times[a_2,b_2)\times\cdots \times[a_p,b_p)\subset \Omega$ 
refers to a division of $A$ into two halves by splitting at the middle of the
support of one of the $p$ dimensions. For each $A$, we use $\{ A_l^j, A_r^j \}$
to denote
the pair of children nodes---called the \emph{left} and \emph{right}
children---of $A$ 
obtained by
cutting $A$ along the $j$th direction. That is, 
$A^j_l$ is the half with the $j$th dimension supported on $[a_j, (a_j+b_j)/2)$,
and $A^j_r$ is the half with the $j$th dimension supported on
$[(a_j+b_j)/2,b_j)$.

 We consider recursive partition sequences of $\Omega$ generated by
dimensionwise 
dyadic partitions. Each such partition sequence can be represented by a
bifurcating tree.
A node in the tree represents a subset of the sample space $\Omega$,
and is obtained from a dyadic partition of its parent node. 
At the top level of the tree, the root (or level-0) node of the tree is
 the whole space $\Omega$, while 
at the first level there are two (level-1) nodes obtained by partitioning $\Omega$ along
one of the $p$ dimensions. Each level-1 node can be partitioned again,
defining the next level of the tree and so on. 

Let $\mathcal{A}^k$ denote the collection of all level-$k$ nodes under 
all possible recursive partition sequences of $\Omega$. In other words, it is
the set of all possible subsets obtainable by sequentially partitioning the
sample space $k$ times. Also, let $\mathcal{A}^{(k)} = \cup_{i=0}^{k}
\mathcal{A}^i$, the collection of all possible nodes up to level $k$, and
$\mathcal{A}^{(\infty)} =  \cup_{i=0}^{\infty} \mathcal{A}^i$, the collection of
all possible nodes.

\subsection{The divide-merge Markov tree (DMMT)}
Next we introduce a generative model for a pair of distributions $(Q_1,Q_2)$, 
called the divide-merge Markov tree (DMMT). 
We first describe the general design principle of the process and then provide
 the mathematical details. 

The DMMT model takes a P\'olya tree like multi-resolution approach to
generating 
probability distributions. It divides the sample space into a sequence of nested
partitions, and then at each node in the partition tree, it specifies how
probability mass is split between the left and the right children. The process
uses this strategy to generate both distributions $Q_1$ and $Q_2$
simultaneously, and use a hidden Markov model \citep{crouse_etal_1998} to
determine the relationship between the two distributions on each of the node.
Under a ``divide'' state, probability mass is split differently for the two
distributions into the two children nodes, while under a ``merge'' state, the
same probability assignment is applied for both distributions. The Markov tree
also has an additional ``stop'' state, representing the case when the two
distributions are conditionally equal to some baseline distribution. 

Under this design, inference on two-sample comparison can be achieved through 
learning the divide-merge states of the process across the partition tree. To
further allow the partition sequence of the sample space to be data-adaptive
{\em a posteriori} thereby improving statistical power in multivariate problems,
the DMMT incorporates a randomized partition mechanism introduced in
\cite{wong_ma_2010}. 

Given this general design, next we formally describe the DMMT process as a 
generative procedure in an inductive manner. Suppose the generative procedure
has proceeded onto a subset $A$ of $\Omega$. To initiate the induction,
$A=\Omega$, the entire sample space. The procedure carries out three operational
steps on $A$:
\begin{enumerate}
 \item Divide-merge step---to determine its ``divide'', ``merge'', or
``stop'' state on $A$;
 \item Random partitioning step---to determine how to further partition $A$ into
children;
 \item Probability assignment step---to determine how $Q_1$ and $Q_2$ assign probability to $A$'s children.
\end{enumerate}
Next we describe each of the steps in turn. Figure \ref{fig:3steps} 
provides an illustration to help the reader understand the technical
description.
\begin{figure}
 \centering
 \makebox{\includegraphics[width=0.95\textwidth]{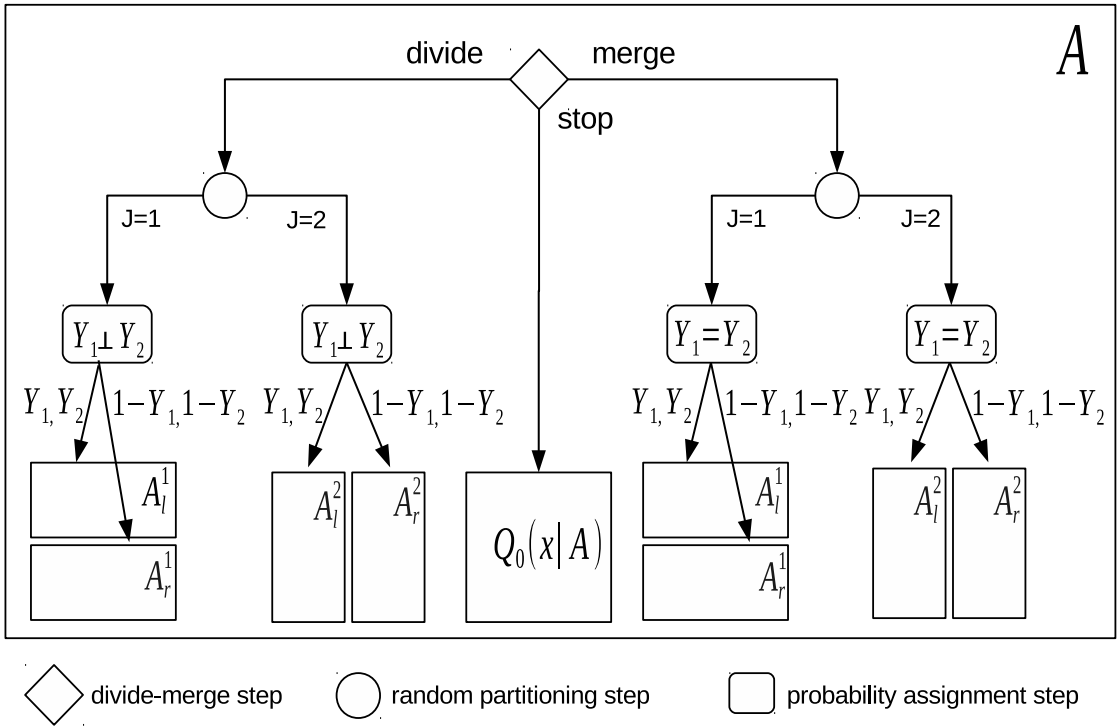}}
 \caption{The three steps of the DMMT generative procedure for
a set $A \in \mathcal{A}^{(\infty)}$.}
\label{fig:3steps}
\end{figure}

\paragraph{Divide-merge step}
The procedure can take one of three states: ``divide'', ``merge''
and ``stop'' on $A$. 
Define the set of the  states $\mathcal{G} =
\{d,m,s\}$. 
Consider $h \in \mathcal{G}$ being the
state of the parent of $A$. 
We draw a random variable to determine the state of the process on $A$:
$$
\Pr \big{[} S(A) = g |  S(\text{parent}(A))=h   \big{]} =
\rho_{h,g}(A),
$$
where $\rho_{h,g}(A)\geq0$ for any $g,h \in \mathcal{G}$, and
$\sum_{g}\rho_{h,g}(A) = 1$ for any $h \in \mathcal{G}$.  We call
$\rho_{h,g}(A)$ the transition probabilities, and organize them in a
transition probability matrix
$$
\vect{\rho}(A) = 
\left[
\begin{array}{ccc}
 \rho_{d,d}(A) & \rho_{d,m}(A) & \rho_{d,s}(A) \\
 \rho_{m,d}(A) & \rho_{m,m}(A) & \rho_{m,s}(A) \\
 \rho_{s,d}(A) & \rho_{s,m}(A) & \rho_{s,s}(A) 
\end{array}
\right].
$$
The procedure will terminate once in the ``stop'' state,
and so by construction $\rho_{s,d}(A)=\rho_{s,m}(A)= 0$ and
$\rho_{s,s}(A) = 1$. The root of the tree
$\Omega$ does not have a parent, so we draw its ``parent'' state from a
multinomial 
Bernoulli corresponding to ``initial'' state probabilities  $ (\rho_{0,d},
\rho_{0,m},\rho_{0,s} )$.

\paragraph{Random partitioning step}
If the process ``stops'' on $A$, i.e.\ $S(A)=s$, then $A$ is not further
partitioned. 
Otherwise, if $S(A) = g$ for $g \in  \{ d,m \}$, we draw a
partition direction $J(A)$ according to
$$
\Pr\big{(}J(A) = j| S(A)=g \big{)} = \lambda_{j}(A,g),
$$
where $\lambda_j(A,g)\geq 0 $ for  $j=1, \ldots, p$ and $\sum_j
\lambda_j(A,g)=1$ for 
$g \in
 \{ d,m \}$, and we make a dimensionwise dyadic partition in dimension $j$ and obtain two
subsets $A^{j}_l$
and
$A^{j}_r$. 
It may look intriguing that the partition probability $\lambda_j(A,g)$ should
be allowed to depend on the state $g$. We will see later that under $g=m$, the
partition characterizes the common structure of the two distributions, while
under $g=d$, their difference.

\paragraph{Probability assignment step} The probability assignment depends on the
state of the process.
If $S(A) =d $, we generate two independent Beta variables $Y_1(A)$ and $Y_2(A)$
such that
$$
Y_t(A)  \sim \text{Beta}\big{(}
\vect{\alpha}^j_t(A,d) \big{)},
$$
where $ \vect{\alpha}^j_t(A,d) = \big{(} \alpha_{t,l}^j(A,d),
\alpha_{t,r}^j(A,d) \big{)}$ are the pseudo-counts parameters, 
which can be different for $t=1$ and 2.
Otherwise, if $S(A)=m$, we generate 
$$
Y_1(A) = Y_2(A) \sim \text{Beta}\big{(}
\vect{\alpha}^j(A,m) \big{)},
$$
where $ \vect{\alpha}^j(A,m) = \big{(} \alpha_{l}^j(A,m),
\alpha_{r}^j(A,m) \big{)}  $ are the pseudo-counts parameters.
For $t=1$ and 2, we assign a proportion equal to $Y_{t}(A)$ and
$1-Y_{t}(A)$ of the probability that $Q_t$ assigns to $A$ to the left and right children,
respectively:
$$
 Q_t(A^j_l)  = Q_t(A) Y_{t}(A), \quad Q_t(A^j_r)  = Q_t(A)(1- Y_{t}(A)).  
$$
Finally, if $S(A)=s$, then for both $Q_1$ and $Q_2$, probability is assigned 
on $A$ accordingly to a
baseline  distribution $Q_0$
$$ 
Q_1(\cdot|A) = Q_2(\cdot|A) = Q_0(\cdot|A).
$$

This completes the inductive description of the procedure on $A$. 
If the procedure does not stop on $A$, then it proceeds onto applying the same
three-step operation on each of $A$'s children. The generative model is
completely specified by the three sets of parameters
$\vect{\rho}(A)$, $\lambda_j(A,g)$ and $\vect{\alpha}^j_t(A,g)$ for $j=1,\
\ldots, p$, $t=1,2$, $g \in \{ d,m\}$ and $A\in \mathcal{A}^{(\infty)}$.
From now on we will write these three sets of parameters as
$\vect{\rho}$, $\vect{\lambda}$ and $\vect{\alpha}$ respectively.

If the prior probability of stopping is uniformly bounded away from zero, 
i.e.\ there exists $\delta>0$ such that 
$\rho_{g,s}(A)>\delta$ for all $A \in \mathcal{A}^{(\infty)}$ 
and $g\in \mathcal{G}$, then with probability~1 the generative procedure will
stop almost everywhere on the
space $\Omega$.
Under the same condition, using a similar argument as in Theorem~2 in
\cite{wong_ma_2010}, 
one can show that with probability~1 this procedure will produce a pair of
well-defined probability measures $(Q_1,Q_2)$ that are both absolutely
continuous 
with respect to the baseline measure $Q_0$. 
Thus, we can formally define this generative model as a distribution on a pair 
of probability measures.
\begin{mydef}
 The pair of random probability measures $(Q_1, Q_2)$ is said to have a
divide-merge Markov tree (DMMT) distribution with parameters
$\vect{\rho}, \vect{\alpha}, \vect{\lambda}$,and baseline measure $Q_0$. 
We write $(Q_1, Q_2) \sim
\text{DMMT}(\vect{\rho},
\vect{\alpha}, \vect{\lambda}, Q_0)$. 
\end{mydef}

The next theorem shows that the DMMT has large $L_1$ support, and thus can
 be used as a nonparametric prior. 
\begin{mytheorem}\label{thm:large_support}
Assume $(Q_1,Q_2)\sim
\text{DMMT}(\vect{\rho},
\vect{\alpha}, \vect{\lambda}, Q_0)$ where
\begin{enumerate}
 \item The initial state $\rho_{0,s}<1$.
 \item The transition
probabilities $\rho_{d,d}(A), \rho_{d,m}(A), \rho_{d,s}(A), \rho_{m,m}(A),
\rho_{m,s}(A)$, the
direction probabilities $\lambda_j(A,g)$ and
the pseudo-count parameters $\alpha_t(A_l^j,g)/(\alpha_t(A_l^j,g) +
\alpha_t(A_r^j,g) )$ are uniformly bounded away from 0 and 1, for all $A \in
\mathcal{A}^{(\infty)}$ and $g \in \{d,m\}$.
\end{enumerate}
Then, for any pair of distributions $F_1$ and $F_2$ absolutely continuous w.r.t.
$Q_0$ and any $\tau>0$, we have
$$
\Pr\big{(}\int |\tilde{q}_t-\tilde{f}_t|dQ_0<\tau, \text{ for }t=1,2 \big{)} >
0,
$$ 
where $\tilde{q}_t=dQ_t/dQ_0$ and $\tilde{f}_t=dF_t/dQ_0$.   
\end{mytheorem}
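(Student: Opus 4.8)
\emph{Step 1.} The plan is to adapt the classical large-support argument for P\'olya trees: approximate the target densities by functions that are piecewise constant relative to $Q_0$ on a deep-enough \emph{fixed} partition, and then exhibit an explicit event of positive probability on which the DMMT realizes $(Q_1,Q_2)$ within $L_1$-distance $\tau$ of that approximation, simultaneously for $t=1,2$. Given $F_1,F_2\ll Q_0$ and $\tau>0$, I would first replace $\tilde f_t$ by the strictly positive density $g_t=(1-\eta)\tilde f_t+\eta$; since $\int|g_t-\tilde f_t|\,dQ_0=\eta\int|1-\tilde f_t|\,dQ_0\le 2\eta$, for $\eta$ small this costs at most $\tau/4$ in $L_1$, so we may assume the target density is bounded below by a constant $\eta>0$. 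Next, fix the canonical recursive dyadic partition tree $T^{*}$ of $\Omega$, and for a level $K$ let $G_t^{K}$ be the distribution whose $Q_0$-density is constant on each level-$K$ cell $C$, equal there to $G_t(C)/Q_0(C)$, where $G_t$ has density $g_t$. The partitions of $T^{*}$ generate the Borel $\sigma$-field, so by the martingale convergence theorem $\int|dG_t^{K}/dQ_0-g_t|\,dQ_0\to 0$, and I pick $K$ making this $<\tau/4$ for $t=1,2$. Because $g_t\ge\eta$ and $Q_0$ charges every node, $G_t(C)>0$ for every node $C$ of $T^{*}$, so each target split ratio $y_t^{*}(A)=G_t(A_l)/G_t(A)$ at a node $A$ up to level $K-1$ (with $A_l$ the left child of $A$ in $T^{*}$) lies strictly inside $(0,1)$; since there are finitely many such nodes, all these ratios lie in a common interval $[c,1-c]$ with $c>0$.

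\emph{Step 2.} I would fix $\epsilon\in(0,c)$, to be pinned down later, and let $E$ be the event that: (i) the DMMT is in the divide state at every node up to level $K-1$; (ii) the partition direction drawn at each such node agrees with $T^{*}$; (iii) $|Y_t(A)-y_t^{*}(A)|<\epsilon$ for every node $A$ up to level $K-1$ and $t=1,2$; and (iv) the DMMT is in the stop state at every level-$K$ cell. By Assumption 1 the chain may start in the divide state, by Assumption 2 the transition probabilities $\rho_{d,d}(A)$ and $\rho_{d,s}(A)$ and the direction probabilities $\lambda_j(A,d)$ are bounded below, and under the divide state the $Y_t(A)$ are independent Beta variables with strictly positive parameters, so their densities are bounded below on $[y_t^{*}(A)-\epsilon,y_t^{*}(A)+\epsilon]\subset(0,1)$. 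Multiplying these conditional probabilities over the finitely many nodes involved gives $\Pr(E)>0$.

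\emph{Step 3.} On $E$ the realized partition down to level $K$ is exactly that of $T^{*}$, and since the process stops on each level-$K$ cell $C$, $dQ_t/dQ_0$ equals the constant $Q_t(C)/Q_0(C)$ on $C$; moreover $Q_t(C)=\prod_i z_{t,i}$ and $G_t(C)=\prod_i z_{t,i}^{*}$, the products running over the ancestors of $C$, where $z_{t,i}\in\{Y_t(A_i),1-Y_t(A_i)\}$ and $z_{t,i}^{*}\in\{y_t^{*}(A_i),1-y_t^{*}(A_i)\}$ correspondingly. Since $|z_{t,i}-z_{t,i}^{*}|<\epsilon$ and all factors lie in $[0,1]$, the elementary bound $|\prod_i a_i-\prod_i b_i|\le\sum_i|a_i-b_i|$ gives $|Q_t(C)-G_t(C)|\le K\epsilon$, hence $\int|dQ_t/dQ_0-dG_t^{K}/dQ_0|\,dQ_0=\sum_C|Q_t(C)-G_t(C)|\le 2^{K}K\epsilon$. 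Taking $\epsilon<\min\{c,\ \tau/(2^{K+2}K)\}$ makes this $<\tau/4$, so the triangle inequality over the three approximation steps yields $\int|\tilde q_t-\tilde f_t|\,dQ_0<\tau$ for $t=1,2$ on all of $E$; since $\Pr(E)>0$, the conclusion follows.

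I expect the crux to be Step 2(iii)--(iv): the product structure of the probability assignments by itself does \emph{not} make $dQ_t/dQ_0$ piecewise constant below level $K$, because the process keeps subdividing, so one must force the stop state there --- legitimate exactly because $\rho_{d,s}$ is assumed bounded away from $0$ --- and then check that the independence and non-degeneracy of the Beta draws under the divide state deliver the required small-ball probabilities uniformly over the finitely many nodes. The remaining points --- the two-stage approximation of the target and the tacit requirement that $Q_0$ charge every node --- are routine.
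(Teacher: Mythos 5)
Your proof is correct and follows the same overall skeleton as the paper's (which is itself only a sketch deferring to Theorem~3 of Ma and Wong, 2011): approximate the targets, fix a finite partition, exhibit a positive-probability event on which the realized $(Q_1,Q_2)$ land within $\tau$ of the approximants, and do the $L_1$ bookkeeping. Where you differ is in the two approximation devices. The paper replaces $\tilde f_t$ by uniformly continuous densities and takes cells of small diameter, whereas you regularize to densities bounded below and invoke martingale convergence along the dyadic filtration; both work, and yours has the advantage of not needing a metric/continuity structure on $\Omega$. More substantively, at the leaves the paper's sketch asks the process to \emph{merge} on each cell $A_i$, which by itself does not make $dQ_t/dQ_0$ piecewise constant (the process keeps subdividing under merge, and the two conditional densities merely coincide); you instead force the \emph{stop} state, which is exactly what renders $\tilde q_t$ constant on each leaf and makes the $L_1$ bound go through cleanly. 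Your version is the more self-contained and, on this point, the more careful argument.

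Two small caveats, both of which you flag and both of which are really defects of the theorem's stated hypotheses rather than of your argument. First, Condition~1 only gives $\rho_{0,d}+\rho_{0,m}>0$, and Condition~2 does not bound $\rho_{m,d}$ away from $0$, so strictly speaking the hypotheses do not guarantee the chain can ever reach the divide state (the paper's own proof cites ``Conditions (2) and (3)'' of an apparently longer condition list, suggesting something was lost in the published statement); your reading that the chain ``may start in the divide state'' is the intended one. Second, if $Q_0(A_l^j)=0$ for some node then your target split ratio degenerates to $0$ or $1$; this is harmless because the Beta draws have full support on $(0,1)$ (so $\Pr(Y_t(A)<\epsilon)>0$) and such cells contribute nothing to the $L_1$ integral, but it deserves the one-line remark you give it.
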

\begin{proof}
See Supplementary Material.
\end{proof}

One can center a DMMT distribution at the baseline measure $Q_0$. More specifically, if
$\text{DMMT}(\vect{\rho},
\vect{\alpha}, \vect{\lambda}, Q_0)$ satisfies:
\begin{enumerate}
 \item There exists a $\delta>0$ such that $\rho_{g,s}(A)>\delta$ for all $A \in
\mathcal{A}^{(\infty)}$ and $g\in \mathcal{G}$;
 \item For all $A \in \mathcal{A}^{(\infty)} $ such that $Q_0(A)>0$, $j=1,
\ldots p$ and $g\in \{d,m\}$:
\begin{equation}\label{eq:centering_condition}
\dfrac{\alpha_{t,l}^j(A,g)}{ \alpha_{t,l}^j(A,g) + \alpha_{t,r}^j(A,g)} =
\dfrac{Q_0(A^j_l)}{Q_0(A)},
\end{equation}
\end{enumerate}
then
$ E ( Q_t(B) ) = Q_0(B)$  for any
$ B \in \mathcal{B}(\Omega) $ 
and $E(q_t(x)) = q_0(x)$, where $t=1,2$, $q_t = dQ_t/d\mu$ and $q_0 = dQ_0/d\mu$
with $\mu$ being a 
dominating measure such as the Lebesgue measure. We shall refer to condition
\eqref{eq:centering_condition} as the centering condition.

In some situations, we have a general idea of the common shape of the
two distributions, and are able to elicit it with a simple distribution from a
given parametric family. 
In this case, one can center the DMMT prior at a given $Q_0$ that reflects 
this prior knowledge.
This helps to achieve parsimony, allowing the process to focus on modeling the
differential
structure between the two distributions, and thus yielding increased power 
in two-sample comparison.

\subsection{The posterior of a DMMT prior}\label{sec:bayes_inference}

In this section we find the corresponding posterior of a DMMT prior. 
Later we will use the posterior to carry out inferential tasks such as
two-sample comparison in a Bayesian paradigm. Our main result
(Theorem~\ref{thm:conjugacy}) shows that the DMMT process is posterior
conjugate, in the sense that if $(Q_1,Q_2)$ has a DMMT prior, then after
observing i.i.d.\ data samples from $Q_1$ and $Q_2$, the posterior is still a
DMMT distribution. Moreover, the posterior parameters can be computed
recursively. This intriguing result follows from the Markov nature of the DMMT
process, and the results in this section correspond to a
``forward-summation-backward-sampling'' algorithm for computing posterior Markov
models \citep{liu_2008}. More specifically, Lemma~\ref{lemma:recursive_phi}
corresponds to the ``forward-summation'' step for the DMMT process while
Theorem~\ref{thm:conjugacy} the ``backward-sampling'' step.
Lemma~\ref{th:terminal_nodes} shows that the recursion can be carried out
analytically even when the partition sequence is infinite. Readers less
interested in the technical details may directly jump to those results. 

Assume that
$(Q_1, Q_2)\sim
\text{DMMT}(\vect{\rho},
\vect{\alpha}, \vect{\lambda}, Q_0)$, and we observe two groups of 
i.i.d.\ samples
$\vect{x}_1=(x_{1,1}, \ldots,
x_{1,n_1})$ and
$\vect{x}_2=(x_{2,1}, \ldots, x_{2,n_2})$ on $\Omega$  from
$Q_{1}$ and
$Q_{2}$,
respectively.  
In the following, we shall use $\vect{x}_1 \cup \vect{x}_2$ to represent the
pooled sample
$(x_{1,1}, \ldots,x_{1,n_1},x_{2,1}, \ldots,x_{2,n_2})$.
Let $q_t(x|A)= q_t(x)/Q_t(A)$ for $x\in A$  be the conditional density on
$A$ of $Q_t$  for  $t=0,1,2$. 
Then,  the conditional likelihood on $A$ is 
\begin{equation}\label{eq:likelihood}
  q(\vect{x}_1, \vect{x}_2|A) = \prod_{t=1,2} q_t(\vect{x}_t|A),
\end{equation}
where
$q_t(\vect{x}_t|A) = \prod_{x_{t,j} \in A} q_t(x_{t,j}|A)$ for $t=1,2$. 
We 
introduce a mapping
$\Phi: \mathcal{A}^{(\infty)} \times \mathcal{G} \times \Omega^{n_1} \times
\Omega^{n_2} \mapsto \mathbb{R}$,  which will be useful in expressing the
posterior of a DMMT,
\begin{equation}\label{eq:marginal_likelihood}
\Phi(A,g,\vect{x}_1,\vect{x}_2) := \int q( \vect{x}_1,\vect{x}_2|A)
\pi(dQ_1, dQ_2|E_1(A), E_{2,g}(A) ),
\end{equation}
where 
$\pi(\cdot, \cdot)$ denotes the prior of $(Q_1,Q_2)$,
$E_1(A) = \{ A \text{ arises during the random partitioning}\}$ and
$E_{2,g}(A) = \{ S(\text{parent}(A))=g \} $ and $g \in \mathcal{G}$.
This quantity represents the conditional likelihood on $A$ integrated with
respect to the prior on $(Q_1,Q_2)$, given that the parent of $A$ is in 
state $g$ and $A$ arises during the random partitioning. Notice that the
DMMT process restricted on $A$ is still a
DMMT on $A$ due to its self-similarity, thus
\eqref{eq:marginal_likelihood} represents the marginal likelihood for a DMMT
with root being $A$ and initial state $\rho_{0,g}=1$.
Lemma \ref{lemma:recursive_phi} provides a recursive representation of 
\eqref{eq:marginal_likelihood}. 
\begin{mylemma}\label{lemma:recursive_phi}
For every $A \in \mathcal{A}^{(\infty)}$ and $g\in \mathcal{G}$,
$\Phi(A,g,\vect{x}_1,\vect{x}_2)$ has the following recursive representation
\begin{equation*}
\Phi(A,g,\vect{x}_1,\vect{x}_2) = \sum_{h \in \mathcal{G}} \rho_{g,h}(A)
Z(A,h,\vect{x}_1,\vect{x}_2),
\end{equation*}
where
\begin{equation*}
 \begin{split}
  Z(A,g,\vect{x}_1,\vect{x}_2) & =
\left\{ \begin{array}{ll}
			    \sum_{j=1}^{p} Z_j(A,g,\vect{x}_1,\vect{x}_2) &
\text{if }
g\in  \{d,m\} \\
		\prod\limits_{x \in \vect{x}_1 \cup \vect{x}_2 } q_0(x|A) &
\text{if }
g = s,
                           \end{array}
                   \right. \\
Z_j(A,g,\vect{x}_1,\vect{x}_2) & =  \left\{ \begin{array}{ll}
\lambda_j(A,m) \dfrac{D( \vect{\alpha}^j(A,m) +
\vect{n}_1^j(A) + \vect{n}_2^j(A)
)}{D( \vect{\alpha}^j(A,m) )}    \prod\limits_{i \in \{l,r\}}
\Phi(A^j_i,m,
\vect{x}_1,\vect{x}_2) 
& \text{ if } g=m \\
\lambda_j(A,d) \prod\limits_{t=1,2}\dfrac{D(
\vect{\alpha}^j_t(A,d) + \vect{n}^j_t(A) )}{D( \vect{\alpha}^j_t(A,d)  )} 
\prod\limits_{i \in \{l,r\}}
\Phi(A^j_i,d, \vect{x}_1,\vect{x}_2) 
& \text{ if } g=d,\\
			    \end{array}
                   \right.
 \end{split}
\end{equation*}
for $j=1, \ldots, p $,  $t=1,2,$ $\vect{n}^j_t(A) = \big{(} n_t(A_l^j),
n_t(A_r^j)\big{)}$, $n_t(A) = | \{x_{t,i} : x_{t,i} \in A,\; i=1,2, \ldots,
n_t\} |$ and 
$D(w_1,w_2) = \Gamma(w_1)\Gamma(w_2)/ \Gamma(w_1 + w_2)$.
\end{mylemma}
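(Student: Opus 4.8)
The plan is to unfold the definition~\eqref{eq:marginal_likelihood} of $\Phi(A,g,\vect{x}_1,\vect{x}_2)$ by exactly one step of the DMMT generative procedure on $A$, and to recognize the remaining integral over the sub-processes on the children of $A$ as $\Phi$ evaluated at those children. First I would condition on the state $S(A)=h$: by the divide--merge step the conditional probability of $\{S(A)=h\}$ given $E_{2,g}(A)=\{S(\text{parent}(A))=g\}$ (and, being determined by ancestral partition directions, given $E_1(A)$ as well) is $\rho_{g,h}(A)$, so the law of total expectation gives $\Phi(A,g,\vect{x}_1,\vect{x}_2)=\sum_{h\in\mathcal{G}}\rho_{g,h}(A)\,\Phi_h(A)$, where $\Phi_h(A)$ denotes the conditional marginal likelihood of $(\vect{x}_1,\vect{x}_2)$ on $A$ given $S(A)=h$ and $E_1(A)$. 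It then suffices to prove $\Phi_h(A)=Z(A,h,\vect{x}_1,\vect{x}_2)$ for each $h\in\mathcal{G}$.

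The case $h=s$ is immediate: if the process stops on $A$ then $Q_1(\cdot|A)=Q_2(\cdot|A)=Q_0(\cdot|A)$ deterministically, so by~\eqref{eq:likelihood} the conditional likelihood equals $\prod_{x\in\vect{x}_1\cup\vect{x}_2}q_0(x|A)$, which is exactly $Z(A,s,\vect{x}_1,\vect{x}_2)$. For $h\in\{d,m\}$ I would condition further on the partition direction $J(A)=j$, which has probability $\lambda_j(A,h)$, so $\Phi_h(A)=\sum_{j=1}^p\lambda_j(A,h)\,\Phi_{h,j}(A)$ with $\Phi_{h,j}(A)$ the marginal likelihood given $S(A)=h$ and $J(A)=j$. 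To evaluate $\Phi_{h,j}(A)$, factor the conditional likelihood across the cut in direction $j$: for each $t$, $q_t(\vect{x}_t|A)=\prod_{i\in\{l,r\}}\big(Q_t(A^j_i)/Q_t(A)\big)^{n_t(A^j_i)}\,q_t(\vect{x}_t\cap A^j_i|A^j_i)$, so that $q(\vect{x}_1,\vect{x}_2|A)$ splits into the product of (i) powers of the allocation fractions $Q_t(A^j_i)/Q_t(A)$ and (ii) the child conditional likelihoods $q(\vect{x}_1\cap A^j_i,\vect{x}_2\cap A^j_i|A^j_i)$, $i\in\{l,r\}$. Conditionally on $S(A)=h$ and $J(A)=j$ the allocation fractions are functions only of the Beta variables from the probability-assignment step --- namely $Q_t(A^j_l)/Q_t(A)=Y_t(A)$ under $h=d$, and $Q_1(A^j_l)/Q_1(A)=Q_2(A^j_l)/Q_2(A)=Y(A)$ under $h=m$ --- whereas each child conditional likelihood is governed by the sub-process on $A^j_i$, which by the \emph{self-similarity} of the DMMT is a DMMT rooted at $A^j_i$ with parent-state $h$; moreover the Beta variables, $J(A)$, and the two sub-processes on $A^j_l,A^j_r$ are mutually independent given $S(A)=h$. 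Passing the expectation through the product (Tonelli applies, the integrand being nonnegative), the integral of the child conditional likelihood over the sub-process on $A^j_i$ is precisely $\Phi(A^j_i,h,\vect{x}_1,\vect{x}_2)$, while the expectation of the allocation-fraction powers is a Beta integral: under $h=m$ it is $E\big[Y(A)^{n_1(A^j_l)+n_2(A^j_l)}(1-Y(A))^{n_1(A^j_r)+n_2(A^j_r)}\big]=D(\vect{\alpha}^j(A,m)+\vect{n}^j_1(A)+\vect{n}^j_2(A))/D(\vect{\alpha}^j(A,m))$, and under $h=d$, by independence of $Y_1(A)$ and $Y_2(A)$, it is $\prod_{t=1,2}D(\vect{\alpha}^j_t(A,d)+\vect{n}^j_t(A))/D(\vect{\alpha}^j_t(A,d))$. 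Collecting the factors gives $\Phi_{h,j}(A)=Z_j(A,h,\vect{x}_1,\vect{x}_2)$, hence $\Phi_h(A)=\sum_j Z_j(A,h,\vect{x}_1,\vect{x}_2)=Z(A,h,\vect{x}_1,\vect{x}_2)$, and substituting back into $\Phi(A,g,\vect{x}_1,\vect{x}_2)=\sum_h\rho_{g,h}(A)\,\Phi_h(A)$ yields the lemma.

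The one point requiring care is the bookkeeping in the factorization step --- being explicit that, conditionally on $S(A)=h$ and $J(A)=j$, the Beta variable(s), the partition direction, and the two sub-DMMTs on $A^j_l$ and $A^j_r$ are mutually independent, and that the restriction of the process to a child is again a DMMT on that child with initial state equal to the parent's state $h$, which is exactly what makes $\Phi$ at the children reappear on the right-hand side. Everything else reduces to the standard Beta--Binomial conjugacy integral and rearranging finitely many products, so no convergence question arises at this stage (the infinite-partition situation is treated separately in Lemma~\ref{th:terminal_nodes}). As a consistency check, on a set $A$ containing no data points every $D$-ratio equals $1$ and $\sum_j\lambda_j(A,h)=\sum_h\rho_{g,h}(A)=1$, so both sides collapse to $1$, as they should.
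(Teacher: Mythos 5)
Your proof is correct and takes essentially the same route as the paper's: both condition on the state $S(A)$ and direction $J(A)$, factor the conditional likelihood across the cut into Beta allocation powers times child conditional likelihoods, evaluate the Beta moments via conjugacy to produce the $D$-ratios, and use the self-similarity of the DMMT to identify the child integrals as $\Phi(A^j_i,h,\vect{x}_1,\vect{x}_2)$.
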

\begin{proof}
 See Supplementary Material.
\end{proof}

This representation of $\Phi$ is recursive in its first argument in the sense
that one can compute $\Phi(A,\cdot,\cdot,\cdot)$ based on 
$\Phi(A^j_i,\cdot,\cdot,\cdot)$. 
This recursive representation becomes
operational if 
\eqref{eq:marginal_likelihood} can be eventually expressed in closed form.
 Lemma \ref{th:terminal_nodes} provides analytic expressions for some specific
regions.
\begin{mylemma}\label{th:terminal_nodes}
For two types of  
 regions   $\Phi(A,g,\vect{x}_1,\vect{x}_2)$ is known analytically:
 \begin{enumerate}
 \item Empty regions, i.e. $ A : (\vect{x}_1 \cup\vect{x}_2) \cap A = \emptyset
$, then
$\Phi(A,g,\vect{x}_1,\vect{x}_2)
= 1$;
 \item Regions with a single observation, i.e. $ A: |(\vect{x}_1 \cup\vect{x}_2)
\cap A|=1$, then
$\Phi(A,g,\vect{x}_1,\vect{x}_2) = \prod_{x \in \vect{x}_1 \cup \vect{x}_2}
q_0(x|A)$ under the centering condition.  
\end{enumerate}
\end{mylemma}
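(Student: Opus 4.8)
The plan is to verify each of the two cases directly from the definition of $\Phi$ in \eqref{eq:marginal_likelihood}, using the recursive representation in Lemma~\ref{lemma:recursive_phi} to reduce to the base behaviour on the relevant subtree. For the first case (empty regions), I would argue by ``downward induction on the subtree rooted at $A$.'' Since $(\vect{x}_1\cup\vect{x}_2)\cap A=\emptyset$, no data points ever enter $A$ or any of its descendants, so all the $\vect{n}^j_t$ vectors appearing in $Z_j$ are the zero vector; consequently every Dirichlet-ratio factor $D(\vect{\alpha}+\vect{n})/D(\vect{\alpha})$ collapses to $1$, and the product over $x\in\vect{x}_1\cup\vect{x}_2$ in the $g=s$ branch is an empty product, also equal to $1$. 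Thus $Z(A,g,\cdot)$ is a convex combination (over the $\lambda_j$'s, which sum to $1$) of terms of the form $\prod_{i\in\{l,r\}}\Phi(A^j_i,g',\cdot)$, and $\Phi(A,g,\cdot)=\sum_h\rho_{g,h}(A)Z(A,h,\cdot)$ is in turn a convex combination of such products. If I hypothesize that $\Phi\equiv 1$ on all strict descendants of $A$ (which are themselves empty), then each product is $1$, each $Z$ is $1$, and hence $\Phi(A,g,\cdot)=\sum_h\rho_{g,h}(A)=1$. To make this rigorous without an infinite descent, I would instead observe that \eqref{eq:marginal_likelihood} with no data is literally $\int 1\cdot\pi(dQ_1,dQ_2\mid\cdots)=1$, i.e.\ the conditional likelihood $q(\emptyset,\emptyset\mid A)$ is the empty product $1$ and the prior is a probability measure; the recursion is then just a consistency check.

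For the second case (a single observation $x^\ast\in A$), I would again use \eqref{eq:marginal_likelihood} directly. The conditional likelihood on $A$ is $q(\vect{x}_1,\vect{x}_2\mid A)=q_t(x^\ast\mid A)$ for the appropriate $t\in\{1,2\}$, a single conditional-density factor. The key point is that under the centering condition \eqref{eq:centering_condition}, the marginal (prior-integrated) conditional density of a single point under a DMMT rooted at $A$ equals the baseline conditional density $q_0(x^\ast\mid A)$: indeed, centering gives $E(Q_t(B))=Q_0(B)$ and $E(q_t(x))=q_0(x)$, and the conditional version $E\big(q_t(x\mid A)\big)=q_0(x\mid A)$ follows because with only one data point in $A$ the likelihood is linear in the (conditional) density, so integrating the likelihood against the prior is the same as evaluating at the prior mean. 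Equivalently, one can walk down the branch of the partition tree containing $x^\ast$: at each node the relevant $\vect{n}^j_t$ has a single count of $1$ in the child containing $x^\ast$ and $0$ in the other, the Dirichlet ratio $D(\vect{\alpha}+\vect{n})/D(\vect{\alpha})$ telescopes (using $\Gamma(w+1)=w\Gamma(w)$) to exactly the baseline mass ratio $Q_0(\text{child}\ni x^\ast)/Q_0(\text{node})$ by \eqref{eq:centering_condition}, and the Case-1 result kills the sibling subtree's $\Phi$ factor (it is empty) as well as all off-branch contributions. Multiplying these ratios down the branch telescopes to $q_0(x^\ast\mid A)$. Since the product $\prod_{x\in\vect{x}_1\cup\vect{x}_2}q_0(x\mid A)$ in the statement is just this single factor $q_0(x^\ast\mid A)$, this gives the claim.

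The main obstacle is handling the infinite depth of the partition tree cleanly in Case~2: the telescoping-down-the-branch argument is an infinite product and one must argue it converges to $q_0(x^\ast\mid A)$ rather than, say, collapsing to $0$. I expect the cleanest route is to bypass the explicit telescoping and instead invoke the ``$E(q_t(x))=q_0(x)$'' centering identity together with the linearity of the single-point likelihood in the density — reducing the whole computation to a statement about the prior mean, which is already established. The recursion of Lemma~\ref{lemma:recursive_phi} then serves mainly as a sanity check that the closed form is compatible with the forward-summation step, and Case~1 is the genuinely necessary ingredient that feeds into it (empty sibling subtrees contribute the neutral factor $1$). A minor bookkeeping point to be careful about is the $g=s$ branch when $|A\cap(\vect{x}_1\cup\vect{x}_2)|=1$: there the formula reads $\prod_{x}q_0(x\mid A)=q_0(x^\ast\mid A)$ directly, so the three states $d,m,s$ all yield the same value and the mixture over $\vect{\rho}(A)$ again returns $q_0(x^\ast\mid A)$, consistent with the claimed expression.
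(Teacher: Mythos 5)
Your proof is correct and follows essentially the same route as the paper's: Case 1 is the marginal likelihood of an empty dataset integrating the constant $1$ against a probability measure, and Case 2 reduces the single-observation marginal likelihood to the prior predictive density, which equals $q_0(\cdot|A)$ by self-similarity of the DMMT restricted to $A$ together with the centering condition. Your decision to bypass the infinite telescoping product via the linearity-in-the-density argument is exactly how the paper avoids that issue as well.
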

\begin{proof}
 See Supplementary Material.
\end{proof}

For every finite sample size $n_1 + n_2$, there is a finite partitioning level
$k$ such that all the 
nodes of  level $k$ in the partitioning tree  belong to
one of these two types of ``terminal'' nodes. Thus,
\eqref{eq:marginal_likelihood} can
be computed recursively
from these nodes of
the tree up to the root. 
Finally, Theorem \ref{thm:conjugacy} establishes the conjugacy of DMMT and
provides expression for the posterior parameters. 
\begin{mytheorem}\label{thm:conjugacy}
Suppose we observe two groups of i.i.d.\ samples $\vect{x_1} = (x_{1,1}, \ldots,
x_{1,n_1})$ and
$\vect{x_2} = (x_{2,1}, \ldots, x_{2,n_2})$ from two distributions $Q_1$ and
$Q_2$. If $(Q_1,Q_2)$  have a
DMMT$(\vect{\rho},\vect{\lambda},\vect{\alpha}, Q_0)$
prior, then, the posterior of
$(Q_{1},Q_{2})$ is still a
DMMT with the same baseline $Q_0$ and the
following
parameters:
\begin{enumerate}
 \item Transition probabilities: 
$$
\rho_{g,h}(A|\vect{x}_1,\vect{x}_2) = \rho_{g,h}(A) \dfrac{
Z(A,h,\vect{x}_1,\vect{x}_2)}{\Phi(A,g,\vect{x}_1,\vect{x}_2)}
\quad \text{for all } A \in \mathcal{A}^{(\infty)}, g,h\in\mathcal{G}. $$
 \item Direction probabilities:
$$
\lambda_j(A,g|\vect{x}_1,\vect{x}_2) = \dfrac{ Z_j(A,g,\vect{x}_1,\vect{x}_2) }{
Z(A,g,\vect{x}_1,\vect{x}_2) } \quad \text{for all } A \in
\mathcal{A}^{(\infty)}, g\in \{ d,m\} \text{ and } j=1,\ldots,p.
$$
 \item Pseudo-counts:
$$
\alpha_t(A,g|\vect{x}_1,\vect{x}_2) = \left\{ \begin{array}{ll}
            \alpha_t(A,d)  + n_t(A) & \text{if } g= d \\
            \alpha_t(A,m)  + n_1(A) + n_2(A) & \text{if } g= m,
                           \end{array}
\right. 
$$
for all $A \in \mathcal{A}^{(\infty)}$ and $t=1,2$. 
\end{enumerate}
\end{mytheorem}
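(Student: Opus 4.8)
The plan is to establish conjugacy by a direct application of Bayes' rule on the tree, exploiting the recursive structure already encoded in $\Phi$. First I would fix a subtree rooted at an arbitrary node $A$ and condition on the event $E_{2,g}(A)$ that the parent of $A$ is in state $g$ together with $E_1(A)$ that $A$ arose during the partitioning; by the self-similarity of the DMMT (noted after \eqref{eq:marginal_likelihood}), the process restricted to $A$ is again a DMMT with root $A$ and initial ``parent'' state $g$, and its marginal likelihood is exactly $\Phi(A,g,\vect{x}_1,\vect{x}_2)$. The posterior on this subtree is then the prior times the conditional likelihood $q(\vect{x}_1,\vect{x}_2|A)$ divided by $\Phi(A,g,\vect{x}_1,\vect{x}_2)$. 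The goal is to show this posterior is itself a DMMT with root $A$ and the stated updated parameters.

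Next I would peel off one level of the recursion. Writing the prior on the $A$-subtree as a sum over the state $h=S(A)$ with weight $\rho_{g,h}(A)$, and under $h\in\{d,m\}$ a further sum over the split direction $J(A)=j$ with weight $\lambda_j(A,\cdot)$ and then the Beta probability-assignment variables, the conditional likelihood factorizes across the two children $A^j_l,A^j_r$. Integrating out the Beta variables against the Beta pseudo-count priors produces exactly the Beta--Binomial normalizing ratios $D(\vect\alpha+\vect n)/D(\vect\alpha)$ appearing in $Z_j$, and the leftover is the product $\prod_{i\in\{l,r\}}\Phi(A^j_i,h,\vect{x}_1,\vect{x}_2)$ — this is precisely Lemma~\ref{lemma:recursive_phi} rearranged. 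Dividing by $\Phi(A,g,\cdot)=\sum_h\rho_{g,h}(A)Z(A,h,\cdot)$, the weight on state $h$ becomes $\rho_{g,h}(A)Z(A,h,\cdot)/\Phi(A,g,\cdot)$, which is the claimed posterior transition probability; within state $h$, the weight on direction $j$ becomes $Z_j(A,h,\cdot)/Z(A,h,\cdot)$, the claimed posterior direction probability; and the Beta posterior for the probability-assignment variables has parameters $\vect\alpha^j_t(A,d)+\vect n^j_t(A)$ or $\vect\alpha^j(A,m)+\vect n^j_1(A)+\vect n^j_2(A)$, matching part~3. Crucially, the residual factor left after this peeling is a product of conditionally independent posterior subtrees rooted at $A^j_l$ and $A^j_r$, each of which — by the induction hypothesis — is again a DMMT with the analogously updated parameters. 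I would run this as an induction from the terminal nodes of Lemma~\ref{th:terminal_nodes} (empty or singleton sets, where $\Phi$ is known in closed form and the posterior trivially coincides with the prior restricted appropriately) upward to $\Omega$, invoking the finite-depth-of-nontrivial-structure remark to guarantee the induction bottoms out for any finite sample.

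The main obstacle I anticipate is bookkeeping the conditional independence cleanly: one must argue that, given $S(A)=h$ and $J(A)=j$, the data in $A^j_l$ and $A^j_r$ together with the two subtree structures are conditionally independent a posteriori, so that the posterior genuinely factors as a product of two independent DMMTs on the children rather than merely having the right marginals. This hinges on the Markov property of $S(\cdot)$ along the tree (the child states depend on the past only through $S(A)$) and on the a priori independence of the Beta variables and the disjointness of the data subsets across children; making this rigorous is the ``forward-summation-backward-sampling'' logic referenced in the text, and it is where care is needed. A secondary technical point is confirming that the baseline $Q_0$ is unchanged and that the ``stop'' state is handled correctly — under $S(A)=s$ the subtree contributes $\prod_{x}q_0(x|A)$ with no free parameters, so there is nothing to update, and $\rho_{s,s}=1$, $\rho_{s,d}=\rho_{s,m}=0$ are preserved by the update formula since $Z(A,s,\cdot)=\Phi(A,s,\cdot)$ and $Z(A,d,\cdot),Z(A,m,\cdot)$ are multiplied by $\rho_{s,d}=\rho_{s,m}=0$. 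Once these structural points are in place, the parameter identifications in parts~1--3 are a direct read-off from Lemma~\ref{lemma:recursive_phi} and the Beta--Binomial conjugacy, requiring no further computation.
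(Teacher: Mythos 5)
Your proposal is correct and takes essentially the same route as the paper's proof: both identify the terms of the recursion in Lemma~\ref{lemma:recursive_phi} as (conditional likelihood)\,$\times$\,(prior weight) and read off the posterior transition and direction probabilities by Bayes' rule at each node, with the pseudo-count update following from Beta--Binomial conjugacy. You are somewhat more explicit than the paper about the inductive factorization into conditionally independent posterior subtrees on the children (which is what makes the posterior a genuine DMMT rather than merely having the right marginals), but this is a more careful execution of the same argument rather than a different one.
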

\begin{proof}
 See Supplementary Material.
\end{proof}

\subsection{ Two-sample testing}\label{subsec:2sampleTesting}

Two probability measures $(Q_1,Q_2)$ with a DMMT distribution are identical if
and only
if the DMMT process,
in any part of the partition tree, is never in the divide state. 
The posterior probability of this event, $H_0$, is given by
\begin{equation}\label{eq:same_dist}
 \Pr(H_0|\vect{x}_1,\vect{x}_2) =  \int 1\left(Q_1(\cdot)=Q_2(\cdot)\right)
\pi(dQ_1, dQ_2|\vect{x}_1,\vect{x}_2),
\end{equation}
where $\pi(\cdot,\cdot|\vect{x}_1,\vect{x}_2)$ denotes the posterior DMMT.
This probability can be estimated by drawing samples
of the two distributions from the posterior, and computing the proportion of
draws that are never in the divide state, i.e.,
$$
\begin{array}{cc}
\dfrac{1}{B} \sum_{b=1}^B 1\left(Q_1^{(b)}(\cdot)=Q_2^{(b)}(\cdot)\right), & \text{where }
\left(Q_1^{(b)},Q_2^{(b)}\right) \sim (Q_1, Q_2)|\vect{x}_1,\vect{x}_2. 
\end{array}
$$ 
Alternatively, again due to the Markov nature of the DMMT, one can 
evaluate $\Pr(H_0| \vect{x}_1,\vect{x}_2)$ analytically through another
``forward-summation'' type recursion, eliminating the Monte Carlo error involved
in posterior sampling.
To see this, we define another mapping $\Psi : \mathcal{A}^{(\infty)} \times 
\mathcal{G} \times \Omega^{n_1} \times \Omega^{n_2} \mapsto [0,1]$.
$$
\Psi(A,g,\vect{x}_1,\vect{x}_2)= \int 1\left(Q_1(\cdot|A)=Q_2(\cdot|A)\right)
\pi(dQ_1, dQ_2|\vect{x}_1,\vect{x}_2, E_1(A),E_{2,g}(A)).
$$
This function represents the marginal posterior probability that the two
distributions are identical conditional on $A$, given that
$A$ arises as a node in the partition tree and
the process is in state $g$ on the parent of $A$.

\begin{mylemma}
 The mapping $\Psi(A,g,\vect{x}_1,\vect{x}_2)$ has the following recursive
representation
\begin{multline}\label{eq:psi_recursive}
\Psi(A,g,\vect{x}_1,\vect{x}_2) =
\rho_{g,s}(A|\vect{x}_1,\vect{x}_2)  \\
 +
\rho_{g,m}(A|\vect{x}_1,\vect{x}_2)\displaystyle\sum_{j=1}^{p}
\lambda_j(A,m|\vect{x}_1,\vect{x}_2) \displaystyle\prod_{i \in \{l,r\}}
\Psi(A^j_i,m,\vect{x}_1,\vect{x}_2), 
\end{multline}
for all $A \in \mathcal{A}^{(\infty)}$ and $g \in \mathcal{G}$.
\end{mylemma}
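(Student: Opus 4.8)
The plan is to condition on the state $S(A)$ of the process on $A$ and exploit the self-similar Markov structure of the posterior. By Theorem~\ref{thm:conjugacy}, conditional on the data and on $E_1(A)\cap E_{2,g}(A)$, the posterior of $(Q_1,Q_2)$ is again a DMMT, so $S(A)$ equals $d,m,s$ with probabilities $\rho_{g,d}(A|\vect{x}_1,\vect{x}_2)$, $\rho_{g,m}(A|\vect{x}_1,\vect{x}_2)$, $\rho_{g,s}(A|\vect{x}_1,\vect{x}_2)$. By the law of total probability,
\[
\Psi(A,g,\vect{x}_1,\vect{x}_2)=\sum_{h\in\mathcal{G}}\rho_{g,h}(A|\vect{x}_1,\vect{x}_2)\,P_h,
\]
where $P_h=\Pr\big(Q_1(\cdot|A)=Q_2(\cdot|A)\mid S(A)=h,\,E_1(A),\,E_{2,g}(A),\,\vect{x}_1,\vect{x}_2\big)$, and it remains to evaluate $P_s$, $P_d$ and $P_m$.

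For $h=s$, the probability-assignment step forces $Q_1(\cdot|A)=Q_2(\cdot|A)=Q_0(\cdot|A)$, hence $P_s=1$. For $h=d$, the assignment step draws $Y_1(A)$ and $Y_2(A)$ independently from Beta distributions whose parameters, by Theorem~\ref{thm:conjugacy}, remain strictly positive a posteriori; the joint posterior law of $(Y_1(A),Y_2(A))$ is therefore absolutely continuous, so $Y_1(A)\neq Y_2(A)$ almost surely, and since $Q_1(\cdot|A)=Q_2(\cdot|A)$ would require $Q_1(A^j_l|A)=Y_1(A)$ to equal $Q_2(A^j_l|A)=Y_2(A)$, we get $P_d=0$. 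For $h=m$, further condition on the partition direction $J(A)=j$, which occurs with posterior probability $\lambda_j(A,m|\vect{x}_1,\vect{x}_2)$. Under $S(A)=m$ the two distributions assign the same proportions $Y_1(A)=Y_2(A)$ and $1-Y_1(A)$ (almost surely in $(0,1)$) to $A^j_l$ and $A^j_r$, so $Q_1(\cdot|A)=Q_2(\cdot|A)$ holds if and only if $Q_1(\cdot|A^j_l)=Q_2(\cdot|A^j_l)$ and $Q_1(\cdot|A^j_r)=Q_2(\cdot|A^j_r)$. As in the proof of Lemma~\ref{lemma:recursive_phi}, given $S(A)=m$ and $J(A)=j$ the restrictions of the process to the two children are independent DMMTs and the pooled sample splits into disjoint subsamples on $A^j_l$ and $A^j_r$; hence these two events are conditionally independent, and each has probability $\Psi(A^j_i,m,\vect{x}_1,\vect{x}_2)$, since conditioning on $S(A)=m$ and $J(A)=j$ is precisely conditioning on $E_1(A^j_i)\cap E_{2,m}(A^j_i)$. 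Summing over $j$ gives $P_m=\sum_{j=1}^p\lambda_j(A,m|\vect{x}_1,\vect{x}_2)\prod_{i\in\{l,r\}}\Psi(A^j_i,m,\vect{x}_1,\vect{x}_2)$, and combining the three contributions yields \eqref{eq:psi_recursive}.

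The main obstacle is the $h=m$ step: rigorously reducing the measure-equality event $\{Q_1(\cdot|A)=Q_2(\cdot|A)\}$ to the conjunction of the analogous events on the two children, and establishing the conditional independence of the two subtrees under the posterior. Both are consequences of the branching, self-similar construction of the DMMT — the same structure already exploited in Lemma~\ref{lemma:recursive_phi} — so the work is careful bookkeeping of the conditioning events rather than new machinery. A secondary point worth a sentence is verifying, in the $h=d$ case, that the updated Beta parameters are strictly positive so that ``$Y_1(A)\neq Y_2(A)$ almost surely'' is legitimate; this follows from the posterior pseudo-counts in Theorem~\ref{thm:conjugacy} being sums of positive prior pseudo-counts and nonnegative data counts.
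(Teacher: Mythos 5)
Your proof is correct and follows essentially the same route as the paper's: decompose the event $\{Q_1(\cdot|A)=Q_2(\cdot|A)\}$ by the posterior state $S(A)$, get probability $1$ under ``stop,'' probability $0$ under ``divide'' because the two independent absolutely continuous Beta draws are almost surely unequal, and under ``merge'' reduce to the conjunction of the analogous events on the two children (using that the conditional densities have disjoint supports and the subtrees are conditionally independent DMMTs). No substantive differences from the paper's argument.
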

\begin{proof}
 See Supplementary Material.
\end{proof}
For  $g=s$, 
$\Psi(A,g,\vect{x}_1,\vect{x}_2) = 1$, since
$\rho_{g,s}(A|\vect{x}_1,\vect{x}_2) = \rho_{g,s}(A) = 1$. For all other
regions and states, it can be evaluated with arbitrary
precision by terminating the recursive at a deep enough  finite level 
because $0 \leq \Psi(A,g,\vect{x}_1,\vect{x}_2) \leq 1$ for all $A \in
\mathcal{A}^{(\infty)}$ and $g \in \mathcal{G}$.
In particular, we can compute $ \Psi(\Omega,g,\vect{x}_1,\vect{x}_2)$ with
arbitrary precision.
Now we can formally describe how the posterior probability that the two
distributions are identical can be computed.

\begin{myprop}
Suppose we observe two groups of i.i.d.\ samples $\vect{x}_1=(x_{1,1}, \ldots,
x_{1,n_1})$ and
$\vect{x}_2=(x_{2,1}, \ldots, x_{2,n_2})$ from two distributions $Q_{1}$ and
$Q_{2}$.
Let $(Q_{1},Q_{2})$ have a DMMT($\vect{\rho},
\vect{\lambda}, \vect{\alpha}, Q_0$)
prior. Then the
posterior probability that the two distributions are
identical is given by
$$
\Pr(H_0|\vect{x}_1,\vect{x}_2) =  \sum_{g \in \mathcal{G}}
\rho_{0,g}\Psi(\Omega,g,\vect{x}_1,\vect{x}_2).
$$ 
\end{myprop}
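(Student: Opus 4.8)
The plan is to unfold the definition of $\Pr(H_0|\vect{x}_1,\vect{x}_2)$ one level at the root $\Omega$ and then recognize the resulting expression as precisely $\sum_g \rho_{0,g}\Psi(\Omega,g,\vect{x}_1,\vect{x}_2)$. The starting point is equation \eqref{eq:same_dist}: $\Pr(H_0|\vect{x}_1,\vect{x}_2)$ is the posterior probability that $Q_1(\cdot)=Q_2(\cdot)$ on all of $\Omega$. The key observation is that the event $\{Q_1(\cdot)=Q_2(\cdot)\}$ on $\Omega$ coincides with the event $\{Q_1(\cdot|\Omega)=Q_2(\cdot|\Omega)\}$, since $Q_1(\Omega)=Q_2(\Omega)=1$ always, so conditioning on $\Omega$ is vacuous. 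Hence by the very definition of $\Psi$, if the process were in a \emph{known} parent state $g$ at $\Omega$, the conditional posterior probability of $H_0$ would be exactly $\Psi(\Omega,g,\vect{x}_1,\vect{x}_2)$ (the conditioning event $E_1(\Omega)$ is trivially true since $\Omega$ is the root, and $E_{2,g}(\Omega)$ is the ``initial state $=g$'' event).

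Next I would condition on the initial/parent state of the root. Recall from the divide-merge step that the root $\Omega$ draws its ``parent'' state from the initial distribution $(\rho_{0,d},\rho_{0,m},\rho_{0,s})$. Crucially, this initial state draw does \emph{not} depend on the data $\vect{x}_1,\vect{x}_2$: the initial probabilities $\rho_{0,g}$ are fixed hyperparameters, and one can check (or invoke Theorem~\ref{thm:conjugacy}, which updates transition probabilities $\rho_{g,h}(A)$ for nodes $A$ having parents but leaves the root's initial distribution as a separate quantity; more directly, the marginal likelihood factor $\Phi(\Omega,g,\cdot,\cdot)$ already integrates out everything below, and the posterior over the initial state is $\propto \rho_{0,g}\Phi(\Omega,g,\vect{x}_1,\vect{x}_2)$). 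Either way, applying the law of total probability over the initial state $g\in\mathcal{G}$:
\begin{equation*}
\Pr(H_0|\vect{x}_1,\vect{x}_2) = \sum_{g\in\mathcal{G}} \Pr(H_0 \mid \vect{x}_1,\vect{x}_2, \text{initial state}=g)\,\Pr(\text{initial state}=g \mid \vect{x}_1,\vect{x}_2).
\end{equation*}

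Then I would identify the two factors. The first factor is $\Pr(Q_1(\cdot|\Omega)=Q_2(\cdot|\Omega)\mid \vect{x}_1,\vect{x}_2, E_1(\Omega), E_{2,g}(\Omega)) = \Psi(\Omega,g,\vect{x}_1,\vect{x}_2)$ by definition of $\Psi$. The second factor is the posterior probability of the initial state; the cleanest route is to argue it equals the prior $\rho_{0,g}$. This holds because $H_0$ is the only event being integrated and the data enter only through the conditional likelihood below $\Omega$, which is captured by $\Psi$; alternatively one writes the joint posterior as proportional to $\rho_{0,g}\,\Phi(\Omega,g,\vect{x}_1,\vect{x}_2)$ and notes that the overall normalizer is $\sum_g \rho_{0,g}\Phi(\Omega,g,\vect{x}_1,\vect{x}_2)$, and then $\Psi$ already absorbs the ratio $\Phi$-normalization in the posterior transition probabilities it is built from (see the recursion \eqref{eq:psi_recursive}, which uses the \emph{posterior} $\rho_{g,h}(A|\vect{x}_1,\vect{x}_2)$). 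Substituting gives the claimed formula.

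The main obstacle, and the step requiring the most care, is the bookkeeping around the root's initial state: establishing cleanly that $\Pr(\text{initial state}=g\mid\vect{x}_1,\vect{x}_2)=\rho_{0,g}$ — or, equivalently, making sure there is no double-counting between that factor and the normalization already baked into the posterior transition probabilities inside $\Psi$. I would handle this by being explicit that $\Psi$ as defined conditions on $E_{2,g}(\Omega)$ and is computed via the recursion \eqref{eq:psi_recursive} in terms of posterior parameters $\rho_{g,h}(A|\vect{x}_1,\vect{x}_2)$ from Theorem~\ref{thm:conjugacy}, so that $\Psi(\Omega,g,\cdot,\cdot)$ is genuinely the \emph{conditional} posterior probability of $H_0$ given initial state $g$, and the only remaining ingredient is the (prior = posterior) weight $\rho_{0,g}$ on that conditioning event. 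Everything else is a direct application of the law of total probability and the definition of $\Psi$.
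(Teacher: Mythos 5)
Your skeleton is the right one (and, as far as one can tell, the only natural one): condition on the virtual parent state of the root, recognize $\Psi(\Omega,g,\vect{x}_1,\vect{x}_2)$ as the conditional posterior probability of $H_0$ given that state, and apply the law of total probability. You also correctly identified the delicate point. But your resolution of that point does not hold up: the posterior probability of the initial state is \emph{not} the prior $\rho_{0,g}$. Writing $L_g=\Phi(\Omega,g,\vect{x}_1,\vect{x}_2)=\sum_{h}\rho_{g,h}(\Omega)Z(\Omega,h,\vect{x}_1,\vect{x}_2)$ for the marginal likelihood given initial state $g$, the law of total probability gives
\begin{equation*}
\Pr(H_0\mid\vect{x}_1,\vect{x}_2)=\sum_{g\in\mathcal{G}}\Psi(\Omega,g,\vect{x}_1,\vect{x}_2)\,\frac{\rho_{0,g}L_g}{\sum_{h}\rho_{0,h}L_h},
\end{equation*}
and the weights $\rho_{0,g}L_g/\sum_h\rho_{0,h}L_h$ reduce to $\rho_{0,g}$ only if the $L_g$ agree on the support of $\rho_{0,\cdot}$. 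They do not in general: for instance $L_s=\prod_{x}q_0(x)$ (since $\rho_{s,s}(\Omega)=1$) while $L_d$ is a three-component mixture $\rho_{d,d}Z(\Omega,d,\cdot)+\rho_{d,m}Z(\Omega,m,\cdot)+\rho_{d,s}Z(\Omega,s,\cdot)$. Your second justification --- that ``$\Psi$ already absorbs the $\Phi$-normalization'' --- conflates two different normalizations: the recursion \eqref{eq:psi_recursive} uses $\rho_{g,h}(\Omega\mid\vect{x}_1,\vect{x}_2)=\rho_{g,h}(\Omega)Z(\Omega,h,\cdot)/L_g$, which normalizes \emph{within} a fixed initial state $g$ (making $\Psi(\Omega,g,\cdot)$ a bona fide conditional probability), but says nothing about the mixture weights \emph{across} $g$, which is exactly where the update $\rho_{0,g}\mapsto\rho_{0,g}L_g/\sum_h\rho_{0,h}L_h$ lives.

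Concretely, then, your argument establishes the displayed identity only when the initial distribution is degenerate (as in the paper's recommended choice $\rho_{0,d}=1$, $\rho_{0,m}=\rho_{0,s}=0$, under which the claim is immediate) or when $\rho_{0,g}$ in the statement is read as the \emph{posterior} initial-state probability, updated in the same way Theorem~\ref{thm:conjugacy} updates the transition probabilities at every other node. To close the gap you should either carry the posterior initial weights $\rho_{0,g}L_g/\sum_h\rho_{0,h}L_h$ through explicitly, or state the restriction under which prior and posterior initial weights coincide; asserting $\Pr(\text{initial state}=g\mid\vect{x}_1,\vect{x}_2)=\rho_{0,g}$ on the grounds that ``the data enter only below $\Omega$'' is not valid, because the initial state changes the distribution of $S(\Omega)$ and hence the likelihood of the data.
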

\begin{proof}
 See Supplementary Material.
\end{proof}

The next two theorems establish the consistency for the 
two-sample test using DMMT.
\begin{mytheorem}
\emph{(Consistency under the alternative)}
\label{th:altcase}
We observe two independent groups of i.i.d.\ samples $\vect{x_1} = (x_{1,1},
\ldots, x_{1,n_1})$ and
$\vect{x_2} = (x_{2,1}, \ldots, x_{2,n_2})$ from two distributions $Q_1$ and
$Q_2$, where $n = n_1+n_2
\rightarrow \infty$, and  $n_1/n\rightarrow
\beta$ for some $\beta \in (0,1)$. 
Let
$(Q_1,Q_2)$ have a
DMMT$(\vect{\rho},\vect{\lambda},\vect{\alpha},
Q_0)$ prior where the conditions of Theorem \ref{thm:large_support} and the
centering condition are satisfied. In addition, if 
 $\rho_{d,g}(A) \in (0,1)$ for all
$g \in
\mathcal{G}$  and $A \in \mathcal{A}^{(k)}$ for some large enough
$k$,
then, 
$$
\Pr( H_0 | \vect{x}_1, \vect{x}_2) \overset{p}{\to} 0 \text{ under }
P_1^{( \infty)} \times P_2^{( \infty)},
$$
 for $P_1\ll Q_0$ and $P_2 \ll Q_0$ provided that there exists $A
\in \mathcal{A}^{(\infty)}$ such that $P_1(A_i^j|A) \neq
P_2(A_i^j|A)$ and $\beta P_1(A_i^j|A) + (1-\beta) P_2(A_i^j|A) \neq
Q_0(A_i^j|A)$ for some $j=1, \ldots, p$ and $i=l,r$.
\end{mytheorem}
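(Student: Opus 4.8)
The plan is to show that under the stated conditions the marginal likelihood of the data, evaluated at the true configuration, grows strictly faster than the marginal likelihood restricted to the ``no divide ever'' event, which forces the posterior of $H_0$ to zero. Concretely, I would fix a set $A \in \mathcal{A}^{(\infty)}$ with $P_1(A_i^j|A) \neq P_2(A_i^j|A)$ and $\beta P_1(A_i^j|A) + (1-\beta) P_2(A_i^j|A) \neq Q_0(A_i^j|A)$ for some direction $j$ and child $i$, guaranteed by hypothesis, and let $k$ be the level of $A$ so that $A$ and all its ancestors lie in $\mathcal{A}^{(k)}$. Writing $\Pr(H_0|\vect{x}_1,\vect{x}_2) = \sum_g \rho_{0,g}\Psi(\Omega,g,\vect{x}_1,\vect{x}_2)$ via the Proposition, it suffices to show $\Psi(\Omega,g,\vect{x}_1,\vect{x}_2) \overset{p}{\to} 0$ for $g \in \{d,m\}$ (the $g=s$ contribution is weighted by $\rho_{0,s}<1$, but one must also argue the prior mass on $s$ at the root does not save $H_0$ — actually $\rho_{0,s}\Psi = \rho_{0,s}$ is a fixed constant, so in fact one needs $\rho_{0,s}=0$ or must interpret $H_0$ as requiring the centering/stopping structure; I would handle this by noting that under the centering condition and $P_t \neq Q_0$ locally, the ``stop at root'' hypothesis is itself inconsistent with the data, which is captured because $\Psi$ at the root with $g=s$ being $1$ is offset by the posterior transition weights — I will need to be careful and instead work with the posterior odds directly).

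The cleaner route, which I would actually carry out, is to bound the posterior odds $\Pr(H_0|\vect{x})/\Pr(H_0^c|\vect{x})$ from above by a ratio of integrated likelihoods and show it vanishes. Using the recursive structure from Lemma~\ref{lemma:recursive_phi}, the numerator involves $\Phi$ restricted to paths through $m$ and $s$ states only, while the denominator includes the path that takes the $d$ state at the ancestor of $A$ and then along to $A$ itself, splitting in direction $j$. Along the $m$/$s$-only paths the two groups $\vect{x}_1,\vect{x}_2$ are modeled by a single Beta–Bernoulli (Pólya-tree) marginal on the pooled counts, whereas along the $d$-path they get separate Beta–Bernoulli marginals. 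By the standard posterior-consistency argument for finite-depth Pólya trees (as in Lemma~A.1-type results and the argument in \cite{holmes_etal_2012}), the log of the ratio of the pooled-marginal to the product-of-separate-marginals at the splitting step of $A$ behaves like $-n\cdot \mathrm{KL}\big(\text{mixture}\,\|\,\text{product}\big) + o(n)$, which is strictly negative precisely because $P_1(\cdot|A)\neq P_2(\cdot|A)$ on that cut. The extra condition $\beta P_1(A_i^j|A)+(1-\beta)P_2(A_i^j|A)\neq Q_0(A_i^j|A)$ ensures that even the $s$-state (which ties both groups to $Q_0$) is also beaten, ruling out the degenerate case where the pooled data happen to look like $Q_0$. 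Since $\rho_{d,g}(A)\in(0,1)$ for all ancestors in $\mathcal{A}^{(k)}$, the $d$-path has strictly positive prior probability, so the denominator is bounded below by a positive constant times $e^{cn}$ with $c>0$, while the numerator is $O(1)$ times the pooled marginal, giving odds $\to 0$, hence $\Pr(H_0|\vect{x}_1,\vect{x}_2)\overset{p}{\to}0$.

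The main obstacle, and where I would spend the most care, is controlling the deep part of the tree uniformly: $\Phi(\Omega,g,\cdot)$ is an infinite sum/product over all partition sequences and all state paths, so I cannot simply ``truncate'' at level $k$ without justification. I would use Lemma~\ref{th:terminal_nodes} — empty regions and singletons contribute a factor of $1$ (under centering) — to show that for any finite sample the recursion effectively terminates at a finite (data-dependent) level, and then bound the contribution of all paths not passing through the chosen $(A, j, i)$ by comparing each to the $s$-state baseline marginal, which is exactly the likelihood under $Q_0$. The key inequality is that conditioning on $E_1(A)\cap E_{2,d}(\mathrm{parent}(A))$ the integrated likelihood factorizes across the partition tree, so everything outside the path to $A$ cancels in the odds ratio, leaving only the local KL gain at $A$. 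Establishing this factorization rigorously, together with the uniform $o(n)$ control on the $\log\Gamma$ terms in $D(\cdot)$ coming from the Beta–Bernoulli normalizing constants (Stirling, using that the pseudo-counts and the split probabilities are bounded away from $0$ and $1$ by the Theorem~\ref{thm:large_support} conditions), is the technical heart of the argument.
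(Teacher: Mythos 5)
Your core mechanism is the right one and matches the paper's: at a cut where $P_1(\cdot|B)\neq P_2(\cdot|B)$, the ratio of the pooled Beta--binomial marginal to the product of the two separate marginals is a likelihood-ratio statistic whose logarithm decays linearly in $n$ (the paper isolates this as the lemma on $\Lambda_{m,d}$, proved via Stirling and convexity of $x\log x+(1-x)\log(1-x)$), so ``divide'' beats ``merge'' exponentially on that set, and the $\lambda$- and $\Phi$-ratios for the children are bounded by the positivity assumptions on $\vect{\rho}$ and $\vect{\lambda}$. The genuine gap is in how you handle the adaptive partition. The $H_0$-restricted marginal is a sum over all merge/stop configurations, many of which never expose your chosen set $A$: they stop earlier, or cut in other directions so that $A$ never arises as a node. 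For those configurations your ``everything outside the path to $A$ cancels'' pairing is unavailable, and your fallback --- bounding them by the $s$-state baseline, ``exactly the likelihood under $Q_0$'' --- is false: the merge-only configurations form an adaptive (OPT-type) density estimator of the pooled distribution $\beta P_1+(1-\beta)P_2$, which in general differs from $Q_0$, so they fit the pooled data exponentially \emph{better} than $Q_0$ does. You therefore cannot dismiss the non-$A$-visiting part of the null sum this way, and the single $d$-path in your denominator does not obviously dominate it.

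The paper closes exactly this hole by observing that under $H_0$ the DMMT acting on the pooled sample reduces to an OPT$(\tilde{\vect{\rho}},\tilde{\vect{\lambda}},\tilde{\vect{\alpha}},Q_0)$ and invoking Theorem~4 of \cite{wong_ma_2010}: the hypothesis $\beta P_1(A_i^j|A)+(1-\beta)P_2(A_i^j|A)\neq Q_0(A_i^j|A)$ is what guarantees that every asymptotically dominant null topology keeps partitioning until some non-stopped region $B$ with $P_1(B_i^j|B)\neq P_2(B_i^j|B)$ actually arises. (You instead read that condition as ensuring the $s$ state is ``beaten'' at $A$ itself, which is not its role.) Only after restricting attention to those dominant null topologies does the local swap $m\to d$ at $B$, combined with $\log\Lambda_{m,d}(B,j,\cdot)\overset{p}{\to}-\infty$, yield an exponential gain against the whole null sum. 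Without a step of this kind --- showing that the dominant null configurations necessarily reach a differential cut --- your odds-ratio argument does not go through.
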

\begin{proof}
 See the Supplementary Material.
\end{proof}

\begin{mytheorem}
\emph{(Consistency under the null)}
\label{th:null_case}
We observe two independent groups of i.i.d.\ samples $\vect{x_1} = (x_{1,1},
\ldots, x_{1,n_1})$ and
$\vect{x_2} = (x_{2,1}, \ldots, x_{2,n_2})$ from two distributions $Q_1$ and
$Q_2$, where $n = n_1+n_2
\rightarrow \infty$, and  $n_1/n\rightarrow
\beta$ for some $\beta \in (0,1)$. 
Let
$(Q_1,Q_2)$ have a
DMMT$(\vect{\rho},\vect{\lambda},\vect{\alpha},
Q_0)$ prior where   the
centering condition are
satisfied. In addition, if for some $k \in \mathbb{N}$
\begin{enumerate}
 \item $\rho_{g,h}(A) \in (0,1)$ for all
 $g \in
\mathcal{G}$, $h\in\{d,m\}$ and $A \in \mathcal{A}^{(k)}$.
 \item $\rho_{g,d}(A) = 0 $ for all $g\in\mathcal{G}$, $A\in \mathcal{A}^l$ and
for all $l>k$,
\end{enumerate}
then, 
$$
\Pr( H_0| \vect{x}_1, \vect{x}_2) \overset{p}{\to} 1 \text{ under }
P_0^{(\infty)} \times P_0^{(\infty)} \text{ for any } P_0 \ll Q_0.
$$

\end{mytheorem}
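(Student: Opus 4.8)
The plan is to show that the posterior probability of the complement event $H_0^{c}$---that the DMMT occupies the divide state at some node of the partition tree---tends to zero in probability under $P_0^{(\infty)}\times P_0^{(\infty)}$. The crucial structural fact is that the second assumption, $\rho_{g,d}(A)=0$ for every $A$ of level greater than $k$, forces every divide node to lie among the \emph{finitely many} nodes of $\mathcal{A}^{(k)}$, and the same holds for the posterior DMMT by Theorem~\ref{thm:conjugacy}. First I would record that $\Psi(A,g,\vect{x}_1,\vect{x}_2)=1$ for every $A$ of level exceeding $k$---immediately from its definition, since $Q_1(\cdot\mid A)=Q_2(\cdot\mid A)$ almost surely when no descendant of $A$ can be in divide---and then unroll the recursion \eqref{eq:psi_recursive} from $\Omega$ down to level $k$. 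Writing $1-\Psi(A,g,\cdot)=\rho_{g,d}(A\mid\cdot)+\rho_{g,m}(A\mid\cdot)\bigl[1-\sum_{j}\lambda_{j}(A,m\mid\cdot)\prod_{i}\Psi(A_i^{j},m,\cdot)\bigr]$ and using $\rho_{g,m}(A\mid\cdot)\le1$ together with $1-\prod_{i\in\{l,r\}}\Psi(A_i^{j},m,\cdot)\le\sum_{i}\bigl(1-\Psi(A_i^{j},m,\cdot)\bigr)$ gives the deterministic bound
$$
1-\Pr(H_0\mid\vect{x}_1,\vect{x}_2)\;\le\;(2^{k+1}-1)\,\max_{A\in\mathcal{A}^{(k)},\,h\in\mathcal{G}}\rho_{h,d}(A\mid\vect{x}_1,\vect{x}_2)
$$
once the Proposition identifying $\Pr(H_0\mid\vect{x}_1,\vect{x}_2)$ with $\sum_{g}\rho_{0,g}\Psi(\Omega,g,\cdot)$ is invoked. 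Since $\mathcal{A}^{(k)}$ and $\mathcal{G}$ are finite, it now suffices to prove, for each fixed $A\in\mathcal{A}^{(k)}$ and $h$, that $\rho_{h,d}(A\mid\vect{x}_1,\vect{x}_2)\overset{p}{\to}0$.

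For this I would use $\rho_{h,d}(A\mid\vect{x}_1,\vect{x}_2)=\rho_{h,d}(A)Z(A,d,\vect{x}_1,\vect{x}_2)/\Phi(A,h,\vect{x}_1,\vect{x}_2)$ from Theorem~\ref{thm:conjugacy}. Because $\Phi(A,h,\cdot)=\sum_{h'}\rho_{h,h'}(A)Z(A,h',\cdot)\ge\rho_{h,m}(A)Z(A,m,\cdot)$ and the first assumption together with the finiteness of $\mathcal{A}^{(k)}$ bounds $\rho_{h,m}(A)$ and $\rho_{h,d}(A)$ away from $0$ and $1$, the task reduces to showing $Z(A,d,\vect{x}_1,\vect{x}_2)/Z(A,m,\vect{x}_1,\vect{x}_2)\overset{p}{\to}0$. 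Expanding both with Lemma~\ref{lemma:recursive_phi} and applying the mediant inequality $\bigl(\sum_j a_j\bigr)/\bigl(\sum_j b_j\bigr)\le\max_j a_j/b_j$ (for nonnegative $a_j$ and positive $b_j$), this ratio is dominated by a maximum over directions $j$ of a product of three factors: the fixed constant $\lambda_j(A,d)/\lambda_j(A,m)$; the Beta--Binomial Bayes factor
$$
\frac{\prod_{t=1,2}D\bigl(\vect{\alpha}_t^{j}(A,d)+\vect{n}_t^{j}(A)\bigr)/D\bigl(\vect{\alpha}_t^{j}(A,d)\bigr)}{D\bigl(\vect{\alpha}^{j}(A,m)+\vect{n}_1^{j}(A)+\vect{n}_2^{j}(A)\bigr)/D\bigl(\vect{\alpha}^{j}(A,m)\bigr)},
$$
which pits a separate split proportion for each group against a common one; and the subtree ratio $\prod_{i\in\{l,r\}}\Phi(A_i^{j},d,\cdot)/\Phi(A_i^{j},m,\cdot)$. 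Under $P_0^{(\infty)}\times P_0^{(\infty)}$ the law of large numbers gives $n_t(A_i^{j})/n_t(A)\to P_0(A_i^{j}\mid A)$, the two empirical split fractions differ by $O_p(n^{-1/2})$ (here one uses $n_1/n\to\beta\in(0,1)$), and a Stirling expansion of the ratios of Gamma functions then yields that the Bayes factor is $O_p(n^{-1/2})$: the merge submodel has one fewer parameter and still contains the truth. When $P_0(A_i^{j}\mid A)\in\{0,1\}$ the subregion $A_i^{j}$ is eventually empty or a singleton, and Lemma~\ref{th:terminal_nodes} with the centering condition makes the corresponding $\Phi$'s explicit, so those directions are handled separately.

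I expect the third factor, the subtree ratio $\Phi(A_i^{j},d,\cdot)/\Phi(A_i^{j},m,\cdot)$, to be the main obstacle; it must be shown to be $O_p(1)$. The two quantities are built from the \emph{same} pieces $Z(A_i^{j},d,\cdot),Z(A_i^{j},m,\cdot),Z(A_i^{j},s,\cdot)$ and differ only through the transition row used at $A_i^{j}$---from $d$ versus from $m$---which for $A_i^{j}$ of level greater than $k$ is supported on $\{m,s\}$. One would bound $\Phi(A_i^{j},d,\cdot)\le\max_{h'}Z(A_i^{j},h',\cdot)$ and $\Phi(A_i^{j},m,\cdot)\ge\rho_{m,h'}(A_i^{j})Z(A_i^{j},h',\cdot)$ for each $h'$ with $\rho_{m,h'}(A_i^{j})>0$, so the ratio is controlled by the transition probabilities and the largest of the three $Z$'s---the point being that under the merge state enough mass stays on whichever continuation has the larger marginal likelihood. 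Combining the three factors gives $Z(A,d,\vect{x}_1,\vect{x}_2)/Z(A,m,\vect{x}_1,\vect{x}_2)=O_p(n^{-1/2})\to0$, and uniformity over the finite set $\mathcal{A}^{(k)}\times\mathcal{G}$ finishes the proof. The genuinely delicate points I anticipate are (i) making the $O_p(n^{-1/2})$ rate of the Beta--Binomial Bayes factor uniform over the finitely many nodes and over all realizations of the random partition direction, and (ii) the regularity needed to keep $\Phi(A_i^{j},d,\cdot)/\Phi(A_i^{j},m,\cdot)$ from growing with $n$---intuitively, the sub-DMMT rooted below $A$ must retain enough ``stopping'' flexibility under the merge state that it does not systematically prefer to keep refining the partition even when the data actually come from $P_0$.
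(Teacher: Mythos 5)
Your proposal is correct and follows essentially the same route as the paper's proof: reduce the problem via the $\Psi$-recursion to the finitely many nodes of $\mathcal{A}^{(k)}$ (the paper does this by inductive application of Slutsky's theorem rather than your telescoping bound), then show $\rho_{g,d}(A\mid\vect{x}_1,\vect{x}_2)\overset{p}{\to}0$ at each such node by bounding $Z(A,d,\cdot)/Z(A,m,\cdot)$ direction by direction, with the Beta--Binomial Bayes factor controlled by a Stirling expansion (the paper invokes Wilks' theorem after the same expansion, in its Lemma on $\Lambda_{m,d}$) and the subtree ratio $\Phi(A_i^j,d,\cdot)/\Phi(A_i^j,m,\cdot)$ bounded by $1/\rho_{m,g^*}(A_i^j)$ exactly as you suggest. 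The delicate points you flag (uniformity over nodes, degenerate split probabilities) are dispatched in the paper by the finiteness of $\mathcal{A}^{(k)}$ and are not treated as obstacles.
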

\begin{proof}
 See the Supplementary Material.
\end{proof}

\subsection{ Identifying the differential structure}
\label{subsec:identify_difference}

In many applications it is important not only to test two-sample difference, but
to
identify the differential structure. The multi-resolution approach provides 
a natural means to identifying differences through specifying the location,
scale,
and effect size of the difference.
All such information is encapsulated in the full DMMT posterior  and
can be extracted through posterior Monte Carlo sampling or constructing 
summaries of the posterior.
Here we design a strategy for effectively
summarizing such posterior information. This is particularly useful when 
one is interested in
learning the general structure of the difference rather than checking 
specific aspects of the difference that one has in mind. (The latter task is
more readily achieved through Monte Carlo sampling.) 

Our strategy is first identify a  representative partition tree that 
effectively characterizes the two-sample difference and
then report all regions in this tree that are {\em a posteriori} very 
likely to be in the divide state. 
Our representative partition tree is determined using two types of
posterior summaries that respectively shed light on how to further partition 
a tree node and when to stop the partitioning. More specifically, we summarize
the posterior information
regarding how to further partition a tree node using the posterior direction
probabilities weighted accordingly to the
marginal posterior probability of each non-stopping state
$$
\lambda_j^*(A|\vect{x}_1, \vect{x}_2) = \sum_{g\in\{ d,m\}}
\rho^*_{g}(A|\vect{x}_1, \vect{x}_2) \lambda_{j}(A,g|\vect{x}_1, \vect{x}_2),
$$
where $A \in \mathcal{A}^{(\infty)}$, $j=1, \ldots, p$. 
The summary for determining whether to stop further partitioning is 
\begin{equation}\label{eq:recursive_rho_star}
\begin{array}{ll}
  \rho^*_{g}(A|\vect{x}_1, \vect{x}_2) = P\big{(} S(A)=g | \vect{x}_1,
\vect{x}_2, B(A)\big{)}  &   \text{ for } g\in \mathcal{G},
\end{array}
\end{equation}
where  $B(A)$ is the collection of ancestral sets of $A$ in the partition tree, 
i.e.\ the branch of the partition tree that extends from $\Omega$ to $A$.
This probability is marginalized over the hidden states along this branch.
It can be computed recursively:
$$
\rho^*_{g}(A|\vect{x}_1, \vect{x}_2) = \left\{
\begin{array}{ll}
\sum_{h \in \mathcal{G} } \rho_{0,h}
\rho_{h,g}(A|\vect{x}_1, \vect{x}_2)   & \text{ if } A = \Omega \\
 \sum_{h
\in \mathcal{G}}\rho^*_{h}( \text{parent}(A)|\vect{x}_1, \vect{x}_2)
 \rho_{h,g}(A|\vect{x}_1, \vect{x}_2)  &  \text{ if } A \in
\mathcal{A}^{(\infty)}\setminus\{\Omega\},
\end{array}
\right.
$$
for all $g \in \mathcal{G}$.

Based on these two types of summaries, the representative tree is
constructed using the following top-down sequential procedure. 
Starting from the root $A=\Omega$,
if $ \rho^*_s(A|\vect{x}_1, \vect{x}_2)  >  1-\delta^*$ for some
threshold $\delta^* \in (0,1)$, then we stop; otherwise we divide the tree in
the direction $j$ that maximizes $\lambda_j^*(A|\vect{x}_1, \vect{x}_2)$.
For each $A_i^j$ we repeat this procedure until all branches are stopped.

Additionally, we define a notion of {\em effect size} to summarize the extent 
to which the two distributions are different on each reported region.
 The {\em effect size} on $A$  is given by
\begin{equation}\label{eq:effect_size}
 \text{Eff}(A|\vect{x}_1, \vect{x}_2) = \max_{j=1, \ldots, p} \bigg|
\log\dfrac{ \alpha_1(A_l^j,d|\vect{x}_1,
\vect{x}_2)/\alpha_1(A_r^j,d|\vect{x}_1, \vect{x}_2)
}{\alpha_2(A_l^j,d|\vect{x}_1, \vect{x}_2)/\alpha_2(A_r^j,d|\vect{x}_2,
\vect{x}_2)} \bigg|.
\end{equation}
This quantity is large when the proportion of probability mass in
the two children regions $A_l^j, A_r^j$ is very different between the two
groups for at least one of the $p$ dimensions.

\subsection{Prior specification and spatial clustering of
differences}\label{sec:prior}

The number of parameters characterizing the DMMT process is infinite, but
structural and symmetric assumptions on the prior parameters can
substantially simplify the specification of the prior.
In this  section we suggest default choices for these parameters. 

A simple and reasonable choice for  the
prior transition matrix $\vect{\rho}(A)$ is to make it
dependent on the level of the $A$, .i.e.,
$ \vect{\rho}(A) = \vect{\rho}(k)$.
Additionally, we adopt a
parsimonious and yet flexible functional form to further reduce the elicitation 
of the prior to a small
number of parameters. More specifically, we define 
$$
\vect{\rho}(k) = 
\left[
\begin{array}{ccc}
 \beta   &  (1-\beta) / 2&  (1-\beta) /2 \\
 \gamma 2^{-k}  &  (1-\gamma 2^{-k}) / 2&  (1-\gamma 2^{-k}) / 2\\
 0 & 0 & 1 
\end{array}
\right]
,
$$
for some $\beta, \gamma \in (0,1)$. 
The parameter $\beta$
determines the spatial clustering of the differential structure.
Given a set on which the process is in the divide state, 
a larger $\beta$ will suggest a higher likelihood of children and neighbors
also being in the divide state.
The parameter $\gamma$ controls 
  the transition from the merge state to
the divide state, and the $2^{-k}$ factor is included to provide adequate
control for multiple testing.
The number of sets in $\mathcal{A}^k$
increases geometrically in the depth $k$, and the $2^{-k}$ factor ensures
  that the prior probabilities of the
divide state decrease accordingly. 

The parameters $\beta$ and $\gamma$ along with the initial
state probabilities  controls the prior
marginal null probability that the two distributions are identical. 
 Because the stop state is irreversible, one should fix $\rho_{0,s}=0$.
This leaves three free parameters 
$\beta$, $\gamma$, and $\rho_{0,d}$ to be specified.
A useful strategy to choosing these parameters is to set the prior marginal null
probability to some moderate value such as 0.5. Simple choices such
as
$\beta=0.3$, $\gamma=0.2$ and $\rho_{0,d}=1$ 
that satisfy this constraint are highly
effective in a variety of numerical examples.
A sensitivity analysis  shows that the posterior inference
is
robust to the
prior specification. The results of the analysis are reported in 
the Supplementary Material.

Without prior knowledge about what dimensions are more likely to be involved in
characterizing the two distributions, a natural noninformative choice for the
direction probabilities is
$
\lambda_j(A,g) = 1/p$ for $j=1, \ldots, p$,
and  $g \in \{d,m\}$.
Finally, a standard  noninformative choice for the pseudo-counts is 
$\alpha^{j}_{t,l}(A,g) = Q_0(A^j_l)/Q_0(A)$ and $\alpha^{j}_{t,r}(A,g) =
Q_0(A^j_r)/Q_0(A)$, respecting the centering condition.

\section{Numerical examples}\label{seq:examples}

In this section we provide three numerical examples.
 The first two are simulated and the last a real flow cytometry data set.
We compare the performance of DMMT for two-sample testing---using
  $\Pr(H_0|\vect{x}_1,\vect{x}_2) $ 
as a test statistic---to those of several other state-of-the-art methods. 
We then illustrate how to identify and summarize differential structures using
the strategy given in Section~\ref{subsec:identify_difference}. The
dimensionalities of the three examples are one, two and seven. For all these
examples, we use a uniform baseline and follow the general recipe
for prior specification given in Section~\ref{sec:prior}.  In particular, we set
$\beta = 0.3$,
$\gamma=0.2$ and $\rho_{0,d}=1$, and we let $\rho_{g,s}(k)=1$ for  level $k=12$.
To identify the differential structure we consider the representative partition
tree
with $\delta^*=0.8$. 
We use the range of the data points in each dimension to define the
hyper-rectangle $\Omega$.

\subsection{ Example 1}\label{sec:example1}
In Example 1 we consider the following two-sample problems in $\mathbb{R}$. 
For each problem we generate 1,000 datasets, and for each dataset we construct a
corresponding ``null'' dataset by randomly permuting the labels of the two
groups.
\begin{enumerate}
\item Local shift difference ($n_1 = n_2 = 200$):
$
 X_{1}   \sim 0.9 \mathcal{N}(0.2, 0.05^2) + 0.1 
\mathcal{N}(0.9,
0.01^2)$, $
X_{2}    \sim 0.9  \mathcal{N}(0.2, 0.05^2) + 0.1 
\mathcal{N}(0.88,
0.01^2)$.
 \item Local dispersion difference ($n_1 = n_2 = 200$):
$
  X_{1}   \sim 0.9 \mathcal{N}(0.2, 0.05^2) + 0.1
\mathcal{N}(0.8,
0.01^2)$, 
$
X_{2}    \sim 0.9 \mathcal{N}(0.2, 0.05^2) + 0.1
\mathcal{N}(0.8,
0.04^2)
$.
\item Global shift difference ($n_1 = n_2 = 100$):
$
  X_{1}  \sim  \mathcal{N}(-0.5, 2^2), \quad
 X_{2}  \sim  \mathcal{N}(0.5, 2^2)$.
\item Global dispersion difference  ($n_1 = n_2 = 50$):
$
   X_{1}  \sim  \mathcal{N}(0, 1^2), \quad
 X_{2}  \sim  \mathcal{N}(0, 2^2).
$
\end{enumerate}
In the first row of  Figure \ref{fig:example1a} we plot the pair of density
functions for each scenario. In the
first two scenarios the difference is located in a small region of the sample,
while in the last two the difference is global.
In the second row of  Figure \ref{fig:example1a} we compare the ROC curves of
five different
statistics for testing the 
hypothesis that the two distributions are identical. The other four
test statistics are  
the $k$-nearest neighbors test (KNN) (\cite{schilling_1986}, and
\cite{henze_1988}), the Cram\'{e}r test \citep{baringhaus_franz_2004}, co-OPT
\citep{ma_wong_2011} and \cite{holmes_etal_2012}'s PT Bayes factor.
We use the  R package \texttt{MTSKNN} \cite[]{MTSKNN_package} for the KNN test
and the  R package \texttt{cramer} \cite[]{cramer_package} for the Cram\'{e}r
test.
 
The DMMT outperforms the other methods in the local difference scenarios and
behaves well in the global difference ones.
KNN behaves well in the local difference scenarios, but is not
efficient in the global scenarios. Cram\'{e}r, instead, is good
in testing larger-scale differences, but performs extremely poorly for local
differences.

\begin{figure}
\centering

\makebox{\includegraphics[width = 1.0\textwidth]{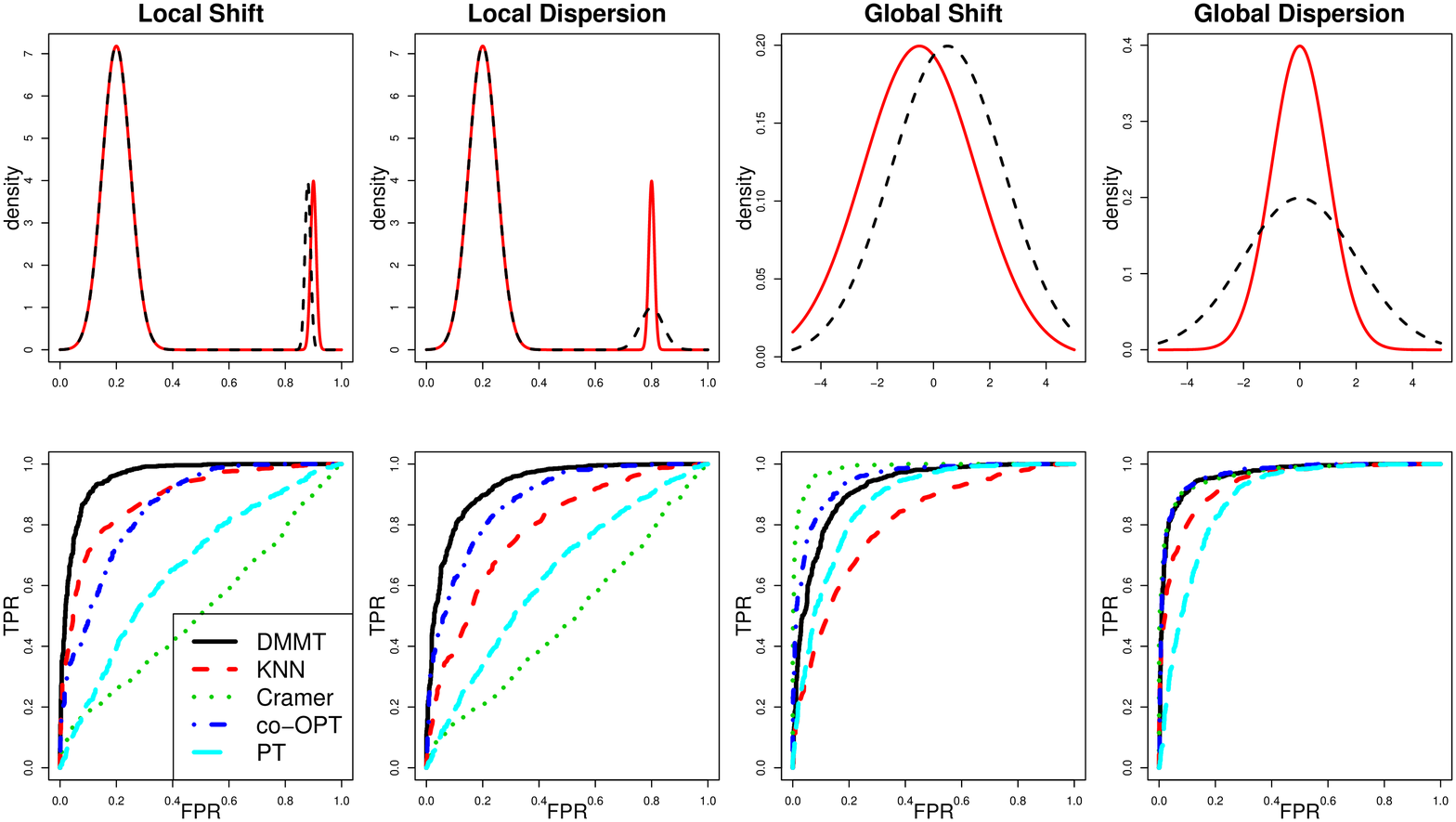}}
\caption{Two-sample problems in $\mathbb{R}$. First row: 
densities of the two distributions under the different scenarios - Sample 1 red
solid; Sample 2 black dashed. Second
row: the ROC curves for each of the testing method considered -
DMMT black solid; KNN red dash; Cram\'{e}r green dotted; co-OPT blue dotted
dash; PT pale blue long dash.}
\label{fig:example1a}

\vspace{.5cm}

\makebox{\includegraphics[width = 1\textwidth]{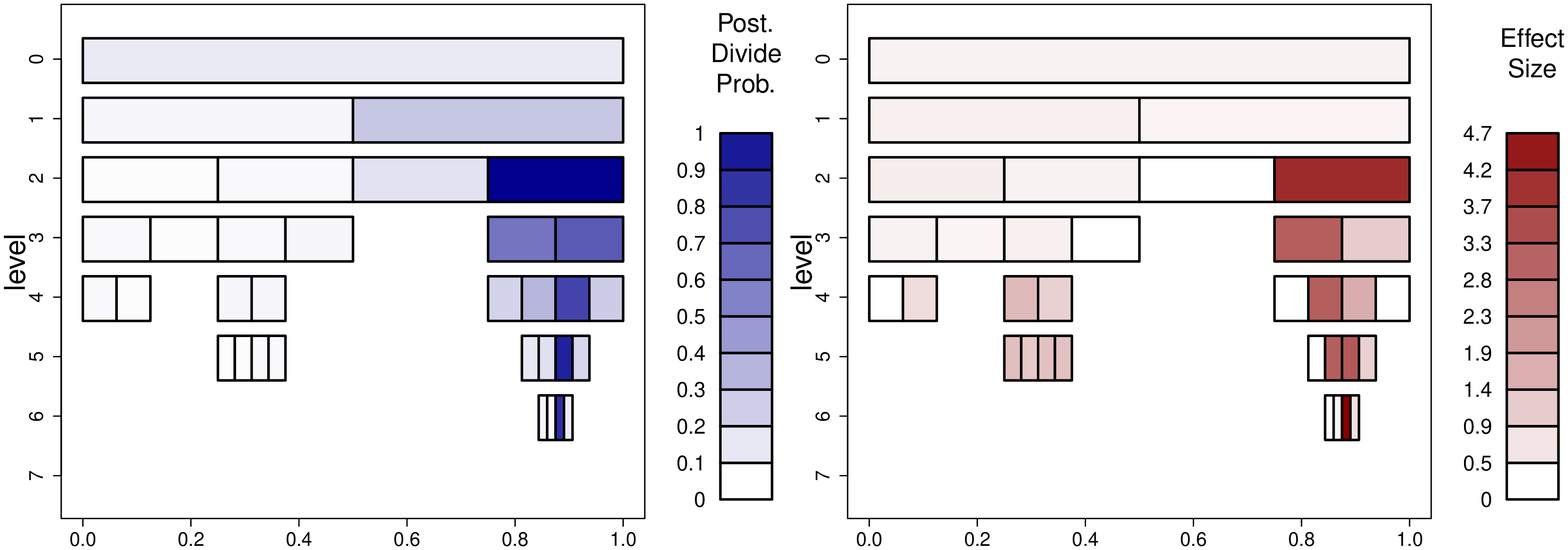}}
\caption{Nested sequence of partitions for the local shift difference scenario
in
$\mathbb{R}^1$. On the left, for each region
the dark/light blue represents 
 high/low posterior probability of being on the divide state. 
On the right, for each region
the dark/light red represents 
 high/low effect size. }
\label{fig:example1b}

\end{figure}
In Figure \ref{fig:example1b} we illustrate how to identify differences at 
multiple resolution levels
using the posterior representative partition tree for the local shift difference
scenario.
On the left plot we use a blue-scale to represent 
the marginal posterior probability of the divide state 
$\rho_d^*(\cdot|\vect{x}_1,\vect{x}_2)$ on each node; while on the right plot,
we use a
red-scale to visualize the effect size. 
The
two distributions share the same normal component on
the left, while the difference is located on the right, which is effectively
captured in the posterior summary. The spatial clustering of the difference is
also reflected in the patch of dark blue and red in the lower right corner of
the two plots.

In the Supplementary Material we provide an analysis of sensitivity to the
prior choice, showing that the
results are robust to the prior specification. Similar results are obtained for
the other scenarios and examples and so are not reported.

\subsection{Example 2}
We consider the following two-sample problems in $\mathbb{R}^2$. 
 For each problem we again generate 1,000 datasets, and
for each dataset we construct a corresponding ``null'' dataset by randomly
permuting the labels of the two
groups.
\begin{enumerate}
\item Local shift difference ($n_1 = n_2 = 400$):
$   X_{1}  \sim  p_1  \mathcal{N}_2(\mu_1, \Sigma_1) +  \sum_{k= 2}^{5} p_k
 \mathcal{N}_2(\mu_k, \Sigma_k)$, 
$   X_{2}  \sim  p_1  \mathcal{N}_2(\mu_1 + \delta, \Sigma_1) + 
\sum_{k= 2}^{5} p_k \mathcal{N}_2(\mu_k, \Sigma_k)$,
where $\delta = (1,1)$, while $p_k$, $\mu_k$ and $\Sigma_k$ are
provided in Appendix \ref{ap:bivariate_example}.
\item Local dispersion difference ($n_1 = n_2 = 400$):
$   X_{1}  \sim  p_1  \mathcal{N}_2(\mu_1, \Sigma_1) +  \sum_{k= 2}^{5}
p_k \mathcal{N}_2(\mu_k, \Sigma_k)$, 
 $  X_{2}  \sim  p_1  \mathcal{N}_2(\mu_1, 5\Sigma_1) + 
\sum_{k= 2}^{5} p_k
 \mathcal{N}_2(\mu_k, \Sigma_k)$,
where $p_k$, $\mu_k$ and $\Sigma_k$ are
provided in Appendix \ref{ap:bivariate_example}.
 \item Global shift difference ($n_1=n_2 = 100$):
$
   X_{1}  \sim    \mathcal{N}_2(0, \Sigma), \quad
   X_{2}  \sim    \mathcal{N}_2(\delta, \Sigma),
$
 where  $\delta = (1,0)$, while $\Sigma$ is provided in Appendix
\ref{ap:bivariate_example}.
\item Global dispersion difference ($n_1=n_2 = 50$):
$
   X_{1}  \sim   \mathcal{N}_2(0, I_2), \quad
    X_{2}  \sim   \mathcal{N}_2(0, 3  I_2 )
$.
\end{enumerate}
In Figure \ref{fig:example2a}  we plot the ROC curve for DMMT
and the other
methods. The results are similar to what we have obtained in the 1D example. 
The performance of DMMT dominates two of the competing
methods---KNN and PT---in the sense that DMMT is at least as good in
all four scenarios. The performance gain over the co-OPT is largely due to
incorporation of spatial clustering through Markov dependence, while the gain
over the PT is due to both the Markov dependence and the data-adaptive
partition sequence. On the other hand, DMMT does substantially better than
Cram\'{e}r for local differences, while Cram\'{e}r is more powerful, though to a
lesser extent, for the global differences. 

\begin{figure}
 \centering

\makebox{\includegraphics[width = 1.0\textwidth]{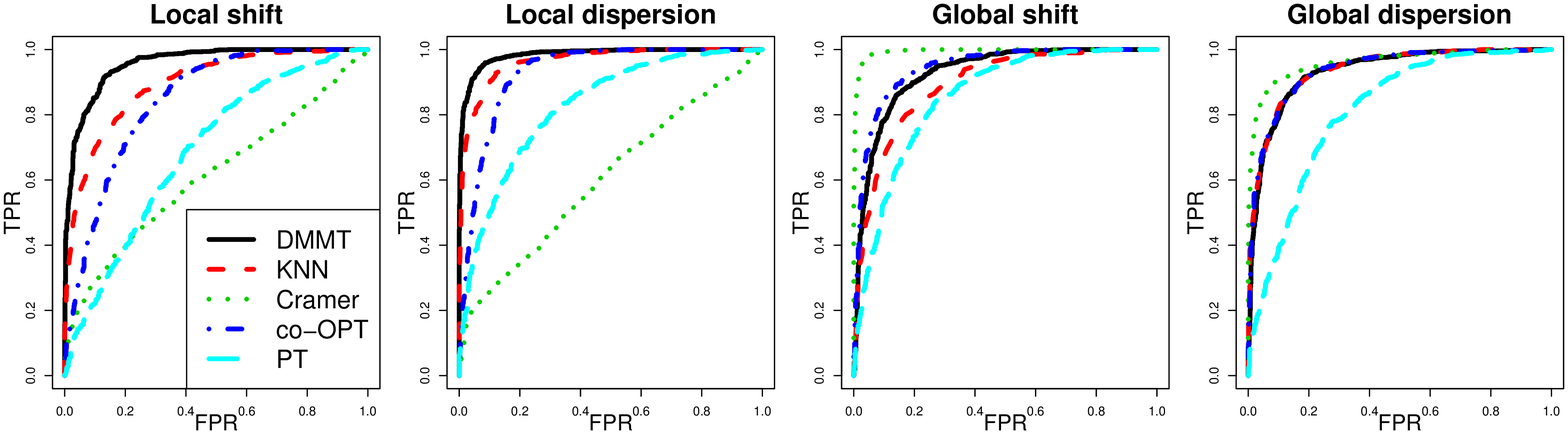}}
\caption{Two-sample problems in $\mathbb{R}^2$. ROC curves for each of the
testing method considered -
DMMT black solid; KNN red dash; Cram\'{e}r green dotted; co-OPT blue dotted
dash; PT pale blue long dash.}
\label{fig:example2a}

\vspace{.5cm}

\makebox{\includegraphics[width = 1.0\textwidth]{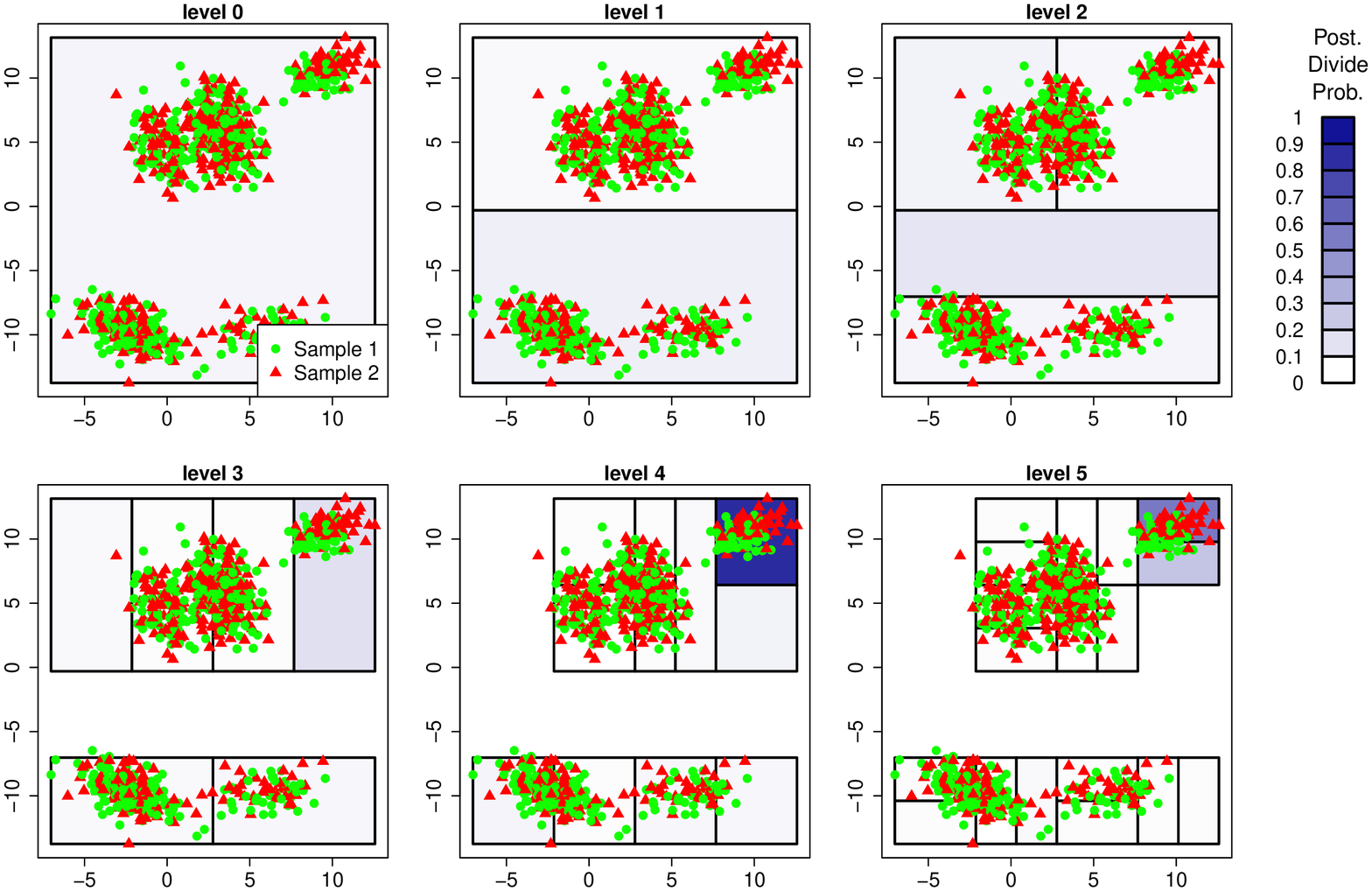}}
\caption{The representative partition tree for a draw from the local shift
difference
scenario in $\mathbb{R}^2$. Sample 1 green dots; Sample 2 red triangles. For
each region
the dark/light blue represents 
 high/low posterior probability of being on the divide state. 
}
\label{fig:example2b}

\end{figure}

For a typical data set in the local shift difference
scenario, we again illustrate how to identify the differential structure. 
We plot in Figure \ref{fig:example2b} 
the representative partition tree. 
For each region, we show 
the marginal posterior probability of the divide state 
$\rho_d^*(\cdot|\vect{x}_1,\vect{x}_2)$.
These regions correctly
identify the local
differences between the two distributions.

\subsection{A 7-dimensional flow cytometry dataset}

Flow cytometry is a popular laser technology for 
measuring  the protein levels of single cells on thousands of cells.
In this example we have two blood samples from the same patient,
where each sample contains over $300,000$ cells, and for each cell the
following 7 markers are measured: 
FSC-A, FSC-H, SSC-A, Dext, CD4, CD8 and Aqua. 
 In one of the two samples some cells have been
transfected via electroporation with a T cell receptor gene specific for
Tyrosinase  (see \cite{singh_etal_2013} for further details).
The dexter conjugated with a fluorescent dye detects
Tyrosinase-spefic CD8 T cell receptors, and  a
higher concentration of transfected cells is expected in the population of
cells that are both CD8 and Dext high.

The primary goal of this analysis is to identify cell subpopulations
 that differentiate the two cell samples.
For this reason the existing two sample tests such as KNN and Cram\'{e}r are not
useful here as one cannot pinpoint the difference with the tests.

We apply our DMMT process to model the underlying distributions. 
The posterior probability for the two samples to be equal is virtually 0. Using
the representative partition tree we identify 
several differential regions with significant two-sample differences.
In Figure \ref{fig:example3a} we plot five 2-dimensional projections of the 
data, highlighting in yellow the
differential region with 
the largest effect size among the regions with $\rho_d^*(\cdot|\vect{x}_1,
\vect{x}_2) > \delta^*$. In particular, $\rho_d^*(\cdot|\vect{x}_1,
\vect{x}_2) = 0.998$ and the effect size is $4.35$. 
We ``smeared'' out the cells that do not fall into our detected region.
The volume of the region is $1/64$ of the entire sample space and  contains
respectively $0.006\%$  and $0.220\%$ of the two cell populations (indicated
with
red dots and blue triangles). 
This reported difference is scientifically validated because a difference
 is expected for CD8 and Dext high cells. We  observe a probable
cluster  of transfected cells (blue triangles)
in the rightmost plot where the data 
are projected along the CD8 and Dext markers.

\begin{figure}
 \centering
\makebox{\includegraphics[width =
1.0\textwidth]{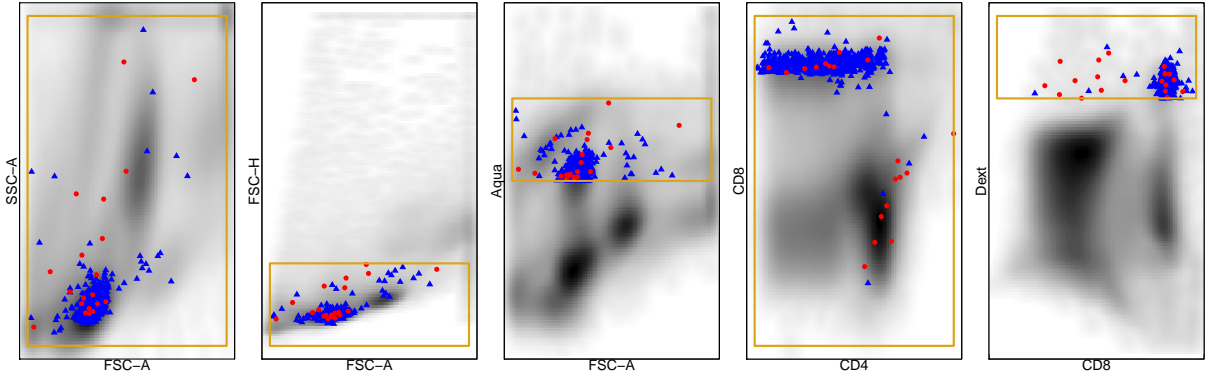}}
\caption{Five projections of  the flow cytometry dataset. 
For each projection the yellow rectangle highlights the differential region
with the largest effect size among the regions with $\rho_d^*(\cdot|\vect{x}_1,
\vect{x}_2) > \delta^*$. The red dots and the blue triangles represent
respectively the  normal and transfected cells within the differential region.
In the plot on the far right we observe a cluster of transfected cells (blue
triangles).}
\label{fig:example3a}
\end{figure}

\section{ Conclusion}\label{seq:conclusion}
In this work we have introduced a new multi-resolution Bayesian framework for
testing and identifying two-sample differences 
based on a nonparametric process called the DMMT. 
This process uses Markov dependence to incorporate spatial clustering of
differences, and randomized partitioning to achieve a data-adaptive
multi-resolution partition sequence. We have provided a recipe for inference
based on forward-summation-backward-sampling type of recursive computation, and
have showed how two-sample testing and identification of the differential
structure
can be carried out also based on recursive calculation. We have also established
the large support property of the DMMT and proved the asymptotic consistency of
the two-sample test based on this process.

The design of the DMMT process does not require either dimensionwise or dyadic
 partition sequences. Neither does it require the partition point to be in the
middle of support in each dimension. More general partition schemes have been
earlier considered by \cite{ma_wong_2011}, and the DMMT can be directly extended
to incorporate those more general partition systems. In this work we have
deliberately chosen to introduce the DMMT with dimensionwise dyadic partitions
to simplify the presentation and thereby highlighting the more unique features
of the DMMT in comparison to other existing multi-resolution tests. 

For the same reason we have introduced the DMMT in the context of comparing 
two data samples. The design of DMMT  allows comparison across $k$
samples
($k>2$)---simply let the generative process of DMMT generate $k$
distributions $(Q_1, \ldots,Q_k)$ simultaneously by drawing $k$ assignment
variables $Y_1(A),\ldots,Y_k(A)$ in the probability assignment step. The
inferential recipe, computational algorithm, and theoretical results all remain
valid.

Finally, we compare the computational efficiency of some of the
methods considered in this work. The computation was executed on a
single Intel(R)
Core(TM) i7-3820 core with a 3.60GHz CPU clock.  In the one-dimensional local
shift
example
the median CPU times are $0.001s$ (KNN),   $0.004s$ (DMMT) and  $0.482s$
(Cram\'{e}r).
In the two-dimensional local shift  example the median CPU times are
 $0.004s$ (KNN),   $0.073s$ (DMMT) and  $1.983s$ (Cram\'{e}r). 
In the flow cytometry example the CPU times are $3085s$ (KNN), 
 $7307s$ (DMMT), while we were unable to fit  Cram\'{e}r.  (Note that although
one can fit KNN to the data, it is not useful here because it cannot identify
where the difference is located.).
The computational complexity of DMMT is  linear in $n$, while those for  both
KNN
and Cram\'{e}r are quadratic. However, given the exponential nature of DMMT in
$p$ and $k$, efficient approximations are needed to scale it to high-dimensional
problems.

\section*{Software}
An R package implementing our method is available at
\href{https://github.com/jacsor/DMMT}{https://github.com/jacsor/DMMT}. 

\section*{Acknowledgment}
LM's research is supported by NSF grant DMS-1309057.
Part of this work was completed while LM was a fellow, and JS a graduate fellow
at Statistical and Applied Mathematical Sciences Institute (SAMSI).
The authors want to thank Cliburn Chan for providing the flow cytometry data,
which was made possible by grants from the
Wallace Coulter Foundation and the NIH (P50-5P30 AI064518). 

\section*{ Appendix}

\appendix
\makeatletter   

\section{ Numerical example 2}\label{ap:bivariate_example}
\begin{enumerate}
 \item Local shift difference: 
$(p_1, \ldots, p_5) = (0.11, 0.16, 0.25, 0.39,
0.09), \delta=(1.0, 1.0),\\ \mu_1 = (9.0, 9.9),  \mu_2 = (0.0, 4.4), \mu_3 =
(-2.3, -9.7), \mu_4 = (3.4, 5.9), \mu_5 = (5.8, -9.5), \\
( \Sigma_1(1,1), \Sigma_1(1,2), \Sigma_1(2,2) ) = (2.9, 0.5, 1.1),
( \Sigma_2(1,1), \Sigma_2(1,2), \Sigma_2(2,2) ) = (1.2, -0.6, 2.8),\\
( \Sigma_3(1,1), \Sigma_3(1,2), \Sigma_3(2,2) ) = (2.3, -1.0, 1.7),
( \Sigma_4(1,1), \Sigma_4(1,2), \Sigma_4(2,2) ) = (1.1, -0.4, 2.9),\\
( \Sigma_5(1,1), \Sigma_5(1,2), \Sigma_5(2,2) ) = (3.0, 0.2, 1.0)$. 
 \item Local dispersion difference: 
$(p_1, \ldots, p_5) = (0.19, 0.08, 0.33, 0.27,
0.13),  \mu_1 = (0.9, -7.2), \mu_2 = (-5.7, 3.3),  \mu_3 =
(-6.3, -2.1), \mu_4 = (7.5, -3.1), \mu_5 = (-3.1, 9.5), \\
( \Sigma_1(1,1), \Sigma_1(1,2), \Sigma_1(2,2) ) = (0.5, -0.1, 0.3),
( \Sigma_2(1,1), \Sigma_2(1,2), \Sigma_2(2,2) ) = (1.3, 0.7, 2.7),\\
( \Sigma_3(1,1), \Sigma_3(1,2), \Sigma_3(2,2) ) = (1.0, -0.3, 3.0),
( \Sigma_4(1,1), \Sigma_4(1,2), \Sigma_4(2,2) ) = (2.9, 0.5, 1.1),\\
( \Sigma_5(1,1), \Sigma_5(1,2), \Sigma_5(2,2) ) = (2.4, -0.9, 1.6)
$. 
\item Global shift difference: 
$ ( \Sigma(1,1), \Sigma(1,2), \Sigma(2,2) ) = (2.9, 0.4, 1.1)$.
\end{enumerate}

\bibliographystyle{Chicago}
\bibliography{citations_database}{}

\clearpage

\appendix
\numberwithin{equation}{section}
\setcounter{section}{2}

\setcounter{mylemma}{0}
\setcounter{mytheorem}{0}

\section*{Supplementary Material}

\subsection{Proofs}

\begin{mytheorem}\label{thm:large_support}

\end{mytheorem}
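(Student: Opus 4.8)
The plan is to adapt the classical large-$L_1$-support argument for the P\'olya tree \citep{ferguson_1973} to the DMMT, using the ``stop'' state and the randomized partition step in an essential way. Fix once and for all a reference partition sequence of $\Omega$ built from dimensionwise dyadic cuts cycling through the $p$ coordinates, and let $\mathcal F_k$ be the $\sigma$-field generated by its level-$k$ blocks, so that $\bigcup_k \mathcal F_k$ generates $\mathcal B(\Omega)$. Since $\tilde f_t = dF_t/dQ_0 \in L_1(Q_0)$ integrates to $1$, the martingale convergence theorem gives $\tilde f_t^{(k)} := E_{Q_0}[\tilde f_t \mid \mathcal F_k] \to \tilde f_t$ in $L_1(Q_0)$; the first step is to pick $k$ large enough that $\int |\tilde f_t^{(k)} - \tilde f_t|\,dQ_0 < \tau/3$ for $t=1,2$, noting that $\tilde f_t^{(k)}$ equals $F_t(A)/Q_0(A)$ on every level-$k$ block $A$ with $Q_0(A)>0$.

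Next I would define a ``favourable event'' $E^\ast$ on which the DMMT realization is forced close to this piecewise-constant approximation: on $E^\ast$ the process (i) picks, at each node of levels $0,\dots,k-1$, the same split direction as the reference sequence; (ii) is in the divide state on all those nodes; (iii) is in the stop state on every level-$k$ block; and (iv) has $|Y_t(A) - y^\ast_t(A)| < \eta$ at each node $A$ of level $<k$, where $y^\ast_t(A) = F_t(A^j_l)/F_t(A)$ is the target conditional split (set arbitrarily, say to $1/2$, when $F_t(A)=0$) and $\eta>0$ is a small constant. Because a stop forces $Q_t(\cdot\mid A)=Q_0(\cdot\mid A)$, on $E^\ast$ the density $\tilde q_t$ is constant on each level-$k$ block, hence $\mathcal F_k$-measurable, so that $\tilde q_t = E_{Q_0}[\tilde q_t \mid \mathcal F_k]$ $Q_0$-a.e.; moreover $Q_t(A)$ for a level-$k$ block $A$ is the product of the realized $Y_t$'s (or $1-Y_t$'s) along its branch while $F_t(A)$ is the corresponding product of targets, so an elementary product estimate (handling branches along which $F_t$ vanishes separately) gives $|Q_t(A) - F_t(A)| \le k\eta$. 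Hence
$$
\int |\tilde q_t - \tilde f_t|\,dQ_0 \;\le\; \big\| \tilde q_t - E_{Q_0}[\tilde q_t\mid\mathcal F_k] \big\|_{L_1(Q_0)} + \sum_A |Q_t(A) - F_t(A)| + \int |\tilde f_t^{(k)} - \tilde f_t|\,dQ_0 ,
$$
where the first term is $0$, the sum (over the at most $2^k$ level-$k$ blocks) is $< \tau/3$ once $\eta$ is chosen with $2^k k\eta < \tau/3$, and the last term is $< \tau/3$ by the choice of $k$; so $E^\ast \subseteq \{\int |\tilde q_t - \tilde f_t|\,dQ_0 < \tau,\ t=1,2\}$.

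The last step is to show $\Pr(E^\ast)>0$. Since $E^\ast$ constrains the process only on the finitely many nodes of levels $0,\dots,k$, the Markov structure lets $\Pr(E^\ast)$ be written as a finite product of conditional probabilities over those nodes, so it suffices that each factor is strictly positive, which is precisely what the hypotheses supply: the initial-state condition makes the divide state reachable at $\Omega$ with positive probability, $\rho_{d,d}(A)$ bounded away from $0$ keeps it alive down to level $k-1$, and $\rho_{d,s}(A)$ bounded away from $0$ lets it stop at level $k$ (these same bounds also give $\rho_{g,s}(A)$ bounded away from $0$ for every $g$, so by the absolute continuity established earlier $Q_t\ll Q_0$ a.s.\ and $\tilde q_t$ is well defined); $\lambda_j(A,d)$ bounded away from $0$ makes the prescribed direction choices positive-probability; and each $Y_t(A)$ is a Beta variable with strictly positive parameters, so $\Pr(|Y_t(A)-y^\ast_t(A)|<\eta)>0$ for any target in $[0,1]$. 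Multiplying these finitely many positive factors gives $\Pr(E^\ast)>0$, and the theorem follows.

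I expect the only real obstacle to be conceptual: realizing that the ``stop'' state should be used to make $Q_t$ \emph{exactly} piecewise constant on a fixed finite partition --- which annihilates the first and only genuinely awkward term in the $L_1$ decomposition --- rather than trying to control infinitely many split variables down the tree, and arranging the conditioning on split directions so that the realized partition coincides with the reference one, which is how the data-adaptive partitioning is accommodated. The remaining pieces (the martingale approximation, the product estimate, and checking that each clause of $E^\ast$ draws on a distinct part of the hypotheses) are routine.
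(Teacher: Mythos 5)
Your proof follows the same strategy as the paper's: exhibit a single positive-probability event on which the process realizes a fixed fine partition, places near-target mass on each cell via the Beta splits, and reduces to the baseline conditional within each cell. The differences are in execution, and they mostly favor your version. The paper approximates $\tilde f_t$ by a uniformly continuous density and exploits small cell diameters, then defers the remainder to the proof of Theorem 3 of Ma and Wong (2011); you instead use $L_1$ martingale convergence of $E_{Q_0}[\tilde f_t\mid\mathcal F_k]$, which is cleaner and avoids the continuity detour. More substantively, the paper's proof asks the process to ``merge'' on each terminal cell $A_i$, which only forces $Q_1(\cdot|A_i)=Q_2(\cdot|A_i)$ and does not pin that common conditional to $Q_0(\cdot|A_i)$, so the first term of the $L_1$ decomposition is not obviously annihilated; your use of the ``stop'' state does exactly that and is the right move. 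Your telescoping bound $|Q_t(A)-F_t(A)|\le k\eta$ and the accounting of the three error terms are correct.

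The one place you wave your hands is shared with the paper and really traces back to the theorem's hypotheses: you assert that ``the initial-state condition makes the divide state reachable at $\Omega$,'' but Condition 1 only says $\rho_{0,s}<1$, and $\rho_{m,d}$ is conspicuously absent from the list of transition probabilities bounded away from $0$ in Condition 2. The stated hypotheses therefore permit $\rho_{0,d}=0$ and $\rho_{m,d}\equiv 0$, under which the divide state never occurs, $Q_1=Q_2$ almost surely, and the conclusion fails for any $F_1\neq F_2$. Any correct proof (including the paper's, whose supplement invokes a nonexistent ``Condition (3)'') must additionally assume $\rho_{0,d}>0$ or that $\rho_{m,d}(A)$ is bounded away from $0$; with that reading, each clause of your event $E^\ast$ has positive conditional probability over the finitely many constrained nodes, and the argument goes through.
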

\begin{proof}
First, assume without loss of generality that $\tilde{f}_t$ for $t=1,2$ are
uniformly continuous because other densities can be approximated arbirtrarily
well by uniformly continuous ones. Let
$$
\delta_t(\epsilon) = \text{sup}_{|x=y|<\epsilon}|\tilde{f}_t(x)-\tilde{f}_t(y)|.
$$
Then, by uniformly continuity, $\delta_t(\epsilon) \downarrow 0 $ as $\epsilon
\downarrow 0$ for $t=1,2$. By Condition (1), for any $\epsilon > 0 $ there is a
partition of $\Omega = \cup_{i=1}^I A_i$ such that the diameter of each $A_i$
is less than $\epsilon$ and $A_i \in \mathcal{A}^{(\infty)}$. By Conditions (2)
and (3) there is a positive probability that this partition will arise during
the partitioning process in a finite number of steps. 
Additionally, there is positive probability that the process will ``merge'' on
each $A_i$. Then, the rest of the proof follows from the proof of Theorem~3 in
\cite{ma_wong_2011}.
\end{proof}

\begin{mylemma}\label{lemma:1}
\end{mylemma}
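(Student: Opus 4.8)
The plan is to prove the large-support property by exhibiting, for the given $(F_1,F_2)$ and a given $\tau>0$, one family of realizations of the DMMT on which $\int|\tilde q_t-\tilde f_t|\,dQ_0<\tau$ for $t=1,2$, and showing that this family of realizations carries positive prior probability.

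First I would fix the approximating partition. Let $\mathcal F_m$ be the $\sigma$-field generated by the $2^{mp}$ congruent dyadic sub-rectangles of $\Omega=[0,1)^p$ obtained by bisecting each coordinate $m$ times; since these cells shrink to points, $\mathcal F_m\uparrow\mathcal B(\Omega)$, so by the $L^1$ martingale convergence theorem $E_{Q_0}[\tilde f_t\mid\mathcal F_m]\to\tilde f_t$ in $L^1(Q_0)$. Choose $m$ large enough that $\|E_{Q_0}[\tilde f_t\mid\mathcal F_m]-\tilde f_t\|_{L^1(Q_0)}<\tau/4$ for $t=1,2$, and write $A_1,\dots,A_I$ ($I=2^{mp}$) for the corresponding cells, each of which belongs to $\mathcal A^{(\infty)}$; note that $E_{Q_0}[\tilde f_t\mid\mathcal F_m]$ equals the constant $F_t(A_i)/Q_0(A_i)$ on $A_i$. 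The balanced binary tree of depth $mp$ above these cells specifies a split direction at each of its $I-1$ internal nodes.

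Next I would describe the target realizations: (i) every internal node of the depth-$mp$ tree is in the divide state and splits in the prescribed direction; (ii) at every such node the two assignment variables $Y_1$ and $Y_2$ fall in small open intervals chosen so that the resulting cell-mass vectors $(Q_t(A_1),\dots,Q_t(A_I))$ are within $\ell^1$-distance $\tau/4$ of $(F_t(A_1),\dots,F_t(A_I))$ for $t=1,2$ --- this is possible because the P\'olya-tree parametrization of the simplex interior by the $Y$-values is onto and the two target vectors lie in the simplex; (iii) on each $A_i$ the process transitions to the stop state, so that $Q_t(\cdot\mid A_i)=Q_0(\cdot\mid A_i)$ and $\tilde q_t\equiv Q_t(A_i)/Q_0(A_i)$ on $A_i$. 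On any such realization, the triangle inequality through $E_{Q_0}[\tilde f_t\mid\mathcal F_m]$ gives $\int_\Omega|\tilde q_t-\tilde f_t|\,dQ_0\le\sum_i|Q_t(A_i)-F_t(A_i)|+\|E_{Q_0}[\tilde f_t\mid\mathcal F_m]-\tilde f_t\|_{L^1(Q_0)}<\tau/4+\tau/4<\tau$.

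Finally I would check that this family of realizations has positive probability. It is the intersection of finitely many events --- the prescribed state and direction at each of the $I-1$ internal nodes, the two assignment variables lying in their intervals at each internal node, and the transition to stop at each of the $I$ leaves --- and by the Markov structure of the DMMT its probability factors into conditional probabilities along the tree. Each factor is bounded below by a positive constant: the transition probabilities out of the divide state and the direction probabilities are uniformly bounded away from $0$ by the hypotheses, the divide state is reached from the root with positive probability (from the initial-state condition), the $Y_t$-interval probabilities are positive because, with the normalized pseudo-counts bounded away from $0$ and $1$, the relevant Beta densities are strictly positive on $(0,1)$, and in the divide state $Y_1$ and $Y_2$ are independent so the two groups contribute a product. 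Hence the whole event has positive probability, which proves the claim. I expect step (ii) --- making rigorous that the within-partition assignments can be simultaneously tuned, with positive probability, to approximate arbitrary target cell-mass vectors --- to be the main obstacle; this is precisely the mechanism behind the large-support proof for optional P\'olya trees, and I would borrow the argument from the proof of Theorem~3 in \cite{ma_wong_2011} to complete it.
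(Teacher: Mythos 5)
You have proved the wrong statement. The lemma in question (Lemma~1 of the supplement, i.e.\ the result stated as Lemma~1 in Section~2.3 of the main text) is the ``forward-summation'' recursion for the integrated conditional likelihood: it asserts that
\[
\Phi(A,g,\vect{x}_1,\vect{x}_2) \;=\; \sum_{h \in \mathcal{G}} \rho_{g,h}(A)\, Z(A,h,\vect{x}_1,\vect{x}_2),
\]
with $Z$ and $Z_j$ given explicitly in terms of the direction probabilities, the Beta normalizing-constant ratios $D(\vect{\alpha}+\vect{n})/D(\vect{\alpha})$, and the $\Phi$ values of the children nodes. Your proposal instead proves the large-$L_1$-support property, which is Theorem~1 of the paper, not this lemma: nothing in your argument involves the observed data, the counts $n_t(A)$, the marginal likelihood, or the quantities $Z$, $Z_j$, $D(\cdot,\cdot)$ that the lemma is about.

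What the lemma actually requires is: (i) writing the conditional likelihood $q(\vect{x}_1,\vect{x}_2\,|\,A)$ recursively as a function of the hidden state $S(A)$, the split direction $J(A)$, and the assignment variables --- in the non-stop states it is $\prod_{t}\bigl(Y_t(A)\bigr)^{n_t(A_l^{J(A)})}\bigl(1-Y_t(A)\bigr)^{n_t(A_r^{J(A)})}$ times the conditional likelihoods on the two children, with $Y_1(A)=Y_2(A)$ in the merge state; (ii) integrating against the prior by conditioning on $S(A)=h$ and $J(A)=j$ and using the independence of the assignment variables across nodes together with the self-similarity of the DMMT, so that the integrals over the children factor into $\Phi(A_i^j,h,\vect{x}_1,\vect{x}_2)$; and (iii) evaluating the Beta moments $E\bigl[Y^{n_l}(1-Y)^{n_r}\bigr]=D(\vect{\alpha}+\vect{n})/D(\vect{\alpha})$, once for the shared $Y$ under merge and as a product over $t=1,2$ of independent terms under divide. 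None of these steps appears in your write-up. (Incidentally, the support argument you do give is close in spirit to the paper's actual proof of Theorem~1 --- fine partition, positive probability of the required states and directions, tuning the Beta variables, and an appeal to the optional P\'olya tree support theorem --- but that does not address the statement you were asked to prove.)
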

\begin{proof}
For a set $A$ that arises during the partitioning process, consider
an observation  $x \in A$ of $\vect{x}_1$. 
Let $q_1(x|A)= q_1(x)/Q_1(A)$ and $q_0(x|A)= q_0(x)/Q_0(A)$, where $q_1 =
dQ_1/d\mu$ and $q_0 =
dQ_0/d\mu$. That is, $q_1(x|A)$ and $q_0(x|A)$  are the conditional density on
$A$ for $Q_1$ and $Q_0$, respectively. Then, 
$$
q_{1}(x | A)  =
\left\{
\begin{array}{ll}
 q_{0}( x | A )  &  \text{ if } S(A) = s \\
  \big{(} Y_{1}(A) \big{)}^{1(x
\in A^{J(A)}_l )}  \big{(} 1-Y_{1}(A) \big{)}^{1(x
\in A^{J(A)}_r )}  & \\ 
 \quad \cdot \prod\limits_{i\in\{l,r\}} q_{1}(x | A^{J(A)}_i)   
& \text{ if
} S(A) \in
\{d,m\}.
\end{array}
\right.
$$
Define $q_1(\vect{x}_1|A) = \prod_{x \in \vect{x}_1} q_{1}(x | A)$ and
$q_0(\vect{x}_1|A) = \prod_{x \in \vect{x}_1} q_{0}(x | A)$, then
$$
q_{1}( \vect{x}_1 | A)  =
\left\{
\begin{array}{ll}
 q_0(\vect{x}_1|A)  & \text{if } S(A) = s, \\
  \big{(} Y_{1}(A) \big{)}^{n_1(A^{J(A)}_l) }  \big{(} 1-Y_{1}(A)
\big{)}^{n_1(A^{J(A)}_r) } & \\
\quad \cdot \prod\limits_{i\in\{l,r\}} q_{1}(\vect{x}_1 |
A^{J(A)}_i) & \text{if
} S(A) \in
\{d,m\},
\end{array}
\right.
$$
where $n_1(A) = | \{x_{1,i} : x_{1,i} \in A,\; i=1,2, \ldots, n_1\} |$.
Similarly, define $q_2(\vect{x}_2)$, $q_2(\vect{x}_2|A)$ and
$n_2(A)$.
Finally,
define $q(\vect{x}_1,\vect{x}_2|A) = q_2(\vect{x}_1|A) \cdot q_2(\vect{x}_2|A)$,
then
$$
q( \vect{x}_1,\vect{x}_2 | A)  =
\left\{
\begin{array}{ll}
 q_0(\vect{x}_1|A)\cdot q_0(\vect{x}_2|A)  & \text{if } S(A) = s, \\
 \prod\limits_{t=1,2}   \big{(} Y_{t}(A) \big{)}^{n_t(A^{J(A)}_l) }  \big{(}
1-Y_{t}(A)
\big{)}^{n_t(A^{J(A)}_r) } & \\
\quad \cdot \prod\limits_{i\in\{l,r\}}
q(\vect{x}_1,\vect{x}_2 |
A^{J(A)}_i) & \text{if
} S(A) \in
\{d,m\}.
\end{array}
\right.
$$
Note that $q( \vect{x}_1,\vect{x}_2 | A)$ is the conditional likelihood
associated to the
data points in $A$. 
In particular, when $S(A)=m$, then $Y_1(A)=Y_2(A)=Y(A)$,
and so we can write it
as
\begin{equation}\label{eq:cond_like}
q(\vect{x}_1,\vect{x}_2 | A)  =
\left\{
\begin{array}{ll}
 \prod_{x\in \vect{x}_1\cup \vect{x}_2} q_{0}( x | A )  & \text{if } S(A) = s,
\\
  \big{(} Y(A) \big{)}^{
n(A^{J(A)}_l )} \big{(} 1-Y(A) \big{)}^{
n(A^{J(A)}_r )} & \\
\quad \cdot \prod\limits_{i\in\{l,r\}} q ( \vect{x}_1,\vect{x}_2 |
A^{J(A)}_i)   & \text{if } S(A) = m, \\
 \prod_{t=1,2} \big{(} Y_t(A) \big{)}^{
n_t(A^{J(A)}_l )}\big{(} 1-Y_t(A) \big{)}^{
n_t(A^{J(A)}_r )}  & \\
\quad \cdot q( \vect{x}_1, \vect{x}_2 |
A^{J(A)}_i)   & \text{if } S(A) = d,
\end{array}
\right.
\end{equation}
where $n(\cdot) = n_1(\cdot) + n_2(\cdot)$. 

Define $Z(A,g,\vect{x}_1,\vect{x}_2)$ as the marginal likelihood conditional on
$A$, given
$E_1(A)$ and $S(A)=g$ for $g\in \mathcal{G}$. Then, we obtain  
\begin{equation}\label{eq:Phi}
\Phi(A,g,\vect{x}_1,\vect{x}_2) = \sum_{h \in \mathcal{G}} \rho_{g,h}(A)
Z(A,h,\vect{x}_1,\vect{x}_2).
\end{equation}
If $S(A)=s$, then $q_t(\cdot|A) = q_0(\cdot|A)$, and so 
$Z(A,s,\vect{x}_1,\vect{x}_2)=
\prod_{x \in \vect{x}_1\cup\vect{x}_2} q_0(x|A)$. Otherwise, if $S(A) = g$, for
$ g \in \{ d,m
\}$, by integration of the right hand-side of \eqref{eq:cond_like}, we obtain 
\begin{align*}
  Z(A,m,\vect{x}_1,\vect{x}_2) & = \sum \limits_{j=1}^{p} \lambda_j(A,m) E
\bigg{[}  \big{(} Y(A) \big{)}^{
n(A^{j}_l )} \big{(} 1-Y(A) \big{)}^{
n(A^{j}_r )} \bigg{]} \\ 
& \cdot \prod \limits_{i \in \{l,r\}} \int q(
\vect{x}_1,\vect{x}_2 |
A^{j}_i) \pi\big{(}dq|E_1(A^{j}_i), E_{2,m}(A^{j}_i)\big{)} \\
& = \sum \limits_{j=1}^{p} \lambda_j(A,m)  \dfrac{D(
\vect{\alpha}^j(A,m) +
\vect{n}_1^j(A) + \vect{n}_2^j(A)
)}{D( \vect{\alpha}^j(A,m) )} \cdot \prod \limits_{i \in \{l,r\}}  
\Phi(A^j_i,m,
\vect{x}_1,\vect{x}_2) \\
& = \sum_{j=1}^{p} Z_j(A,m,\vect{x}_1,\vect{x}_2),
\end{align*}
and
\begin{align*}
 Z(A,d,\vect{x}_1,\vect{x}_2) & = 
\sum \limits_{j=1}^{p} \lambda_j(A,d) E \bigg{[} \prod\limits_{t=1,2}\big{(}
Y_{t}(A) \big{)}^{
n_t(A^{j}_{l} )}
\big{(}1-
Y_{t}(A) \big{)}^{
n_t(A^{j}_{r} )} \bigg{]} \\
& \cdot \prod \limits_{i \in \{l,r\}} \int q(
\vect{x}_1,\vect{x}_2
|A^{j}_i) \pi\big{(}dq| E_1(A^{j}_i), E_{2,d}(A^{j}_i)\big{)} \\
& = 
\sum \limits_{j=1}^{p} \lambda_j(A,d) \prod \limits_{t=1,2}\dfrac{D(
\vect{\alpha}^j_t(A,d) + \vect{n}^j_t(A) )}{D( \vect{\alpha}^j_t(A,d)  )} \cdot
\prod \limits_{i \in \{l,r\}} 
\Phi(A^j_i,d, \vect{x}_1,\vect{x}_2)\\
& = \sum_{j=1}^{p} Z_j(A,d,\vect{x}_1,\vect{x}_2).
\end{align*}
\end{proof}

\begin{mylemma}

\end{mylemma}
\begin{proof}
The first claim holds because $\Phi(A,g,\vect{x}_1,\vect{x}_2)$ is the marginal
likelihood, and so is equal to $1$ if there are no data-points in $A$.
The second claim follows from the DMMT self-similarity. For any region $A$ that
arises during the partitioning process, the DMMT restricted to $A$ is a
DMMT process with sample space $\Omega=A$,  parameters 
restricted to $A$ and its descendants, and baseline measure $Q_0(\cdot|A)$.
The centering condition assumption implies that $q_0(\cdot|A)$ is the predictive
density. For a single data-point $x$ in $A$, then $\Phi(A,g,\vect{x})$ is the
predictive density at $x$. 
\end{proof}

\begin{mytheorem}\label{thm:2}
 
\end{mytheorem}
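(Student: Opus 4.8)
My plan is to prove the statement by induction over the partition tree, treating the DMMT as a tree-structured hidden Markov model whose latent variables at a node $A$ are the state $S(A)$, the split direction $J(A)$ (present when $S(A)\in\{d,m\}$), and the probability-assignment variables $Y_1(A),Y_2(A)$ (with $Y_1(A)=Y_2(A)=:Y(A)$ under the merge state). The generative description of Section~\ref{sec:dimeopt} expresses the joint law of these latent variables together with the data $(\vect{x}_1,\vect{x}_2)$ as a product over the nodes of the tree, so it suffices to show that the conditional law given $(\vect{x}_1,\vect{x}_2)$ factorizes in the same way with the prior parameters replaced by the claimed posterior ones. The quantity $\Phi(A,g,\vect{x}_1,\vect{x}_2)$ of \eqref{eq:marginal_likelihood} is the upward message from the subtree rooted at $A$ given parent state $g$, Lemma~\ref{lemma:recursive_phi} is the recursion that assembles these messages, and Lemma~\ref{th:terminal_nodes} supplies the base case: for any finite sample, every sufficiently deep node is empty or holds a single observation and has a known $\Phi$, so the recursion terminates at a finite level.

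For the inductive step I would fix $A$, condition on the event $E_1(A)$ that $A$ arises in the partition and on the parent state $g$, and apply Bayes' rule locally three times. First, the posterior of $S(A)$ is proportional to $\rho_{g,h}(A)$ times the marginal likelihood of the entire subtree rooted at $A$ under $S(A)=h$; the computation underlying Lemma~\ref{lemma:recursive_phi} identifies this marginal likelihood as $Z(A,h,\vect{x}_1,\vect{x}_2)$, and normalizing by $\Phi(A,g,\vect{x}_1,\vect{x}_2)=\sum_h\rho_{g,h}(A)Z(A,h,\vect{x}_1,\vect{x}_2)$ gives formula~(1). Second, conditioning further on $S(A)=h\in\{d,m\}$, the posterior of $J(A)=j$ is proportional to $\lambda_j(A,h)$ times the direction-$j$ piece of the subtree likelihood, i.e.\ $Z_j(A,h,\vect{x}_1,\vect{x}_2)$; since $Z(A,h,\cdot)=\sum_j Z_j(A,h,\cdot)$ this yields formula~(2). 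Third, conditioning on $S(A)=h$ and $J(A)=j$, the assignment variables are updated by Beta--Binomial conjugacy applied to the contribution $\prod_t Y_t(A)^{n_t(A_l^j)}(1-Y_t(A))^{n_t(A_r^j)}$ under $h=d$ (resp.\ $Y(A)^{n_1(A_l^j)+n_2(A_l^j)}(1-Y(A))^{n_1(A_r^j)+n_2(A_r^j)}$ under $h=m$), which produces formula~(3). The structural heart of the argument is that, conditional on $(S(A),J(A))$, everything falling in $A_l^j$ is independent of everything falling in $A_r^j$, and by self-similarity the DMMT restricted to $A_i^j$ is itself a DMMT with root $A_i^j$; hence the induction hypothesis applies to each child and the posterior inherits the full DMMT factorization. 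Under $S(A)=s$ nothing is updated and the conditional law on $A$ stays $Q_0(\cdot\mid A)$, so the baseline is unchanged; at the root one additionally updates the initial probabilities to $\rho_{0,g}\Phi(\Omega,g,\vect{x}_1,\vect{x}_2)$ (renormalized), but all downstream parameters are exactly as stated.

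The step I expect to be the main obstacle is making the conditional-independence and message-passing bookkeeping rigorous in the presence of infinitely many latent variables: one must show that conditioning on $(\vect{x}_1,\vect{x}_2)$ couples sibling subtrees only through $(S(A),J(A))$, and that the countably infinite product of local Bayes updates still defines a genuine DMMT (in particular one whose realizations are absolutely continuous with respect to $Q_0$). I would control this by carrying the argument only down to a finite level beyond which every node is terminal in the sense of Lemma~\ref{th:terminal_nodes}---so that the posterior transition and direction probabilities there coincide with the prior ones---and then invoking the same Kolmogorov-type extension argument (Theorem~2 of \cite{wong_ma_2010}) that was used to define the DMMT in the first place.
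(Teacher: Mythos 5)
Your proposal is correct and follows essentially the same route as the paper's proof: the paper likewise interprets $\Phi(A,g,\cdot,\cdot)=\sum_h\rho_{g,h}(A)Z(A,h,\cdot,\cdot)$ from Lemma~\ref{lemma:recursive_phi} as a mixture of subtree marginal likelihoods, applies Bayes' rule locally to obtain the posterior transition and direction probabilities, and invokes Beta--Multinomial conjugacy for the pseudo-counts. Your additional care with the induction down to the terminal nodes of Lemma~\ref{th:terminal_nodes} and the extension argument is a more explicit rendering of bookkeeping the paper leaves implicit, not a different argument.
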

\begin{proof}
The proof is based on the results of Lemma \ref{lemma:1}.
 Note that the right-hand side of \eqref{eq:Phi} 
is the sum of over $h \in \mathcal{G}$ of
$$
\Pr \{ \text{generate } \vect{x}_1, \vect{x}_2 | S(A)=h, S(\text{parent}(A))=g 
\} 
\Pr
\{ S(A)=h| S(\text{parent}(A))=g \},
$$ 
for $g \in \mathcal{G}$ and $A \in \mathcal{A}^{(\infty)}$. Thus,
$$
\Pr \big( S(A)=g | S(\text{parent}(A))=h, \vect{x}_1, \vect{x}_2 \big) = 
\rho_{g,h}(A|\vect{x}_1,\vect{x}_2) = \rho_{g,h}(A) \dfrac{
Z(A,h,\vect{x}_1,\vect{x}_2)}{\Phi(A,g,\vect{x}_1,\vect{x}_2)}, 
$$
for all $A \in \mathcal{A}^{(\infty)}$, $g,h\in\mathcal{G}$.
Similarly, $Z_j(A,g,\vect{x}_1, \vect{x}_2)$ represents 
$$
\Pr \{ \text{generate } \vect{x}_1, \vect{x}_2 | S(A)=g, \text{ split direction
}
j \}
\Pr 
\{  \text{ split direction } j | S(A)=g
  \},
$$ 
for $g \in \mathcal{G}$, $j=1, \ldots, p$ and $A \in \mathcal{A}^{(\infty)}$.
Thus,
$$
\Pr \big( J(A) = j | S(A)=g, \vect{x}_1,\vect{x}_2 \big) = \dfrac{
Z_j(A,g,\vect{x}_1,\vect{x}_2) }{
Z(A,g,\vect{x}_1,\vect{x}_2) },
$$
for all $A \in \mathcal{A}^{(\infty)}$, $g\in \{ d,m\}$ and
$j=1,\ldots,p$. Finally, since the Beta distribution is conjugate to the
Multinomial distribution, we have
$$
\alpha_t(A,g|\vect{x}_1,\vect{x}_2) = \left\{ \begin{array}{ll}
            \alpha_t(A,d)  + n_t(A) & \text{if } g= d \\
            \alpha_t(A,m)  + n_1(A) + n_2(A) & \text{if } g= m, 
                           \end{array}
\right.
$$
for all $A \in \mathcal{A}^{(\infty)}$, $g\in \{d,m\}$ and $t=1,2$. 
\end{proof}

\begin{mylemma}

\end{mylemma}
\begin{proof}
By self-similarity and conjugacy, we know that
$$
(Q_1, Q_2)| (\vect{x}_1,\vect{x}_2,E_1(A),E_{2,g}(A))
$$
 is still a DMMT with
initial state $\rho_{0,g}=1$ and parameters defined by Theorem \ref{thm:2}.
Consider the event
$$
\{ q_1(\cdot|A)  = q_2(\cdot|A) \} = \bigcup_{ h \in \mathcal{G}} \big\{
q_1(\cdot|A) =  q_2(\cdot|A),
 S(A)=h \big{\}},
$$
For $S(A)=s$ we have
$$
\{ q_1(\cdot|A)  =  q_2(\cdot|A),
 S(A)=s \big\}  = 
\{ q_1(\cdot|A) =  q_2(\cdot|A) = q_0(\cdot|A),
 S(A)=s \big\},
$$
and so
$$
\int 1\{ q_1(\cdot|A)  =  q_2(\cdot|A),
 S(A)=s \big\}  \pi(dq|\vect{x}_1,\vect{x}_2, E_1(A),E_{2,g}(A)) =
\rho_{g,s}(A|\vect{x}_1,\vect{x}_2).
$$
For $S(A)=m$ we have
$$
\{ q_1(\cdot|A)  =  q_2(\cdot|A),
 S(A)=m \big\}  = \bigcup_{ j=1}^{p}  \{ q_1(\cdot|A) =  q_2(\cdot|A),
 S(A)=m, J(A)=j \big\}, $$
 where
\begin{align*}
 \{  q_1(\cdot|A) & =  q_2(\cdot|A),
 S(A)=m, J(A)=j \big\}  \\
& =  \{ Y(A) q_1(\cdot|A_l^j) + (1-Y(A))
q_1(\cdot|A_r^j) \\ & =  Y(A) q_2(\cdot|A_l^j) + (1-Y(A))
q_2(\cdot|A_r^j), 
  S(A)=m, J(A)=j \big\} \\
 & =\{  q_1(\cdot|A_l^j) =  q_2(\cdot|A_l^j), q_1(\cdot|A_r^j) = 
q_2(\cdot|A_r^j),
 S(A)=m, J(A)=j \big\}
 \end{align*}
and so
 \begin{align*}
   \int 1 \{ q_1(\cdot|A) & =  q_2(\cdot|A),
 S(A)=m, J(A)=j \big\} \pi(dq|\vect{x}_1,\vect{x}_2, E_1(A),E_{2,g}(A)) \\
 & = \rho_{g,m}(A|\vect{x}_1,\vect{x}_2)\lambda_j(A,m|\vect{x}_1,\vect{x}_2) \\
& \quad \cdot \prod_i \int 1 \{ q_1(\cdot|A_i^j) = q_2(\cdot|A_i^j) \}
\pi(dq|\vect{x}_1,\vect{x}_2, E_1(A_i^j),E_{2,m}(A_i^j)) \\
 & = \rho_{g,m}(A|\vect{x}_1,\vect{x}_2)\lambda_j(A,m|\vect{x}_1,\vect{x}_2) 
\prod_i \Psi(A_i^j,m,\vect{x}_1,\vect{x}_2).
 \end{align*}
Finally, for $S(A)=d$ we have
$$
\{ q_1(\cdot|A)  =  q_2(\cdot|A),
 S(A)=d \big\}  = \bigcup_{ j=1}^{p}  \{ q_1(\cdot|A) =  q_2(\cdot|A),
 S(A)=d, J(A)=j \big\}, $$
 where
\begin{align*}
\{  q_1(\cdot|A) & =  q_2(\cdot|A),
 S(A)=d, J(A)=j \big\}  \\
& =  \{ Y_1(A) q_1(\cdot|A_l^j) + (1-Y_1(A))
q_1(\cdot|A_r^j)  \\
& =  Y_2(A) q_2(\cdot|A_l^j) + (1-Y_2(A))
q_2(\cdot|A_r^j), 
 S(A)=m, J(A)=j \big\} \\
 & =\{  Y_1(A) = Y_2(A), q_1(\cdot|A_l^j) =  q_2(\cdot|A_l^j), \\
& \quad
q_1(\cdot|A_r^j)
= 
q_2(\cdot|A_r^j),
 S(A)=m, J(A)=j \big\},
 \end{align*}
 but the event has null probability since $Y_1(A),Y_2(A)$ are independent and
absolutely continuous, then
$$
  \int 1 \{ q_1(\cdot|A) =  q_2(\cdot|A),
 S(A)=d, J(A)=j \big\} \pi(dq|\vect{x}_1,\vect{x}_2, E_1(A),E_{2,g}(A)) =0.
 $$

\end{proof}

\vspace{2cm}

We observe two independent groups of i.i.d.\ samples $\vect{x_1} = (x_{1,1},
\ldots, x_{1,n_1})$ and
$\vect{x_2} = (x_{2,1}, \ldots, x_{2,n_2})$ from two distributions $Q_1$ and
$Q_2$, where $n = n_1+n_2
\rightarrow \infty$, and  $n_1/n\rightarrow
\beta$ for some $\beta \in (0,1)$. 
Let $(Q_1,Q_2)$ have a
DMMT$(\vect{\rho},\vect{\lambda},\vect{\alpha},
Q_0)$.
The two unknown distributions are $P_1$ and $P_2$. Define 
$P = \beta P_1 + (1-\beta) P_2$, $p_t^j = P_t^j(A_l^j|A)$,
$p^j = P(A_l^j|A) $, $q^j=
Q_0(A^j_i) / Q_0(A)$,
$\hat{p}_t^j = n_t(A_l^j)/n_t(A)$  and $\hat{p}^j = n(A_l^j)/n(A)$ for
$t=1,2$ and $j=1, \ldots, p$.

\begin{mylemma}\label{lemma:lambda_md}
For $A \in \mathcal{A}^{(\infty)}$ and $j=1, \ldots, p$, define
$$
 \Lambda_{m,d}(A,j, \vect{x}_1, \vect{x}_2) = \dfrac{ D(
\vect{\alpha}^j(A,m) + \vect{n}^j_1(A) + \vect{n}^j_2(A) )/D(
\vect{\alpha}^j(A,m)  ) }{ \prod_{t=1,2} D(
\vect{\alpha}^{j}_t(A,d) + \vect{n}^{j}_t(A) )/D( \vect{\alpha}^{j}_t(A,d)  )}.
$$
Then, 
\begin{enumerate}
 \item $\log \Lambda_{m,d}(A,j,\vect{x}_1, \vect{x}_2) / \log n \overset{p}{\to}
\eta' $ for some $\eta'>0$ if $p_1^j=p_2^j$.
  \item $\log \Lambda_{m,d}(A,j,\vect{x}_1, \vect{x}_2) / n \overset{p}{\to}
\eta'' $ for some $\eta'' < 0$ if
$p_1^j\neq p_2^j$.
\end{enumerate}
\end{mylemma}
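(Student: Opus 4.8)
The plan is to apply Stirling's expansion to the three beta-function ratios appearing in $\Lambda_{m,d}$, thereby reducing the claim to the asymptotics of an entropy gap between the within-group empirical splitting proportions, and then to dispatch the two cases by a Taylor expansion and the law of large numbers, respectively. Throughout I assume — as is implicit in the set-up, where $p_t^j=P_t(A_l^j\mid A)$ must be well defined — that $P_1(A),P_2(A)>0$, so that $n_1(A),n_2(A)\to\infty$ a.s.; I write $N=n(A)$, $N_t=n_t(A)$, $w=N_1/N$, and let $\hat p^j,\hat p_1^j,\hat p_2^j$ be the fractions of points in $A_l^j$ among those in $A$, among group-$1$ points in $A$, and among group-$2$ points in $A$, so that $\hat p^j=w\hat p_1^j+(1-w)\hat p_2^j$. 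I will also suppose the empirical proportions stay bounded away from $0$ and $1$, equivalently $p_1^j,p_2^j\in(0,1)$; the boundary cases $p_t^j\in\{0,1\}$ are treated identically after a few $-\tfrac12\log$ Stirling corrections are replaced by $-\alpha\log$ terms, which under the default, centering-respecting pseudo-counts of Section~\ref{sec:prior} still combine to a positive multiple of $\log n$.

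First I would record the key identity. For fixed $a_1,a_2>0$ and integers $m_1,m_2\to\infty$ with $m_1/(m_1+m_2)$ staying in a compact subset of $(0,1)$, Stirling's formula gives
$$
\log\frac{D(a_1+m_1,a_2+m_2)}{D(a_1,a_2)}=-(m_1+m_2)\,h\!\Big(\tfrac{m_1}{m_1+m_2}\Big)-\tfrac12\log(m_1+m_2)+O(1),\qquad h(p):=-p\log p-(1-p)\log(1-p).
$$
Applying this to the numerator of $\Lambda_{m,d}$ (pooled counts, prior $\vect{\alpha}^j(A,m)$) and to each factor of the denominator (group-$t$ counts, prior $\vect{\alpha}^j_t(A,d)$) yields
$$
\log\Lambda_{m,d}(A,j,\vect{x}_1,\vect{x}_2)=-N\,\Delta+\tfrac12\log\tfrac{N_1N_2}{N}+O(1),\qquad \Delta:=h(\hat p^j)-w\,h(\hat p_1^j)-(1-w)\,h(\hat p_2^j).
$$
By concavity of $h$, $\Delta\ge0$ with equality iff $\hat p_1^j=\hat p_2^j$, and since $N_1,N_2\asymp n$ the middle term equals $\tfrac12\log n+O_p(1)$. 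Everything therefore reduces to the asymptotics of $N\Delta$.

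For part~(1), where $p_1^j=p_2^j=:p^j$, both $\hat p_t^j\to p^j$ and $w\to w_\infty:=\beta P_1(A)/P(A)\in(0,1)$. A second-order Taylor expansion of $h$ about $p^j$ shows the constant and linear terms of $\Delta$ cancel, so that $0\le\Delta\le C\,(\hat p_1^j-\hat p_2^j)^2$ with probability tending to one (for a constant $C$ depending on $p^j$); since $\hat p_1^j-\hat p_2^j=O_p(n^{-1/2})$ by the central limit theorem for the within-group counts, $N\Delta=O_p(1)$. Hence $\log\Lambda_{m,d}=\tfrac12\log n+O_p(1)$ and $\log\Lambda_{m,d}/\log n\overset{p}{\to}\tfrac12=:\eta'>0$. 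For part~(2), where $p_1^j\ne p_2^j$, the law of large numbers gives $\hat p_t^j\to p_t^j$ and $w\to w_\infty\in(0,1)$, hence $\Delta\to c:=h\big(w_\infty p_1^j+(1-w_\infty)p_2^j\big)-w_\infty h(p_1^j)-(1-w_\infty)h(p_2^j)$, which is \emph{strictly} positive by strict concavity of $h$ together with $w_\infty\in(0,1)$ and $p_1^j\ne p_2^j$. Since $N/n\to P(A)$ and the remaining terms are $O_p(\log n)=o_p(n)$, dividing by $n$ gives $\log\Lambda_{m,d}/n\overset{p}{\to}-P(A)\,c=:\eta''<0$.

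The genuinely delicate part is the first step: one must verify that the Stirling remainders really are $O(1)$ — i.e.\ that the relevant counts grow linearly and their fractions stay interior — and handle the boundary cases $p_t^j\in\{0,1\}$, where an entire child set is eventually empty of data and the logarithmic corrections must be traced carefully; everything after the reduction $\log\Lambda_{m,d}=-N\Delta+\tfrac12\log(N_1N_2/N)+O(1)$ is a routine Taylor expansion plus LLN/CLT. I would therefore structure the write-up so that this identity is established once and cleanly, after which parts~(1) and~(2) are each only a few lines.
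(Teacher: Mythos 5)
Your proof is correct and follows essentially the same route as the paper's: a Stirling expansion of the three beta-function ratios reducing $\log\Lambda_{m,d}$ to $\tfrac12\log n$ plus the log-likelihood ratio for $H_0\colon p_1^j=p_2^j$ (your $-N\Delta$ is exactly the paper's $\log\Lambda$, since $h=-G$), with the null case handled by showing that ratio is $O_p(1)$ and the alternative by the LLN plus strict concavity. The only difference is cosmetic: the paper cites Wilks' theorem for $-2\log\Lambda\overset{d}{\to}\chi^2_1$ where you carry out the second-order Taylor expansion directly, and you are somewhat more explicit about the interiority conditions needed for the Stirling remainders to be $O(1)$.
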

\begin{proof}
The case $p_1^j = p_2^j$ follows by Theorem 1 in \cite{holmes_etal_2012}, but we
report the proof since it used to show the case $p_1^j \neq p_2^j$. 
 Using Stirling's formula
\begin{equation}\label{eq:stirling}
D(x,y)=\dfrac{\Gamma(x)\Gamma(y)}{\Gamma(x+y)} \simeq\sqrt{2\pi}\dfrac{
x^{x-1/2}
y^{y-1/2}}{ (x+y)^{x+y-1/2} }, \text{
for large }x,y,
\end{equation}
we can approximate the ratio
\begin{align*}
\Lambda_{m,d}(A,j,\vect{x}_1, \vect{x}_2) & \simeq  \dfrac{\prod_{t=1,2} D(
\vect{\alpha}^{j}_t(A,d)  )  }{ D(\vect{\alpha}^j(A,m)  )
  } \cdot \dfrac{1}{\sqrt{2\pi}}   \cdot \dfrac{
(\hat{p}^j)^{\alpha(A^{j}_l,m)
- 1/2)} (1 - \hat{p}^j)^{ \alpha(A^{j}_r,m) - 1/2)}  }{ \prod_t
(\hat{p}_t^{j})^{\alpha_t(A^{j}_l,d)-
1/2)} (1 - \hat{p}_t^{j})^{ \alpha_t(A^{j}_r,d) - 1/2)}  } \\
& \cdot \sqrt{\dfrac{n_1(A^{j}_l) n_2(A^{j}_l)}{n(A)}} \cdot \dfrac{
(\hat{p}^j)^{n(A^{j}_l)}
(1- \hat{p}^j)^{n(A^{j}_r)} }{ \prod_t (\hat{p}_t^{j})^{n_t(A^{j}_l)} (1-
\hat{p}_t^{j})^{n_t(A^{j}_r)} } \\
& \simeq C \sqrt{n(A)} \dfrac{ (\hat{p}^j)^{n(A^{j}_l)}
(1- \hat{p}^j)^{n(A^{j}_r)} }{ \prod_t (\hat{p}_t^{j})^{n_t(A^{j}_l)} (1-
\hat{p}_t^{j})^{n_t(A^{j}_r)} },
\end{align*}
 for some $C>0$. 
Then,
$
 \log\Lambda_{m,d}(A,j,\vect{x}_1, \vect{x}_2)  \propto \frac{1}{2}\log n(A) +
\log\Lambda,
$
where 
$$
\Lambda = 
(\hat{p}^j)^{n(A^{j}_l)}
(1- \hat{p}^j)^{n(A^{j}_r)}/\prod_t (\hat{p}_t^{j})^{n_t(A^{j}_l)} (1-
\hat{p}_t^{j})^{n_t(A^{j}_r)}.
$$
Then $\Lambda$
 is the likelihood ratio for testing composite
hypotheses $H_0: p_1^j = p_2^j = p^j$ vs $H_1: p_1^j,  p_2^j \in (0,1)$.
Under the null, $-2 \log \Lambda \overset{d}{\to} \chi_1^2$ by
\cite{wilks_1938} and so $\log\Lambda_{m,d}(A,j,\vect{x}_1, \vect{x}_2)/ \log n 
\overset{p}{\to} \eta'$ for some $\eta'>0$.

Additionally,  $\log \Lambda$ can be also written as
$
\log \Lambda = n(A)Y_{n(A)},
$
$$ Y_{n(A)} \overset{p}{\to}   \big{(} G (p^j) -  \beta 
G(p_1^{j}) - (1-\beta )  G(p_2^{j}) \big{)},
$$
where
$$
\begin{array}{ll}
G(x) = x \log(x) + (1-x) \log(1-x) & \text{ for } x \in (0,1).
\end{array}
$$
If $p_1^j\neq p_2^j$, then $
 G (p^j) -  \beta 
G(p_1^{j}) - (1-\beta )  G(p_2^{j})  < 0,
$
since $G$ is convex and $p^j = \beta p_1^j + (1-\beta)p_2^j $. Thus, 
$\log\Lambda_{m,d}(A,j,\vect{x}_1, \vect{x}_2)/ n \overset{p}{\to} \eta'' $ for
some $\eta'' < 0$.
\end{proof}

\begin{mylemma}\label{lemma:rho_gd}
Consider $A \in
\mathcal{A}^{(\infty)}$  such that 
\begin{enumerate}
 \item  $p_1^j = p_2^j$ for all $j=1, \ldots,
p$.
 \item $\rho_{g,h}(A) \in (0,1)$ for all $g,h\in\{d,m\}$.
 \item $\rho_{g,h}(A_i^j) \in (0,1)$ for all  $g\in\{d,m\}$
and $h\in\mathcal{G},  i=l,r, j=1, \ldots p$.
\end{enumerate}

Then, for all $g \in \mathcal{G}$,
 $$
\rho_{g,d}(A|\vect{x}_1,\vect{x}_2 ) \overset{p}{\to} 0.
$$
\end{mylemma}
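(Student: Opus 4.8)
The plan is to bound the posterior transition probability $\rho_{g,d}(A|\vect{x}_1,\vect{x}_2)$ from above by an explicit likelihood ratio and show the latter vanishes in probability, combining the conjugacy formulas of Theorem~\ref{thm:conjugacy} with the large-deviation behaviour of $\Lambda_{m,d}$ from Lemma~\ref{lemma:lambda_md}. Throughout I would write $Z(A,g)$, $Z_j(A,g)$, $\Phi(A,g)$ and $\Lambda_{m,d}(A,j)$ for the corresponding quantities evaluated at the observed samples $(\vect{x}_1,\vect{x}_2)$. The state $g=s$ is immediate: $\rho_{s,d}(A)=0$, so Theorem~\ref{thm:conjugacy} gives $\rho_{s,d}(A|\vect{x}_1,\vect{x}_2)=0$. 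For $g\in\{d,m\}$, Theorem~\ref{thm:conjugacy} together with $\Phi(A,g)=\sum_{h\in\mathcal{G}}\rho_{g,h}(A)Z(A,h)\ge\rho_{g,m}(A)Z(A,m)$ and $\rho_{g,m}(A)\in(0,1)$ from Condition~2 yields
$$
\rho_{g,d}(A|\vect{x}_1,\vect{x}_2)=\frac{\rho_{g,d}(A)\,Z(A,d)}{\Phi(A,g)}\;\le\;\frac{\rho_{g,d}(A)}{\rho_{g,m}(A)}\cdot\frac{Z(A,d)}{Z(A,m)},
$$
so it suffices to prove $Z(A,d)/Z(A,m)\overset{p}{\to}0$.

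Next I would decompose over split directions. By Lemma~\ref{lemma:recursive_phi}, $Z(A,g)=\sum_{j=1}^{p}Z_j(A,g)$ for $g\in\{d,m\}$, and the elementary bound $\big(\sum_j a_j\big)\big/\big(\sum_j b_j\big)\le\max_j a_j/b_j$ (valid since $Z_j(A,m)>0$; assuming $\lambda_j(A,m)>0$ for all $j$, as holds for the default choice $\lambda_j\equiv 1/p$) reduces the task to showing $Z_j(A,d)/Z_j(A,m)\overset{p}{\to}0$ for each fixed $j$. Reading the expressions for $Z_j$ off Lemma~\ref{lemma:recursive_phi},
$$
\frac{Z_j(A,d)}{Z_j(A,m)}=\frac{\lambda_j(A,d)}{\lambda_j(A,m)}\cdot\frac{1}{\Lambda_{m,d}(A,j)}\cdot\prod_{i\in\{l,r\}}\frac{\Phi(A^j_i,d)}{\Phi(A^j_i,m)},
$$
with $\Lambda_{m,d}(A,j)$ exactly as defined in Lemma~\ref{lemma:lambda_md}. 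Since $p_1^j=p_2^j$ by Condition~1, part~(1) of Lemma~\ref{lemma:lambda_md} gives $\log\Lambda_{m,d}(A,j)/\log n\overset{p}{\to}\eta'>0$, hence $\Lambda_{m,d}(A,j)\overset{p}{\to}\infty$ and $1/\Lambda_{m,d}(A,j)\overset{p}{\to}0$.

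The remaining, and main, point is to verify that the child factor $\prod_{i}\Phi(A^j_i,d)/\Phi(A^j_i,m)$ stays bounded, so that it cannot offset the vanishing $1/\Lambda_{m,d}(A,j)$. Here Condition~3 is used. For any $B\in\{A^j_l,A^j_r\}$, Lemma~\ref{lemma:recursive_phi} expresses both $\Phi(B,d)=\sum_{h}\rho_{d,h}(B)Z(B,h)$ and $\Phi(B,m)=\sum_{h}\rho_{m,h}(B)Z(B,h)$ as combinations of the same nonnegative quantities $Z(B,h)$; because there are only finitely many sets $A^j_i$ and Condition~3 puts every $\rho_{m,h}(A^j_i)$ in $(0,1)$, there is a constant $c\in(0,1)$ with $\rho_{m,h}(B)\ge c$ for all $h\in\mathcal{G}$ and all such $B$, while trivially $\rho_{d,h}(B)\le1$ and $\Phi(B,m)>0$. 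Hence $\Phi(B,d)\le c^{-1}\Phi(B,m)$ deterministically, so the child factor is at most $c^{-2}$. Consequently $Z_j(A,d)/Z_j(A,m)$ is a fixed constant times a uniformly bounded quantity times $o_p(1)$, whence it converges to $0$ in probability; taking the maximum over the finitely many $j$ gives $Z(A,d)/Z(A,m)\overset{p}{\to}0$, and the first display finishes the proof.

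I expect the only genuinely delicate step to be this last one — checking that the extra $\Phi$-ratio inherited from the children does not interfere. It works out cleanly because $\Phi(B,d)$ and $\Phi(B,m)$ are assembled from the identical collection of subtree marginal likelihoods $Z(B,h)$ and differ only through transition weights that Conditions~2 and~3 keep bounded and bounded away from zero; everything else is bookkeeping with the conjugacy identities of Theorem~\ref{thm:conjugacy} and the asymptotics of $\Lambda_{m,d}$ already established in Lemma~\ref{lemma:lambda_md}.
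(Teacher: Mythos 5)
Your proof is correct and follows essentially the same route as the paper's: reduce to $Z(A,d)/Z(A,m)\overset{p}{\to}0$, split over directions, invoke Lemma~\ref{lemma:lambda_md} for the $\Lambda_{m,d}^{-1}$ factor, and bound the children's $\Phi$-ratios using that $\Phi(B,d)$ and $\Phi(B,m)$ are mixtures of the same $Z(B,h)$ with weights controlled by Condition~3. The only cosmetic differences are your use of the mediant inequality in place of the paper's $\epsilon/p$ union-bound argument for combining the $p$ directions, and your explicit flagging of the boundedness of $\lambda_j(A,d)/\lambda_j(A,m)$, which the paper attributes to the conditions of Theorem~\ref{thm:large_support}.
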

\begin{proof}
For  $g = s$, note that $\rho_{s,d}(A|\vect{x}_1,\vect{x}_2) =
0 $, since $\rho_{s,d}(A)=0$ by design.
For  $g \in \{m,d\}$,
$$
\rho_{g,d}(A|\vect{x}_1,\vect{x}_2) = \dfrac{\rho_{g,d}(A)
Z(A,d,\vect{x}_1,\vect{x}_2) }{ \sum_{h \in \mathcal{G}} \rho_{g,h}(A)
Z(A,h,\vect{x}_1,\vect{x}_2)} \leq \dfrac{\rho_{g,d}(A)
Z(A,d,\vect{x}_1,\vect{x}_2) }{\rho_{g,m}(A)
Z(A,m,\vect{x}_1,\vect{x}_2)}
$$
Since $\rho_{g,d}(A)/\rho_{g,m}(A)$
is bounded for Condition (2), it is sufficient to  
that 
$$
\dfrac{ Z(A,d,\vect{x}_1,\vect{x}_2) }{
Z(A,m,\vect{x}_1,\vect{x}_2)} = \dfrac{ \sum_j Z_j(A,d,\vect{x}_1,\vect{x}_2) }{
\sum_j Z_j(A,m,\vect{x}_1,\vect{x}_2)  }  \overset{p}{\to} 0.
$$
For any $j=1, \ldots, p$
$$
\dfrac{  Z_j(A,d,\vect{x}_1,\vect{x}_2) }{
 Z_j(A,m,\vect{x}_1,\vect{x}_2)  } =
\Lambda_{m,d}(A,j,\vect{x}_1, \vect{x}_2)^{-1} \dfrac{ \lambda_j(A,d)\prod_i
\Phi(A^j_i,d,\vect{x}_1,\vect{x}_2)}{\lambda_j(A,m)\prod_i\Phi(A^j_i,m,\vect{x}
_1,\vect{x}_2) },
$$
where
$\Lambda_{m,d}(A,j,\vect{x}_1, \vect{x}_2)^{-1}
\overset{p}{\to} 0 $ by Lemma \ref{lemma:lambda_md},  $\lambda_j(A,d) /
\lambda_j(A,m)$ is bounded by conditions of Theorem 1, and , for  
$g^* = \text{argmax}_{g \in \mathcal{G}}
Z(A^j_i,g,\vect{x}_1,\vect{x}_2)$,  
\begin{align*}
\dfrac{\Phi(A^j_i,d,\vect{x}_1,\vect{x}_2) }{
\Phi(A^j_i,m,\vect{x}_1,\vect{x}_2) } & = 
\dfrac{ \sum_{g \in \mathcal{G}} \rho_{d,g}(A_i^j)
Z(A^j_i,g,\vect{x}_1,\vect{x}_2) }{
\sum_{h \in \mathcal{G}} \rho_{m,h}(A_i^j) Z(A^j_i,h,\vect{x}_1,\vect{x}_2) }\\
& \leq \dfrac{ 
Z(A^j_i,g^*,\vect{x}_1,\vect{x}_2) }{
\sum_{h \in \mathcal{G}} \rho_{m,h}(A_i^j)Z(A^j_i,h,\vect{x}_1,\vect{x}_2) }
\leq \dfrac{1}{\rho_{m,g^*}(A_i^j)}
\end{align*}
is bounded since $\rho_{g,h}(A_i^j) \in (0,1)$ for all $g \in\{
d,m\}, h\in\mathcal{G}$, $i=l,r$ and $j=1, \ldots, p$.
For any $\epsilon>0$, define
$$
E_j(\epsilon) =  \bigg{\{} \omega :
\dfrac{Z_j(A,d,\vect{x}_1,\vect{x}_2)}{Z_j(A,m,\vect{x}_1,\vect{x}_2)}  <
\dfrac{\epsilon}{p} \bigg{\}},
$$
then on $\cap_j E_j(\epsilon)$, 
$$
\dfrac{ \sum_j Z_j(A,d,\vect{x}_1,\vect{x}_2) }{
\sum_j Z_j(A,m,\vect{x}_1,\vect{x}_2)  }  < \epsilon.
$$
Since $Z_j(A,d,\vect{x}_1,\vect{x}_2)/Z_j(A,m,\vect{x}_1,\vect{x}_2)
\overset{p}{\to} 0 $, then $\forall \delta >0, \exists N_j(\epsilon, \delta)$
such that
$\Pr(E_j^c(\epsilon))< \delta/ p$, for
any $n_1,n_2>N_j(\epsilon, \delta)$. Then, for $N(\epsilon, \delta) =
\text{max}_j N_j (\epsilon, \delta)$ 
$$ \Pr(\cap_j E_{j}(\epsilon) )  = 1 - \Pr ( \cup_j E_j^c(\epsilon) ) \geq 1 -
\sum_j
\Pr(E_j^c(\epsilon))  \geq 1 - p \hat{\epsilon} = 1 - \delta.
$$
and this implies that $\forall \epsilon > 0$ and $\forall \delta >0 $, $\exists$
$N(\epsilon,\delta)$ such that, for
any $n_1,n_2>N(\epsilon,\delta)$,
$$
\Pr \bigg{(}  \dfrac{Z(A,d,\vect{x}_1,\vect{x}_2)}{Z(A,m,\vect{x}_1,\vect{x}_2)}
   <
\epsilon  \bigg{)}
\geq 1 - \delta.
$$
\end{proof}

\begin{mytheorem}
\emph{(Consistency under the alternative)}
\label{th:altcase}

\end{mytheorem}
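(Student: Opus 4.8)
The plan is to localize everything to the single node $A^*$ furnished by the hypotheses. Write $A^*$ for that node and fix a direction $j_0$ and a child index $i_0\in\{l,r\}$ with $P_1(A^{j_0}_{i_0}\mid A^*)\neq P_2(A^{j_0}_{i_0}\mid A^*)$ (the first hypothesis) and $\beta P_1(A^{j_0}_{i_0}\mid A^*)+(1-\beta)P_2(A^{j_0}_{i_0}\mid A^*)\neq Q_0(A^{j_0}_{i_0}\mid A^*)$ (the second); since $A^*\in\mathcal A^{(\infty)}$ it sits at some finite level. Since $\Pr(H_0\mid\vect{x}_1,\vect{x}_2)=\pi(Q_1=Q_2\mid\vect{x}_1,\vect{x}_2)$, and almost surely $\{Q_1=Q_2\}\subseteq\{Q_1(\cdot\mid A^*)=Q_2(\cdot\mid A^*)\}$ (equal measures have equal renormalizations on $A^*$, and $Q_t(A^*)>0$ a.s.\ under the stated conditions), it is enough to prove that $\Pr\big(Q_1(\cdot\mid A^*)=Q_2(\cdot\mid A^*)\mid\vect{x}_1,\vect{x}_2\big)\overset{p}{\to}0$. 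This spares us from following how the random partition sequence descends to $A^*$.

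For the localized event I would use a Bayes-factor bound. Choose $k$ large enough that $A^*$, its ancestors and its children all lie in $\mathcal A^{(k)}$, so the $\rho_{d,\cdot}$ conditions apply there. By the self-similarity of the DMMT, conditionally on the partition reaching $A^*$ and on the parent state, $(Q_1(\cdot\mid A^*),Q_2(\cdot\mid A^*))$ is again a DMMT on $A^*$, independent of the mass of $A^*$ and of the process outside $A^*$; the corresponding likelihood factors are common to numerator and denominator and cancel, leaving (up to a fixed prior-odds constant) the ratio of the marginal likelihood of the data in $A^*$ under the constraint $Q_1(\cdot\mid A^*)=Q_2(\cdot\mid A^*)$ to the unrestricted marginal likelihood of those data. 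Under the constraint a single law is fit to the pooled $A^*$-data, assembled from a merge sub-process and from stop states that impose $Q_0(\cdot\mid A^*)$ on sub-boxes: the first hypothesis kills the merge explanation, quantitatively through the ratio $\Lambda_{m,d}(A^*,j_0,\cdot)$ of Lemma~\ref{lemma:lambda_md}, whose second case yields $\log\Lambda_{m,d}(A^*,j_0,\cdot)/n\overset{p}{\to}\eta''<0$ exactly when $P_1(A^{j_0}_{i_0}\mid A^*)\neq P_2(A^{j_0}_{i_0}\mid A^*)$; the second hypothesis kills the stop explanation by pushing the empirical conditional split of $A^*$ in direction $j_0$ away from $Q_0$. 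By contrast the unrestricted process, by the construction in the proof of Theorem~\ref{thm:large_support} applied on $A^*$ together with the $\rho_{d,\cdot}$ conditions, gives positive prior probability to $(Q_1(\cdot\mid A^*),Q_2(\cdot\mid A^*))$ lying in any $L_1$-neighbourhood of $(P_1(\cdot\mid A^*),P_2(\cdot\mid A^*))$. Because $P_1(\cdot\mid A^*)\neq P_2(\cdot\mid A^*)$, their pooled mixture is a strict mixture, so a Schwartz-type lower bound on the unrestricted marginal likelihood, combined with the trivial upper bound on the restricted one, makes the ratio $\exp(-c\,n(A^*))$ up to subexponential factors, where $c>0$ is a positive combination of the Kullback--Leibler divergences of $P_1(\cdot\mid A^*)$ and $P_2(\cdot\mid A^*)$ from that mixture. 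Since $n(A^*)/n\to P(A^*)>0$, this tends to $0$ in probability, which is the claim.

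The main obstacle is turning the $L_1$-support statement of Theorem~\ref{thm:large_support} into the Kullback--Leibler prior mass needed for the Schwartz-type lower bound; this is routine for multi-resolution priors whose pseudo-count ratios are bounded away from $0$ and $1$, but it is the only genuinely analytic step. The rest is bookkeeping: decoupling the $A^*$-factor of the likelihood from the remainder, handling the configurations in which the partition stops above $A^*$ or refines $A^*$ in a different order (where the same single-law-versus-two-laws dichotomy applies to the boxes tiling $A^*$), and observing that $n(A^*)\to\infty$ because $P_1(A^*),P_2(A^*)>0$.
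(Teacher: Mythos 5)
Your core detection mechanism --- the second case of Lemma~\ref{lemma:lambda_md}, which makes the merge explanation at a node where $P_1(\cdot\mid A)\neq P_2(\cdot\mid A)$ exponentially unlikely --- is exactly the engine the paper uses. But two steps of your architecture are genuine gaps rather than bookkeeping. First, the Schwartz-type lower bound on the unrestricted marginal likelihood is not available under the theorem's hypotheses: $P_1$ and $P_2$ are only assumed absolutely continuous with respect to $Q_0$, so their densities need not be bounded and no Kullback--Leibler neighbourhood of $(P_1,P_2)$ need carry prior mass; the $L_1$ support of Theorem~\ref{thm:large_support} does not upgrade to KL support without extra conditions on the densities and on how the pseudo-counts scale with depth. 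The step is also unnecessary: the unrestricted marginal likelihood is bounded below, up to the constant $\rho_{d,s}(A_i^{j})$, by the branch that divides at the node and immediately stops on both children, so the children's contributions to the divide and merge branches differ only by a bounded factor ($\Phi(A_i^j,m,\cdot)/\Phi(A_i^j,d,\cdot)\le \max_g \rho_{m,g}(A_i^j)/\rho_{d,g}(A_i^j)$). This constant-factor comparison, not a KL argument, is what the paper uses, and it is what makes the result hold at the stated generality. Relatedly, your ``trivial upper bound on the restricted marginal likelihood'' is not trivial --- a supremum of a continuous likelihood over single laws is infinite --- so the restricted side also has to be handled through the recursive $\Phi$ and $\Lambda_{m,d}$ computation rather than crudely bounded.

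Second, the configurations in which the random partition never creates $A^*$ as a node --- because it stops on an ancestor or refines the space along other directions --- are the hard part of the proof, not a footnote. This is precisely where the second hypothesis $\beta P_1(A_i^j\mid A)+(1-\beta)P_2(A_i^j\mid A)\neq Q_0(A_i^j\mid A)$ and the topology-consistency result (Theorem~4 of Wong and Ma, 2010) enter the paper's argument: one must show that every tree topology retaining non-negligible posterior mass under the merge/stop-only (null) sub-model contains a non-stopped node $B$ at which $P_1(B_i^j\mid B)\neq P_2(B_i^j\mid B)$, and only then apply the $\Lambda_{m,d}$ comparison at that $B$ (which need not be $A^*$). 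Your event $\{Q_1(\cdot\mid A^*)=Q_2(\cdot\mid A^*)\}$ has no clean recursive characterization when $A^*$ is not a realized node, so the localization that was supposed to spare you from tracking the random partition in fact forces you back to it. To repair the proof, replace the Schwartz step by the constant-factor branch comparison and replace the localization by an argument over the posterior-dominant topologies, as in the paper.
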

\begin{proof}
First we find the tree topologies with 
highest marginal posterior probability under the null component of $\vect{Q}$.
To this
end, assume $ (\vect{x}_1 \cup \vect{x}_2) \sim \tilde{Q}$ and $\tilde{Q} \sim
\text{OPT}(\tilde{\vect{\rho}}, \tilde{\vect{\lambda}},
\tilde{\vect{\alpha}}, Q_0)$, where   $\tilde{\rho}_0(A) =
\rho_{m,s}(A)/(\rho_{m,m}(A)+\rho_{m,s}(A))$,  
$\tilde{\lambda}_j(A)  = \lambda_j(A,m)$, and
  $\tilde{\vect{\alpha}}^j(A)  = \vect{\alpha}^{j}(A,m)$
for all $A \in \mathcal{A}^{(\infty)}$ and $j=1, \ldots, p$.
Define $\pi^{(k)}$ a tree topology where all regions are stopped at most at
level $k$, and call $\Pi^{(k)}$ the set of the $\pi^{(k)}$'s  with highest
marginal posterior probability as $n
\rightarrow
\infty$.
Since there exists $A\in \mathcal{A}^{(\infty)}$ where $P_1(A_i^j|A) \neq
P_2(A_i^j|A)$ and $\beta P_1(A_i^j|A) + (1-\beta) P_2(A_i^j|A) \neq
Q_0(A_i^j|A)$, by Theorem~4 in \cite{wong_ma_2010}   for some large enough
$k$, there is at least one non-stopped
region $B$
arising in $\pi^{(k)}$ 
 such that $P_1(B_i^j|B) \neq
P_2(B_i^j|B)$, for
each $\pi^{(k)} \in \Pi^{(k)}$.
 Then, we show that  the likelihood of ``dividing'' on $B$ dominates the 
likelihood of ``merging''.
In fact,
\begin{align*}
\dfrac{Z_j(B, m, \vect{x}_1 , \vect{x}_2)}{Z_j(B, d,
\vect{x}_1, \vect{x}_2)} & = 
\dfrac{\lambda_j(B,m)}{\lambda_{j}(B,d)}\Lambda_{m,d}(B,j,\vect{x}_1,
\vect{x}_2) \prod_i 
\dfrac{\Phi(B_i^{j}, m,
\vect{x}_1 ,\vect{x}_2)}{ \Phi(B_i^j, d, \vect{x}_1, \vect{x}_2) },
\end{align*}
where $\log \Lambda_{m,d}(B,j,\vect{x}_1, \vect{x}_2)\overset{p}{\to} - \infty$ 
 by
Lemma \ref{lemma:lambda_md}, $\lambda_{j}(B,d)/\lambda_{j}(B,m) $ is bounded by
Theorem~\ref{thm:large_support}, and 
$$
\dfrac{\Phi(B_i^{j}, m,
\vect{x}_1 ,\vect{x}_2)}{ \Phi(B_i^j, d, \vect{x}_1, \vect{x}_2) } \leq
\max_{g\in \mathcal{G}} \dfrac{\rho_{m,g}(B_i^{j})}{\rho_{d,g}(B_i^{j})}
$$
is bounded since $\rho_{d,g}(B_i^j) \in (0,1)$ for all
$g \in
\mathcal{G}$. This implies that 
$$
\Pr( H_0| \vect{x}_1, \vect{x}_2) \overset{p}{\to} 0 \text{ under }
P_1^{( \infty)} \times P_2^{( \infty)}. 
$$
\end{proof}

\begin{mytheorem}
\emph{(Consistency under the null)}
\label{th:null_case}

\end{mytheorem}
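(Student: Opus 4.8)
The plan is to bound the complementary posterior probability $1-\Pr(H_0\mid\vect{x}_1,\vect{x}_2)=\Pr(Q_1\neq Q_2\mid\vect{x}_1,\vect{x}_2)$ and show it tends to $0$ in probability. Recall from Section~\ref{subsec:2sampleTesting} that $Q_1=Q_2$ exactly when the DMMT process never enters the divide state, and that, by Condition~(2), the divide state can be entered only at the finitely many nodes of $\mathcal{A}^{(k)}$; hence $\{Q_1\neq Q_2\}\subseteq\bigcup_{A\in\mathcal{A}^{(k)}}\{A\text{ arises and }S(A)=d\}$. Combining a union bound with the posterior transition probabilities of Theorem~\ref{thm:conjugacy}, and using that the probability of reaching $A$ in a given parent state is at most $1$ and that $\rho_{s,d}(\cdot\mid\cdot)=0$, I would obtain
\[
1-\Pr(H_0\mid\vect{x}_1,\vect{x}_2)\;\le\;\sum_{A\in\mathcal{A}^{(k)}}\;\max_{h\in\{d,m\}}\rho_{h,d}(A\mid\vect{x}_1,\vect{x}_2).
\]
Since $\mathcal{A}^{(k)}$ is finite, it then suffices to prove $\rho_{h,d}(A\mid\vect{x}_1,\vect{x}_2)\overset{p}{\to}0$ for each $A\in\mathcal{A}^{(k)}$ and each $h\in\{d,m\}$ under $P_0^{(\infty)}\times P_0^{(\infty)}$.

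This last claim is the null-hypothesis analogue of Lemma~\ref{lemma:rho_gd}, and I would establish it by retracing that proof. Under the null $P_1=P_2=P_0$, so $p_1^j=p_2^j$ for every $j$, and Lemma~\ref{lemma:lambda_md}(1) gives $\Lambda_{m,d}(A,j,\vect{x}_1,\vect{x}_2)^{-1}\overset{p}{\to}0$ at a polynomial rate. Using $\Phi(A,h,\vect{x}_1,\vect{x}_2)\ge\rho_{h,m}(A)\,Z(A,m,\vect{x}_1,\vect{x}_2)$ together with Condition~(1), which makes $\rho_{h,d}(A)/\rho_{h,m}(A)$ bounded on the finite set $\mathcal{A}^{(k)}$, I get
\[
\rho_{h,d}(A\mid\vect{x}_1,\vect{x}_2)\;\le\;\frac{\rho_{h,d}(A)}{\rho_{h,m}(A)}\cdot\frac{Z(A,d,\vect{x}_1,\vect{x}_2)}{Z(A,m,\vect{x}_1,\vect{x}_2)},
\]
and since a ratio of sums of positive terms is dominated by the largest termwise ratio, Lemma~\ref{lemma:recursive_phi} reduces matters to showing, for each $j$,
\[
\frac{Z_j(A,d,\vect{x}_1,\vect{x}_2)}{Z_j(A,m,\vect{x}_1,\vect{x}_2)}=\frac{\lambda_j(A,d)}{\lambda_j(A,m)}\,\Lambda_{m,d}(A,j,\vect{x}_1,\vect{x}_2)^{-1}\!\!\prod_{i\in\{l,r\}}\frac{\Phi(A_i^j,d,\vect{x}_1,\vect{x}_2)}{\Phi(A_i^j,m,\vect{x}_1,\vect{x}_2)}\overset{p}{\to}0 .
\]
Here the direction ratio is bounded as in Theorem~\ref{thm:large_support}, so the remaining task is to control the $\Phi$-ratio and show it is of smaller order than the polynomially growing $\Lambda_{m,d}(A,j,\cdot)$.

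The main obstacle, as just indicated, is the $\Phi$-ratio $\prod_i\Phi(A_i^j,d,\cdot)/\Phi(A_i^j,m,\cdot)$ for $A$ at level $k$: its children lie at level $k+1$, where Condition~(1) no longer applies, so the boundedness argument used in Lemma~\ref{lemma:rho_gd} for interior nodes does not carry over verbatim. I would resolve this by exploiting Condition~(2): at any node $B$ of level $>k$ the divide state is unavailable, so by Lemma~\ref{lemma:recursive_phi} both $\Phi(B,d,\vect{x}_1,\vect{x}_2)$ and $\Phi(B,m,\vect{x}_1,\vect{x}_2)$ are convex combinations, with the outgoing transition probabilities of $B$ as weights, of the \emph{same} pair $Z(B,m,\vect{x}_1,\vect{x}_2)$ and $Z(B,s,\vect{x}_1,\vect{x}_2)$; hence, as long as the transition probabilities out of nodes at levels $>k$ stay bounded away from $0$ and $1$ (which holds for the prior of Section~\ref{sec:prior} and is in any case forced, for the stop transition, by well-definedness of the DMMT), the ratio $\Phi(B,d,\cdot)/\Phi(B,m,\cdot)$ is uniformly bounded --- no Bayes-factor estimate between $Z(B,m,\cdot)$ and $Z(B,s,\cdot)$ is then needed. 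Feeding this bound back into the previous display, together with $\Lambda_{m,d}(A,j,\cdot)^{-1}\overset{p}{\to}0$, gives $Z_j(A,d,\cdot)/Z_j(A,m,\cdot)\overset{p}{\to}0$ for every $j$, hence $\rho_{h,d}(A\mid\vect{x}_1,\vect{x}_2)\overset{p}{\to}0$ for every $A\in\mathcal{A}^{(k)}$ and $h\in\{d,m\}$, and the finite-sum bound of the first paragraph then yields $\Pr(H_0\mid\vect{x}_1,\vect{x}_2)\overset{p}{\to}1$.
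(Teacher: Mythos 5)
Your proof is correct and reaches the same crux as the paper---showing $\rho_{g,d}(A\mid\vect{x}_1,\vect{x}_2)\overset{p}{\to}0$ for every $A\in\mathcal{A}^{(k)}$ via Lemmas~\ref{lemma:lambda_md} and~\ref{lemma:rho_gd}---but assembles the global statement differently. The paper works with the functional $\Psi(A,g,\vect{x}_1,\vect{x}_2)$: it first observes that $\Psi\equiv 1$ on all levels $l>k$ because $\rho_{g,d}=0$ there, and then propagates $\Psi(A,g,\cdot)\overset{p}{\to}1$ up through the finitely many levels of $\mathcal{A}^{(k)}$ using the recursion \eqref{eq:psi_recursive}, Slutsky's theorem, $\sum_j\lambda_j(A,m\mid\cdot)=1$, and $\rho_{g,s}(A\mid\cdot)+\rho_{g,m}(A\mid\cdot)\overset{p}{\to}1$ from Lemma~\ref{lemma:rho_gd}. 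You instead bound $1-\Pr(H_0\mid\cdot)$ by a union bound over the finitely many nodes at which a divide is possible under the posterior; this avoids the backward induction and the Slutsky step, at the cost of relying on the a.s.\ identification of $\{Q_1\neq Q_2\}$ with ``divide somewhere,'' which the paper has already justified in Section~\ref{subsec:2sampleTesting}, so the two routes are equally legitimate and of comparable length. A point in your favor: you explicitly notice that for a node $A$ at level exactly $k$ the children sit at level $k+1$, where Condition~(2) forces $\rho_{g,d}=0$ and hence condition (3) of Lemma~\ref{lemma:rho_gd} fails as literally stated; your observation that $\Phi(B,d,\cdot)$ and $\Phi(B,m,\cdot)$ are then convex combinations of the \emph{same} pair $Z(B,m,\cdot)$, $Z(B,s,\cdot)$, so that their ratio is bounded by $\max_{g\in\{m,s\}}\rho_{d,g}(B)/\rho_{m,g}(B)$, is exactly the right repair (it does require $\rho_{m,m}(B),\rho_{m,s}(B)>0$ at level $k+1$, an implicit assumption the paper also needs). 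The paper invokes Lemma~\ref{lemma:rho_gd} without flagging this boundary case, so your treatment is, if anything, the more careful of the two.
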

\begin{proof}
For any set $A \in \mathcal{A}^{l}$ for $l>k$,  we have that
$\Psi(A,g,\vect{x}_1,\vect{x}_2) = 1,$
since $\rho_{g,d}(A) = 0$ by design. 
For any set $ A \in \mathcal{A}^{(k)}$, if
$\Psi(A^j_i,g,\vect{x}_1,\vect{x}_2)
\overset{p}{\to} 1$ for any $j=1, \ldots, p$, and $i \in
\{l,r\}$,
then, by Slutsky's theorem,
$$
\Psi(A,g,\vect{x}_1,\vect{x}_2)  \overset{p}{\to}
\rho_{g,s}(A|\vect{x}_1,\vect{x}_2) +
\rho_{g,m}(A|\vect{x}_1,\vect{x}_2)
\sum_j \lambda_j(A,m|\vect{x}_1,\vect{x}_2), 
$$
for any $g \in \{d,m\}$.
Additionally, $\sum_j \lambda_j(A,m|\vect{x}_1,\vect{x}_2) = 1$, and 
 $ \big{(} \rho_{g,s}(A|\vect{x}_1,\vect{x}_2) +
\rho_{g,m}(A|\vect{x}_1,\vect{x}_2) 
 \big{)} \overset{p}{\to} 1 $ by Condition (1), Condition (2) and Lemma
\ref{lemma:rho_gd}.
Then,
$$
\begin{array}{ccc}
\Psi(A,g,\vect{x}_1,\vect{x}_2) \overset{p}{\to} 1 & \text{and} &
\Pr( H_0| \vect{x}_1, \vect{x}_2)
\overset{p}{\to} 1. 
\end{array}
$$
\end{proof}

\subsection{Example 1 in numerical examples}

 In Figure \ref{fig:example1c}
 we provide an analysis of sensitivity to the prior choice for the local shift
difference example. We plot 
$\Pr(H_0|\vect{x}_1,\vect{x}_2)$
  for different choices of
the hyperparameters of the transition probability matrix under the null (red)
and under the alternative (blue), showing that the
results are robust to the prior specification. Similar results are obtained for
the other scenarios and examples and so are not reported. 

\begin{figure}
 \centering
\makebox{\includegraphics[width = 1\textwidth]{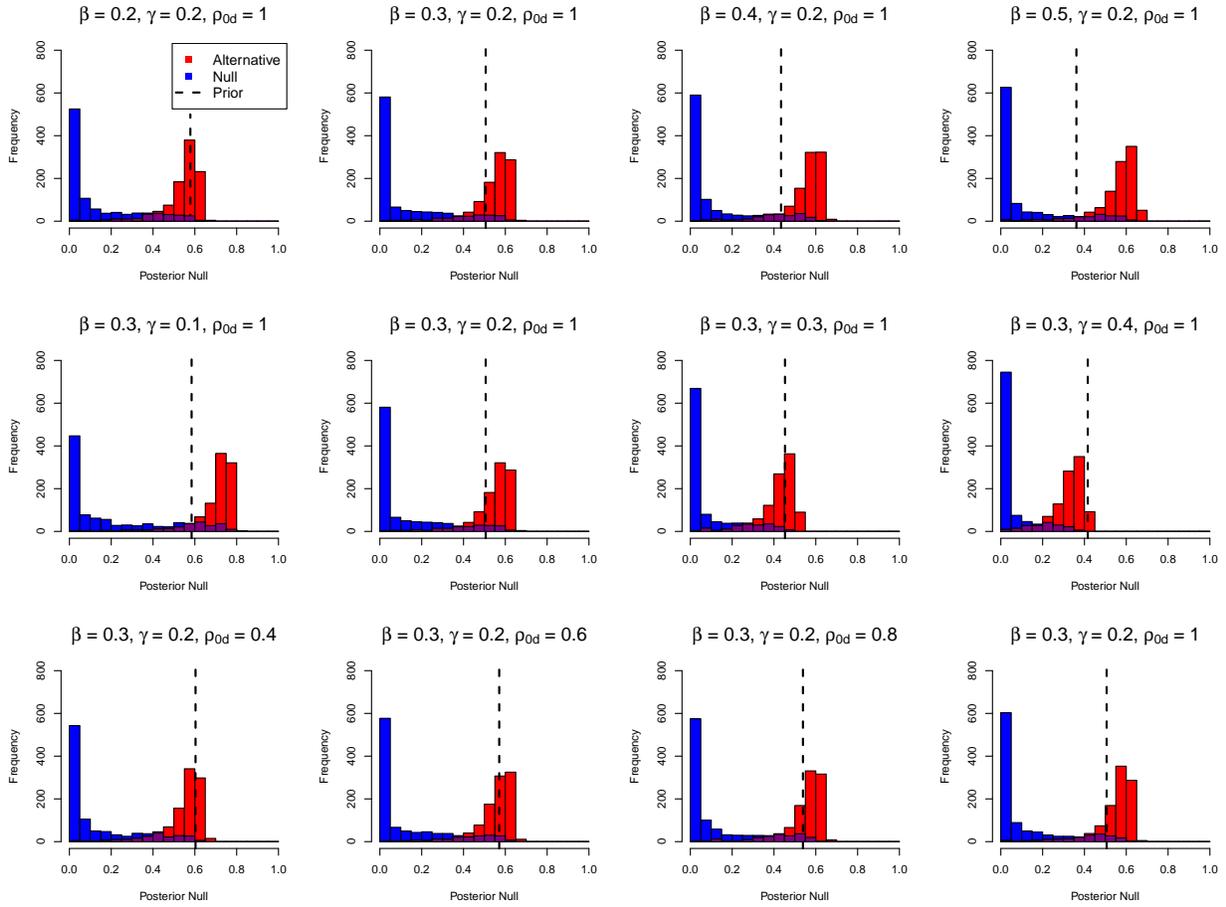}}
\caption{Bayes Factors for the local shift difference scenario
in $\mathbb{R}^1$ for different choices of the hyperparameters of the
transition probability matrix. In red the results under the
null and in blue the results under the alternative. }
\label{fig:example1c}
\end{figure}

\bibliographystyle{Chicago}
\bibliography{citations_database}{}

\end{document}